\begin{document}

\newtheorem{thm}{Theorem}
\newtheorem{lma}{Lemma}
\newtheorem{defi}{Definition}
\newtheorem{proper}{Property}

\title{On New Approaches of Maximum Weighted Target Coverage and Sensor Connectivity: Hardness and Approximation}

\author{Ngoc-Tu~Nguyen,~
    Bing-Hong~Liu,~and
    Shih-Yuan~Wang
\IEEEcompsocitemizethanks{
\IEEEcompsocthanksitem
Corresponding author: B.H.~Liu.
\IEEEcompsocthanksitem
N.T.~Nguyen is with the Department of Computer Science and Engineering,
University of Minnesota, Twin Cities, Minneapolis, MN 55455 USA (e-mail:,
nguy3503@umn.edu).
\IEEEcompsocthanksitem B.H.~Liu and S.Y.~Wang are
with the Department of Electronic Engineering, National
Kaohsiung University of Science and Technology, 415, Chien Kung Rd., Kaohsiung
80778, Taiwan (e-mail: bhliu@nkust.edu.tw, ads7452@gmail.com).
}
}

\IEEEcompsoctitleabstractindextext{%
\begin{abstract}
In mobile wireless sensor networks (MWSNs), each sensor has the
ability not only to sense and transmit data but also to move to some
specific location. Because the movement of sensors consumes much more
power than that in sensing and communication, the
problem of scheduling mobile sensors to cover all targets and maintain network connectivity such that the total movement distance of
mobile sensors is minimized has received a great deal of attention.
However, in reality, due to a limited budget or numerous
targets, mobile sensors may be not enough to cover all targets or
form a connected network. Therefore, targets must be weighted by
their importance. The more important a target, the higher the weight
of the target. A more general problem for target coverage
and network connectivity, termed the Maximum Weighted Target
Coverage and Sensor Connectivity with Limited Mobile Sensors
(MWTCSCLMS) problem, is studied. In this paper, an approximation
algorithm, termed the weighted-maximum-coverage-based
algorithm (WMCBA), is proposed for the subproblem of the
MWTCSCLMS problem. Based on the WMCBA, the
Steiner-tree-based algorithm (STBA) is proposed for the MWTCSCLMS
problem. Simulation results demonstrate that the STBA provides
better performance than the other methods.
\end{abstract}
\begin{IEEEkeywords}
Mobile wireless sensor network, target coverage, network connectivity, NP-complete, approximation algorithm.
\end{IEEEkeywords}}

\maketitle

\IEEEdisplaynotcompsoctitleabstractindextext

\IEEEpeerreviewmaketitle

\section{Introduction}
Because of the rapid expansion of technology, well-developed
sensors accompany various sensing functions, such as
detecting surrounding temperatures, illuminations, and voices,
calculating and processing received information, and the communication
ability to transmit and receive data, which can be composed to form
a wireless sensor network
\cite{4359020,DAC:DAC2355,DAC:DAC2412,Wang2015330}. Recently, wireless sensor networks have been widely applied to
surveillance, security, and tracking
applications \cite{5290389,5705791,Somov2013217,6715654}. In these
applications, with the communication ability, sensors can
communicate with the data sink or other sensors to transmit sensed
data \cite{6522427}.

Due to the rapid development of sensor technology, in addition to
sensing ability and data transmission, sensors, also known as
mobile sensors, can have the ability to move to some locations. A
wireless sensor network that is composed of mobile sensors is also
known as a mobile wireless sensor network (MWSN). Because the
movement of sensors requires significantly higher power
consumption than that in sensing and communication \cite{5161266},
minimizing the total movement distance of mobile sensors becomes a more
important issue in MWSNs \cite{6846302}. In \cite{6846302}, the
Mobile Sensor Deployment (MSD) problem, which is the problem of
scheduling mobile sensors to cover all targets and maintain network
connectivity such that the total movement distance of mobile sensors
is minimized, is studied. For the MSD problem, algorithms based on
the clique partition and the Voronoi partition are proposed to find coverage
sensors to cover targets. In addition, the Euclidean minimum
spanning tree is used to span coverage sensors and the data sink,
and determine some points in the sensing field such that the network
composed of the sensors deployed on the points can form a connected
network. Finally, the Hungarian method is applied to schedule
adaptive mobile sensors to the generated points such that the total
movement distance is minimized.

Most research studies on target coverage when the number
of mobile sensors is assumed to be high enough such that a
connected network can always be formed to cover all targets.
However, in reality, due to a limited budget or numerous
targets, there may not be enough mobile sensors to cover all targets or
form a connected network. Therefore, targets must be weighted by
their importance. The more important a target, the higher the weight
of the target. This motivated us to study a more general and
practical problem for target coverage and network connectivity,
termed the Maximum Weighted Target Coverage and Sensor Connectivity
with Limited Mobile Sensors (MWTCSCLMS) problem. The MWTCSCLMS
problem is the problem of scheduling limited mobile sensors to
appropriate locations to cover targets and form a connected network
such that the total weight of the covered targets is maximized. The
highlights of the contribution in this paper are listed as follows:

\begin{itemize}
  \item A general problem for target coverage and network
  connectivity in MWSNs, termed the MWTCSCLMS problem, and its
  difficulty are introduced and discussed in this paper. In
  addition, when the transmission range is assumed to be large enough for any communication,
  a subproblem of the MWTCSCLMS problem, termed the Reduced MWTCSCLMS (RMWTCSCLMS)
  problem, and its difficulty are also introduced and discussed.
  \item An approximation algorithm, termed the weighted-maximum-coverage-based
algorithm (WMCBA), with an approximation ratio of $1-1/e$ is
proposed for the RMWTCSCLMS problem, where $e$ denotes the base of
the natural logarithm. In the WMCBA, all possible sets of targets
that can be covered by a mobile sensor located at any point in the
sensing field are considered. Then, a greedy method is used to
select suitable sets of targets to be covered by mobile sensors.
  \item Based on the WMCBA, the Steiner-tree-based
algorithm (STBA) is proposed for the MWTCSCLMS problem. In the STBA,
the Fermat points \cite{6023087} and a node-weighted Steiner tree
algorithm \cite{Klein1995104} are used to find a tree such that the
number of mobile sensors deployed by the tree structure to form a
connected network is minimized.
  \item Theoretical analyses of the WMCBA and the STBA are
  provided.
  \item Simulation results demonstrate that even if the number of
mobile sensors is high enough such that a connected network can
always be formed to cover all targets, the STBA requires a
significantly lower total movement distance than the best solution
proposed for the MSD problem \cite{6846302}. In addition, when the
mobile sensors may be not enough to cover all targets, the STBA
works better than the greedy method proposed in the simulation section
of this paper.
\end{itemize}

The remaining sections of this paper are organized as follows.
Related work is introduced in Section \ref{section:relaed work}.
In Section \ref{section:MWTCSCLMA and Difficulty}, illustrates the
MWTCSCLMS problem and the RMWTCSCLMS problem are illustrated. In addition, the
analyses of their difficulties are also provided. In Section
\ref{section:approximation_method}, the WMCBA is proposed for the
RMWTCSCLMS problem. In addition, the STBA is proposed in Section
\ref{section:method}. The performance of the STBA is evaluated in
Section \ref{section:Simulation}. The paper is concluded in Section
\ref{section:Conclusion}.

\section{Related Work}\label{section:relaed work}
The coverage problem is an important issue in a wireless sensor
network, in which each sensor has its own mission to monitor a
region through the sensor's sensing range. Different applications have various coverage
requirements
\cite{6314220,4703226,5719526,Mostafaei201551}. In \cite{6314220},
the area coverage problem is discussed in the way to deploy sensors
to form a wireless sensor network such that a particular area will
be fully covered and ensure the network connectivity. In
\cite{4703226}, the area coverage problem is studied to deploy
sensors to form a connected wireless sensor network even if
unpredicted obstacles exist in the sensing field. In \cite{5719526},
the problem of constructing a minimum size connected wireless sensor
network such that the critical grids in a sensing field are all
covered by sensors is addressed. In \cite{Mostafaei201551}, the
barrier coverage problem, the problem of deploying sensors to
construct a barrier such that invaders will be detected by at
least one sensor, is studied.

The target coverage problem is one of the coverage problems. In the
target coverage problem, targets are the points of interest (POI) in
the sensing field that are required to be covered and monitored by
sensors. In addition, the wireless sensor network composed of
sensors has to be connected such that the monitoring information
generated by the sensors can be reported to the data sink. When the sensors
are activated to monitor targets or transmit data, the sensors will
continuously consume energy. Therefore, the sensors will not be
able to monitor targets or transmit data if their energy is
exhausted. Because the energy of the sensors is often limited, many
studies have investigated extending the network lifetime to cover targets.
In \cite{6637016}, the problem of deploying sensors and scheduling
the sensors' activation time is studied such that all targets can be
covered and the network lifetime can be extended. In some cases, it
is hard for people to deploy sensors manually, and therefore,
random deployment \cite{1368897,Zhu2012619} can be used to construct
a wireless sensor network. Because random deployment cannot ascertain the sensors' locations before deployment, the problem of
scheduling sensors to be activated to form a wireless sensor network
and covering targets such that the network lifetime is extended
has received a great deal of attention \cite{6883345,4457911,6811184}.
In \cite{6883345}, a distributed algorithm is proposed to
alternatively activate sensors to form a minimal set cover for
covering all targets such that the network lifetime is maximized in
energy-harvesting wireless sensor networks. In \cite{4457911}, a
heuristic algorithm is proposed to schedule sensors into multiple
sets such that the sensors in each set can cover all targets and
form a connected network with the data sink. In addition, the sensor
sets are activated one-by-one such that the network lifetime can be
maximized. In \cite{6811184}, a polynomial-time constant-factor
approximation algorithm is proposed to schedule sensors to form a
connected network that can cover all targets and maximize the
network lifetime.

In MWSNs, when mobile sensors are randomly deployed in a sensing
field, mobile sensors can be used to improve the coverage quality
and the network connectivity in MWSNs. In \cite{Yang2010409}, a
survey on utilizing node mobility to extend the network lifetime is
discussed and provided. In \cite{4407689}, algorithms are proposed
to dispatch mobile sensors to designated locations such that the
area of interest can be $k$-covered. In \cite{6763013}, when mobile
sensors have different sensing ranges, algorithms based on the
multiplicatively weighted Voronoi diagram are proposed to find
coverage holes such that the coverage area can be improved. In
\cite{ElKorbi2014247}, an algorithm is proposed to relocate the minimum
number of redundant mobile sensors to maintain connectivity between
a region of interest and a center of interest in which a particular
event occurs, where mobile sensors are initially deployed in the
region of interest, and the center of interest is outside the region
of interest. In \cite{Zorbas20131039}, a distributed algorithm is
proposed to move mobile sensors to cover all targets and satisfy
the minimum allowed detection probability such that the network lifetime
is maximized.

\section{The Maximum Weighted Target Coverage and Sensor Connectivity with Limited Mobile Sensors Problem and Its
Difficulty}\label{section:MWTCSCLMA and Difficulty} The system model
used in this paper is illustrated in Section \ref{section:system
model}. Our problem, termed the Maximum Weighted Target Coverage and
Sensor Connectivity with Limited Mobile Sensors (MWTCSCLMS) problem,
is presented in Section \ref{sec:model}. Finally, the problem's difficulty is
analyzed in Section \ref{sec:hardness}.

\subsection{System Model}\label{section:system model}

In the MWSN, mobile sensors are responsible for sensing targets,
collecting sensed data, and reporting the data to a special node,
termed the data sink. The data sink can collect mobile sensors' location information and broadcast
deployment orders to mobile sensors \cite{6846302}.
For data collection, a mobile sensor $s$ can
sense and collect data from a target $t$ if $t$ is within $s$'s
sensing range, denoted by $R_s$. Hereafter, the target $t$ is said
to be covered if and only if $t$ is within at least one mobile
sensor's sensing range. Because some targets may be outside the sensing range of a mobile
sensor in the initial deployment \cite{Zhu2012619}, mobile
sensors must move to cover targets if necessary. Once targets are
covered or sensed by a mobile sensor $s$, the sensed data are generated
by $s$, and have to be reported to the data sink. In the MWSN, every
mobile sensor $s$ can transmit data to other mobile sensors within
its transmission range, denoted by $R_t$. The sensed data can then
be forwarded through sensors to the data sink by multi-hop protocols
\cite{6522427} if there exists a connected path from the node that generates
the sensed data to the data sink.
Take Fig. \ref{Fig:exam_MWSN}, for
example. In Fig. \ref{Fig:exam_MWSN}, it is clear that target $t_4$ is covered by mobile sensor $s_6$.
This is because $s_6$ is within the circle centered at $t_4$ with radius $R_s$, the distance between $s_6$ and $t_4$
is not greater than $R_s$. In addition, target $t_8$ can be covered by mobile sensor $s_{14}$ after the movement of $s_{14}$. It is also clear that the sensed data generated by $s_6$ can be forwarded to the data sink because the path from $s_6$ to the data sink is connected.

\begin{figure}
\center \subfigure{\includegraphics[width=8.5cm]{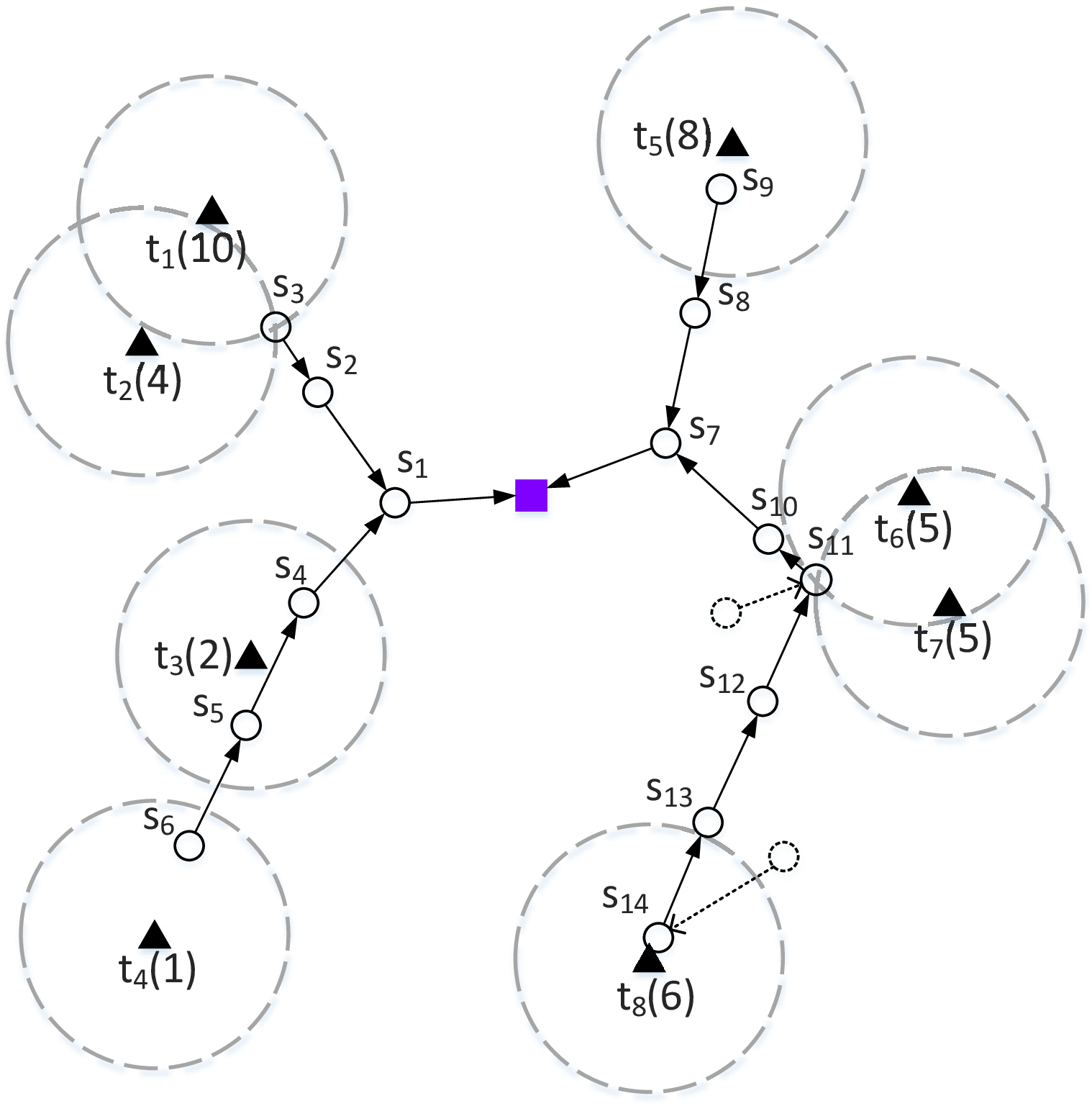}}\\
\subfigure{\includegraphics[width=6cm]{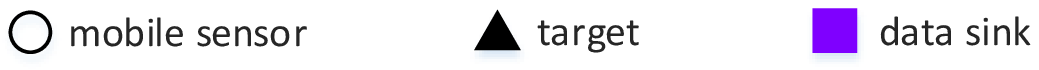}}
\caption{Example of the MWSN with $14$ mobile sensors and $8$
targets, where the number inside the parentheses indicates the corresponding
target's weight.} \label{Fig:exam_MWSN}
\end{figure}

In this paper, a set of $n$ mobile sensors $S=\left\{
{{s}_{1}},{{s}_{2}},\ldots,{{s}_{n}} \right\}$ is pre-deployed in a
sensing field. We assume that each mobile sensor in $S$ has the same
sensing range $R_s$ to sense targets. In addition, the data sink and
each mobile sensor have the same transmission range $R_t$ to
communicate with the other mobile sensors. While given a set of $m$
targets $T=\left\{ {{t}_{1}},{{t}_{2}},\ldots,{{t}_{m}} \right\}$
with known locations in the field, mobile sensors can be scheduled
to move in any direction and stop anywhere \cite{5161266} to cover targets or
connect with the data sink and the other mobile sensors. In reality, all
of the targets in the field may not be covered due to the limited mobile
sensors. Targets in the sensing field, therefore, must be weighted by
their importance; that is, the more important a target, the higher the weight of the target. Hereafter, the
weight of target $t$ is denoted by $t.\omega$.

\subsection{The MWTCSCLMS Problem} \label{sec:model}
In this paper, we study scheduling limited mobile sensors to
appropriate locations to cover targets and form a connected network
such that the total weight of the covered targets is maximized, termed
the Maximum Weighted Target Coverage and Sensor Connectivity with
Limited Mobile Sensors (MWTCSCLMS) problem. While given an MWSN with
a data sink, a set of deployed mobile sensors $S = \{s_1, s_2, \ldots,s_n\}$, and
a set of targets $T = \{t_1, t_2, \ldots,t_m\}$, the MWTCSCLMS
problem can be formally illustrated as follows:

\textbf{INSTANCE}: Given $R_s$, $R_t$, a data sink $sink$, a set of deployed mobile
sensors $S = \{s_1, s_2, \ldots,s_n\}$, and a set of targets $T = \{t_1,
t_2, \ldots,t_m\}$, where each sensor $s \in S$ has its own
position, and each target $t \in T$ has its weight $t.\omega$.

\textbf{QUESTION}: Does there exist a schedule of mobile sensors in
an MWSN for target coverage and network connectivity such that the
total weight of the covered targets is maximized?

The MWTCSCLMS problem can be viewed under two issues,
target coverage and network connectivity. For target coverage, we
can schedule mobile sensors to maximize the total weight of the covered
targets. For network connectivity, the remaining mobile sensors can be
scheduled to form a connected network such that the data generated
from sensing targets can be forwarded to the data sink. When given an
MWSN as in Fig. \ref{Fig:exam_MWSN}, it is clear that the data sink
and $14$ mobile sensors form a connected network. In addition, because all
targets can be covered by the connected
network, the total weight of the covered targets is $10$ $+$ $4$ $+$ $2$ $+$ $1$ $+$ $8$ $+$ $5$ $+$ $5$ $+$ $6$ $=$
$41$.

\subsection{Difficulty of the MWTCSCLMS Problem}\label{sec:hardness}
In this subsection, a special case of the MWTCSCLMS problem, termed
the Reduced MWTCSCLMS (RMWTCSCLMS) problem, is presented to show the
difficulty of the MWTCSCLMS problem. In the RMWTCSCLMS problem, when
$R_s$, a data sink $sink$, a set of deployed mobile sensors $S = \{s_1, s_2,
\ldots,s_n\}$, and a set of targets $T = \{t_1, t_2, \ldots,t_m\}$ are
given, and $R_t$ is set to be large enough such that any two mobile sensors
(or any one mobile sensor and the data sink) can communicate with each other, the RMWTCSCLMS problem is
scheduling mobile sensors in an MWSN for target coverage and network
connectivity such that the total weight of the covered targets is
maximized. We then show that the RMWTCSCLMS problem is NP-hard in
Lemma \ref{thm:RMWTCSCLMS_hardness}. By Lemma
\ref{thm:RMWTCSCLMS_hardness}, the difficulty of the MWTCSCLMS
problem is then concluded in Theorem \ref{thm:MWTCSCLMS_hardness}.

\begin{lma}\label{thm:RMWTCSCLMS_hardness}
The RMWTCSCLMS problem is NP-hard.
\end{lma}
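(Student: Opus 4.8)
The plan is to exploit the special structure created by taking $R_t$ large: since any two mobile sensors, and any sensor and the data sink, can then communicate, \emph{every} placement of the sensors induces a connected network together with the sink. Consequently the connectivity requirement becomes vacuous, and the RMWTCSCLMS problem collapses to a purely geometric coverage question: place $n$ sensing disks of radius $R_s$ anywhere in the plane so as to maximize the total weight of the targets lying within at least one disk. I would prove hardness for this residual problem, which immediately yields hardness for RMWTCSCLMS.

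For the reduction I would start from a known NP-complete geometric covering problem, namely the decision version of minimum disk cover: given a finite point set $P$ in the plane and an integer $k$, can all points of $P$ be covered by $k$ disks of a fixed radius $\rho$? I would map such an instance to an RMWTCSCLMS instance by taking the targets $T := P$, giving every target unit weight, setting the sensing range $R_s := \rho$, fixing the number of sensors $n := k$, choosing $R_t$ large enough to make connectivity free, and placing the data sink at an arbitrary location. The key equivalence to establish is then: the optimal RMWTCSCLMS value equals $|P|$ (all targets covered) if and only if the $k$ fixed-radius disks can cover all of $P$. Since a sensor with range $R_s=\rho$ placed at a point $c$ covers exactly the targets within distance $\rho$ of $c$, a schedule covering total weight $|P|$ is precisely a placement of $n=k$ radius-$\rho$ disks covering $P$, and conversely. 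Choosing the coverage threshold $W = |P|$ in the decision version of RMWTCSCLMS therefore decides the disk-cover instance, so RMWTCSCLMS is NP-hard.

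Two points need care. First, the reduction must be polynomial and the geometric source problem must be genuinely NP-hard for a fixed radius; I would invoke the classical hardness of covering planar points by unit disks (equivalently unit squares) and rescale so that the prescribed radius equals $R_s$. Second, one must confirm that adjoining the always-connected sink together with the large $R_t$ does not change the answer, which follows because connectivity is automatically satisfied and the sink carries no target weight.

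I expect the main obstacle to lie in the choice of the source problem rather than in the bookkeeping. The temptation is to reduce directly from the abstract (weighted) Maximum Coverage problem, which matches the $1-1/e$ structure of the later algorithm; however, an arbitrary set system need not be realizable as ``targets covered by a single radius-$R_s$ disk,'' because disk-coverable subsets are geometrically constrained (a target shared by several sets would have to lie simultaneously within several far-apart disks). Making such a reduction work would require a gadget that realizes the prescribed incidences in the plane, together with an argument that no spurious coverings arise. Routing the reduction through an inherently geometric covering problem sidesteps this realizability difficulty, so the crux of the argument is to identify a geometric NP-hard problem whose covering disks coincide with the sensing ranges and to argue the equivalence cleanly.
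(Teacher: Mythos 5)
Your proof is correct, but it takes a genuinely different route from the paper's. The paper reduces from the Target COVerage (TCOV) problem of its reference \cite{6846302}: it observes that setting $t.\omega=1$ for all targets and $R_t=\infty$ makes TCOV ``a subproblem'' of RMWTCSCLMS and then invokes the known NP-hardness of TCOV. You instead reduce directly from the classical geometric problem of covering planar points with $k$ fixed-radius disks, with unit target weights and threshold $W=|P|$. Your route is more self-contained and, arguably, more rigorous: the paper's claim that TCOV ``is also an RMWTCSCLMS problem'' glosses over the fact that the two problems have different objectives (TCOV minimizes total movement distance subject to covering \emph{all} targets, whereas RMWTCSCLMS maximizes covered weight and is indifferent to movement cost), so an oracle for the latter does not obviously solve the former; what both arguments really rest on is the same geometric core, namely that deciding whether $n$ radius-$R_s$ disks can cover all targets is NP-hard. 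The paper's approach buys brevity by citing an existing hardness result in the same application domain; yours buys transparency at the cost of having to invoke (and, as you note, rescale) the classical hardness of planar unit-disk cover. Your side remark about why one should not reduce from abstract weighted Maximum Coverage --- the realizability constraint on disk-coverable set systems --- is a correct and worthwhile observation that the paper does not make.
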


\begin{proof}
Here, the Target COVerage (TCOV) \cite{6846302} problem is used to show that
the RMWTCSCLMS problem is NP-hard. While we are given a set of
deployed mobile sensors $S' = \{s_1, s_2, \ldots,s_{n'}\}$ each having
sensing range $R'_s$ and its own position, and a set of targets $T' =
\{t_1, t_2, \ldots,t_{m'}\}$, the TCOV problem is scheduling mobile
sensors in an MWSN to cover all targets such that the total movement distance
of the mobile sensors is minimized. Clearly, in the RMWTCSCLMS problem,
when $R_s$ $=$ $R'_s$, $R_t$ $=$ $\infty$, $S$ $=$ $S'$, $T$ $=$ $T'$, and $t.\omega$
$=$ $1$ for each $t \in T$, the TCOV problem is also an RMWTCSCLMS
problem. Therefore, we have that the TCOV problem is a subproblem of
the RMWTCSCLMS problem. Because the TCOV problem is NP-hard \cite{8356712},
the RMWTCSCLMS problem is thus NP-hard, which completes the proof.
\end{proof}

\begin{thm}\label{thm:MWTCSCLMS_hardness}
The MWTCSCLMS problem is NP-complete.
\end{thm}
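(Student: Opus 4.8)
The plan is to establish the two defining properties of NP-completeness separately: that the MWTCSCLMS problem lies in NP, and that it is NP-hard. The hardness direction I expect to follow almost immediately from Lemma~\ref{thm:RMWTCSCLMS_hardness}, while the membership direction carries the only genuine subtlety.

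For NP-hardness, I would observe that the RMWTCSCLMS problem is nothing but the MWTCSCLMS problem restricted to the case in which $R_t$ is large enough that every pair of sensors, and every sensor--sink pair, can communicate. Hence the identity map is a polynomial-time reduction: given any instance of the RMWTCSCLMS problem, reinterpret it as an MWTCSCLMS instance with $R_t$ set large enough, and the set of feasible schedules together with the attainable total covered weight is unchanged. Since Lemma~\ref{thm:RMWTCSCLMS_hardness} shows the RMWTCSCLMS problem is NP-hard, the MWTCSCLMS problem, being a generalization of it, is NP-hard as well.

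For membership in NP, I would read the decision version in the standard way: given a weight bound $k$, decide whether some schedule covers targets of total weight at least $k$ while keeping the induced network connected. The natural certificate is the final position of each mobile sensor. Given such a certificate, a verifier can, in time polynomial in $n$ and $m$, (i) compute every sensor--target distance and mark a target covered when some sensor lies within $R_s$ of it, (ii) build the communication graph on the sensors and the sink by joining two nodes that lie within distance $R_t$ and test connectivity by a single breadth-first search, and (iii) sum the weights of the covered targets and compare with $k$.

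The step I expect to be the main obstacle is justifying that this certificate has polynomial size, since sensors may stop at arbitrary real points and a naive position could require unbounded precision. To handle this I would argue that it suffices to consider a discrete family of canonical positions: whether a sensor covers a prescribed target subset depends only on its lying in the corresponding intersection of radius-$R_s$ disks, and any nonempty such region, cut further by the radius-$R_t$ connectivity constraints (which are vacuous in the reduced case and otherwise bounded by finitely many circular arcs), contains a point of polynomially bounded bit complexity, e.g.\ a vertex of the arrangement of the defining circles. Restricting attention to such points shows that a yes-instance admits a polynomial-size witness, so the problem lies in NP; combined with the hardness argument, the MWTCSCLMS problem is NP-complete.
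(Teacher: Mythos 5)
Your proof takes essentially the same route as the paper: NP-hardness is obtained by noting that the RMWTCSCLMS problem is a subproblem (special case) of the MWTCSCLMS problem and invoking Lemma~\ref{thm:RMWTCSCLMS_hardness}, and membership in NP is handled separately. The only difference is that the paper simply asserts that the problem ``clearly belongs to the NP class,'' whereas you actually exhibit the verifier and address the certificate-size issue (restricting to positions of polynomially bounded bit complexity); this is a detail the paper omits rather than a divergence in approach.
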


\begin{proof}
Because the MWTCSCLMS problem clearly belongs to the NP class, it
suffices to show that the MWTCSCLMS problem is NP-hard. Because the RMWTCSCLMS problem, which is NP-hard by Lemma
\ref{thm:RMWTCSCLMS_hardness}, is a subproblem of the MWTCSCLMS
problem, the MWTCSCLMS problem is NP-hard, which completes the
proof.
\end{proof}

\section{Approximation Algorithm for a Special Case of the MWTCSCLMS Problem}\label{section:approximation_method}
In the section, we analyze a special case of the MWTCSCLMS problem,
that is, the RMWTCSCLMS problem, and present an approximation
algorithm for the problem accordingly. In the RMWTCSCLMS problem,
because $R_t$ is large enough such that any two mobile sensors (or
any one mobile sensor and the data sink) can communicate with each
other, the main task is to schedule limited mobile sensors to cover the
targets with the maximum total weight. Therefore, how to schedule
limited mobile sensors to cover which targets is important in the
RMWTCSCLMS problem. It is clear that if one mobile sensor can
exactly cover one target, the collection of possible sets of targets
covered by the mobile sensor is $\{\{t_1\}, \{t_2\},
\ldots,\{t_m\}\}$, and the cardinality of the collection is equal to
${\left(
\begin{matrix} m \\ 1  \\ \end{matrix} \right)}$. If one mobile
sensor can exactly cover $k$ targets, the cardinality of the
collection of possible sets of targets covered by the mobile sensor
is equal to ${\left(
\begin{matrix} m \\ k \\ \end{matrix} \right)}$. Because a mobile
sensor can move to cover $0$ or $k$ $(1 \leq k \leq m)$ targets, the
number of possible sets of targets covered by one mobile sensor is
therefore
$1$ $+$ $\sum\limits_{k=1}^{m}{\left(\begin{matrix} m \\
k  \\ \end{matrix} \right)}$ $=$ $2^m$. To solve the RMWTCSCLMS
problem, a brute-force algorithm can be used to check all possible
sets of targets such that the total weight of the targets covered by
the mobile sensors is maximized; however, the time complexity of the
brute-force algorithm is $O(2^{nm})$ because it has to check
$2^{nm}$ cases for $n$ mobile sensors. To overcome the challenge, an
approximation algorithm, termed the weighted-maximum-coverage-based
algorithm (WMCBA), which takes $O(m^3)$ time, is proposed for the
RMWTCSCLMS problem in Section \ref{section:wmcba}. In addition, the
theoretical analysis of the WMCBA is provided in Section
\ref{section:wmcba_analysis}.

\subsection{The WMCBA}\label{section:wmcba}
In the WMCBA, the idea is to transform any instance of the
RMWTCSCLMS problem into an instance of the Weighted Maximum Coverage
(WMC) problem. Then, an existing algorithm is used to find the
solution $SOL$ for the instance of the WMC problem. Finally, the
solution for the instance of the RMWTCSCLMS problem can thus be
obtained with $SOL$. In the WMC problem, while given an universal set
$U = \{u_1, u_2, \ldots, u_q\}$ with every element $u_i$ in $U$
having a weight $u_i.\tau$, a collection of sets of elements in $U$
$C = \{C_1, C_2, \ldots, C_r\}$, and a number $k$, the WMC problem
is to find a collection $C' \subseteq C$ such that $|C'| \leq k$ and
the total weight of $u_i$ for all $u_i \in \bigcup_{C_j \in
C'}{C_j}$ is maximized, where $|C'|$ denotes the cardinality of
$C'$. For example, while given an universal set $U$ $=$ $\{u_1, u_2, u_3, u_4, u_5, u_6\}$ with $u_i.\tau = 1$ ($1 \leq i \leq 6$), $C$ $=$ $\{\{u_1, u_2, u_3\}$, $\{u_2, u_4, u_6\}$, $\{u_4, u_5, u_6\}\}$, and $k = 2$, it is easy to verify that $C'$ $=$ $\{\{u_1, u_2, u_3\}$, $\{u_4, u_5, u_6\}\}$ has $|C'| \leq k$, and has maximal total weight $6$.

In the WMCBA, while given an instance of the
RMWTCSCLMS problem, including $R_s$, $R_t$, $sink$, $S$, and $T$, because
$R_t$ is large enough such that any two mobile sensors (or any one
mobile sensor and the data sink) can communicate with each other,
the network formed by the data sink and the mobile sensors must be connected. In addition,
the targets in $T$ can be treated as the elements in $U$ in the WMC
problem, the cardinality of $S$ can be treated as the number $k$ in
the WMC problem, and the set of targets covered by a mobile sensor
located at some position can be treated as some set in $C$. It is
clear that when we have a solution $SOL$ to the transformed instance
of the WMC problem, the solution for the original instance of the
RMWTCSCLMS problem can thus be obtained accordingly. From the
transformation, it is clear that how to find all possible sets of
targets that can be covered by mobile sensors with lower time
complexity and how to solve the WMC problem are critical issues in
the WMCBA.

Because each mobile sensor has sensing range $R_s$, a target $t_i$
is covered by a mobile sensor $s$ only if the distance between the
mobile sensor and the target is not greater than $R_s$. Let $O_{t_i}$ denote a circle
centered at $t_i$ with radius $R_s$. This also implies that the
mobile sensor $s$ is within the area enclosed by $O_{t_i}$. For two
targets $t_i$ and $t_j$, if $t_i$ and $t_j$ can be covered by a
mobile sensor $s$, it is clear that $s$ must be within the area
intersected by the circles $O_{t_i}$ and $O_{t_j}$. Therefore, when
we have a set of targets $P$ in which each target ${{t}_{i}} \in P$
can be covered by a mobile sensor $s$, $s$ must be within the
area intersected by the circles $O_{t_i}$ for all ${{t}_{i}} \in P$.
We know that when two circles centered at distinct positions with
radii $R_s$ intersect, at most two intersection points
exist and are located in the boundary of the intersection area. When
an area $A$ is generated by the intersection of the circles
$O_{t_i}$ for all ${{t}_{i}} \in P$, at least one intersection point
is generated and located in the boundary of the $A$. That is, at
least one intersection point can be selected to be the location of
the sensor $s$ such that $t_i$ can be covered by $s$ for all
${{t}_{i}} \in P$. Let $P_{point}$ denote the set of targets that
can be covered by a mobile sensor located at $point$. The collection
of all possible sets of targets $C_T$ that can be covered by mobile
sensors is constructed by the union of $\{\{{t}_{i}\}\}$ for all
${{t}_{i}} \in T$ and $\{P_{p^1_{t_i,t_j}},P_{p^2_{t_i,t_j}}\}$ for
any ${{t}_{i}},{{t}_{j}} \in T$, where $p^1_{t_i,t_j}$ and
$p^2_{t_i,t_j}$ denote the two intersection points intersected by
circles $O_{t_i}$ and $O_{t_j}$. Lemma \ref{lma:all_possible_set}
shows that all possible sets of targets that can be covered by
mobile sensors are included in $C_T$.

\begin{lma}\label{lma:all_possible_set}
For any point $p$ in the sensing field, the set of targets $P_{p}$ that can be covered by a mobile sensor located at $p$ must be included in $C_T$.
\end{lma}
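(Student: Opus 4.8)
The plan is to show that for each point $p$ the coverage set $P_p$ coincides with $P_q$ for some circle--circle intersection point $q$; since every such $P_q$ is by construction an element of $C_T$, this yields $P_p\in C_T$. Writing $D_t$ for the closed disk of radius $R_s$ centred at a target $t$, a target $t$ is covered from $p$ exactly when $p\in D_t$, so $P_p=\{t\in T: p\in D_t\}$ and $p$ lies in the common region $A:=\bigcap_{t\in P_p}D_t$. I would first dispose of the trivial cases: if $P_p=\emptyset$ the point covers nothing and may be ignored, and if $P_p=\{t_i\}$ is a singleton then $P_p\in C_T$ immediately, because $C_T$ contains every singleton $\{t_i\}$ by construction.

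The substantive case is $|P_p|\ge 2$, and here I would exploit that all sensing disks share the common radius $R_s$. Two distinct equal-radius disks never contain one another, so $A$ is a compact convex region whose boundary is genuinely assembled from arcs of at least two of the circles $O_t$ with $t\in P_p$ (if a single circle bounded $A$ then $A$ would equal one disk $D_{t_a}$, forcing $D_{t_a}\subseteq D_t$ for every other $t\in P_p$ and hence $t=t_a$, contradicting $|P_p|\ge 2$). A compact convex region bounded by arcs of two or more distinct circles must possess at least one vertex, i.e. a point $q$ at which two bounding circles $O_{t_a}$ and $O_{t_b}$ (with $t_a,t_b\in P_p$) meet; the tangent sub-case degenerates to $A$ being the single point $q$, which is itself such an intersection point. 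In either event $q\in\{p^1_{t_a,t_b},p^2_{t_a,t_b}\}$, so $P_q$ is a member of $C_T$, and since $q\in A$ we get $q\in D_t$ for every $t\in P_p$, giving the easy inclusion $P_p\subseteq P_q$.

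The hard part, and the step I expect to be the main obstacle, is the reverse inclusion $P_q\subseteq P_p$: certifying that the selected vertex is covered by no target outside $P_p$. This does \emph{not} follow from $q\in A$, because a foreign disk $D_{t_c}$ with $t_c\notin P_p$ (so $p\notin D_{t_c}$) is free to contain a boundary vertex of $A$ while still missing the outward bulge of $A$ where $p$ actually sits. Under a general-position assumption (no three circles concurrent, $p$ on no circle) one can push this through: at a simple vertex the only tight circles are $O_{t_a},O_{t_b}$, and for every $t\notin P_p$ the nearby points of $p$'s cell satisfy $|x-t|>R_s$, so by continuity $|q-t|\ge R_s$ with equality excluded, whence $t\notin P_q$. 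Without such an assumption I would be genuinely worried the inclusion fails: one can place equal-radius foreign disks so that each vertex of $A$ is swallowed by some $D_{t_c}$ even though $p$ escapes them all, and then $P_p$ is realised at no intersection point at all. I would therefore treat this reverse inclusion as the crux on which the whole lemma rests, attempting to secure it either by choosing $q$ as an extreme point of $A$ in a direction aimed away from the foreign centres or by a perturbation argument, and I would flag that the statement may in fact require a general-position hypothesis to hold as written.
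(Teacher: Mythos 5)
Your route is the same as the paper's: pass to a circle--circle intersection point $q$ on the boundary of $A=\bigcap_{t\in P_p}D_t$ and use the fact that $P_q\in C_T$ for every such $q$. Up to the inclusion $P_p\subseteq P_q$ your argument is correct and, in fact, more careful than the paper's: at the corresponding step the paper's proof only establishes that every $t_i\in P_{p'}$ is within distance $R_s$ of the chosen intersection point $p''$ (i.e.\ $P_{p'}\subseteq P_{p''}$) and then simply writes $P_{p''}=\{t_1,\dots,t_{m'}\}$, asserting the reverse inclusion without justification. The obstacle you flag is genuine and cannot be removed by a cleverer choice of vertex. Take $R_s=1$, $t_1=(-0.5,0)$, $t_2=(0.5,0)$, $t_3=(0,1.5)$, $t_4=(0,-1.5)$: the two intersection points of $O_{t_1}$ and $O_{t_2}$ are $(0,\pm\sqrt{3}/2)$, the upper one additionally covers $t_3$ and the lower one additionally covers $t_4$, no other pair of circles yields the set $\{t_1,t_2\}$, yet $p=(0,0)$ covers exactly $\{t_1,t_2\}$. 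So $P_p\notin C_T$ and the lemma is false as a statement about membership; your suspicion that a general-position hypothesis (or some weakening) is required is correct, and here even general position does not help since both candidate vertices are swallowed by foreign disks.

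What survives --- and what you have already proved --- is the weaker statement that for every $p$ there exists a reference point $q$ with $P_p\subseteq P_q$ and $P_q\in C_T$. That weaker form is all that the rest of the paper actually needs: in the strict-reduction argument of Lemma~\ref{lma:strict-reduction}, each set of targets covered by a sensor in a feasible solution only has to be \emph{dominated} by some member of $C$, since replacing a covered set by a superset cannot decrease the covered weight, and the greedy WMC guarantee is likewise unaffected. So rather than chasing the reverse inclusion $P_q\subseteq P_p$ (which is the one step that would fail), restate the lemma in the subset form and stop where your proof already stands; that is the correct repair of both your argument and the paper's.
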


\begin{proof}
Because $C_T$ contains $\{{t}_{i}\}$ for all ${{t}_{i}} \in T$, the case for a mobile sensor that
exactly covers a target is fully considered. Therefore, it suffices to show that the set of two or more targets
that can be covered by a mobile sensor located at $p$ must be included in $C_T$.
Assume that a set $P_{p'}$ $=$ $\{t_1, t_2, \ldots, t_{m'}\}$ whose targets can be covered by a mobile sensor located at $p'$ exists but is not included in $C_T$. This implies that the distance between $p'$ and $t_i$ is not greater than $R_s$ for all $t_i$ $\in$ $P_{p'}$. This also implies that $p'$ is within the area $A$ intersected by the circles centered at $t_i$ with radii $R_s$ for all $t_i$ $\in$ $P_{p'}$. Because $A$ is constructed by the intersection of the circles centered at $t_i$ for all $t_i$ $\in$ $P_{p'}$, there must exist at least one intersection point $p''$ in the boundary of $A$. This implies that the distance between
$p''$ and $t_i$ is not greater than $R_s$ for all $t_i$ $\in$ $P_{p'}$. This also implies that $P_{p''}$ $=$ $\{t_1, t_2, \ldots, t_{m'}\}$ $\in$ $C_T$ because $p''$ is an intersection point of circles. We have that $P_{p'}$ $=$ $P_{p''}$ $\in$ $C_T$, which constitutes a contradiction, and thus, completes the proof.
\end{proof}

In the WMCBA, how to solve the WMC problem is another critical
issue. Because the WMC problem is NP-hard \cite{Nemhauser1978}, a greedy algorithm
with an approximation ratio $1-1/e$ \cite{Nemhauser1978} is applied to the WMC
problem, where $e$ denotes the base of the natural logarithm. In the
greedy algorithm, the set with the maximum weight of uncovered elements
is selected in each iteration. The process is repeated until all
elements are covered or $k$ sets are selected.

While given an instance of the RMWTCSCLMS problem, including $R_s$, $R_t$, $sink$, $S = \{s_1, s_2,
\ldots,s_n\}$, and $T = \{t_1, t_2, \ldots,t_m\}$,
the WMCBA contains three steps that are illustrated in detail as follows:

\textbf{1) Construction of $U$, $C$, and $k$:} Let $U$ be the set of
nodes $u_i$ for each $t_i \in T$, where $u_i.\tau$ is set to
$t_i.\omega$ for each $u_i \in U$. Let $C$ be the union of
$\{\{u_{{i}}\}\}$ for all ${{t}_{i}} \in T$ and
$\{C_{p^1_{t_i,t_j}},C_{p^2_{t_i,t_j}}\}$ for any
${{t}_{i}},{{t}_{j}} \in T$, where $C_{p^1_{t_i,t_j}}$ (or
$C_{p^2_{t_i,t_j}}$) is the set of $u_x$ for each $t_x \in
P_{p^1_{t_i,t_j}}$ (or $P_{p^2_{t_i,t_j}}$). The $k$ is set to $n$.

\textbf{2) Establishment of Collection $C'$:} We apply the greedy algorithm \cite{Nemhauser1978} to find a collection $C' \subseteq C$.

\textbf{3) Scheduling of Mobile Sensors:}
Let $L$ $=$ $\{p_{t_z}\}$ $\cup$ $\{p^1_{t_x,t_y},p^2_{t_x,t_y}\}$ for all $C_{{t}_{z}}$, $C_{p^1_{t_x,t_y}}$, $C_{p^2_{t_x,t_y}}$ $\in$ $C'$, where $p_{t_z}$ denotes the location point of $t_z$. Mobile sensors in $S$ are scheduled to each
points in $L$.

Take the MWSN in Fig. \ref{Fig:exam_MWSN}, for example, where $R_t$ is assumed to be $\infty$.
In the construction of $U$, $C$, and $k$, $U$ is set to the union of
$\{u_{{i}}\}$ for $1 \leq i \leq 8$; $u_i.\tau$ is set to
$t_i.\omega$ for $1 \leq i \leq 8$; $C$ is set to the union of
$\{\{u_{{i}}\}\}$ for $1 \leq i \leq 8$ and $\{ \{u_1, u_2\}, \{u_6, u_7\} \}$; and $k$ is set to $14$.
When the greedy algorithm \cite{Nemhauser1978} is applied, $C'$ $=$ $\{\{u_1, u_2\}$, $\{u_6, u_7\}$, $\{u_3\}$, $\{u_4\}$, $\{u_5\}$, $\{u_8\}\}$ can be obtained. Then, mobile devices can be scheduled to $p^1_{t_1,t_2}$, $p^1_{t_6,t_7}$, $p_{t_3}$, $p_{t_4}$, $p_{t_5}$, and $p_{t_8}$.

To minimize the total movement distance of mobile sensors, the
Hungarian method \cite{NAV:NAV3800020109} is applied for assigning mobile sensors to
the points in $L$. The Hungarian method can be used to find an
optimal solution in polynomial time for the assignment problem. In
the assignment problem, when a set of agents $A$ and a set of tasks
$H$ are given and have the same cardinality, each agent $a \in A$
can be assigned to perform any task $h \in H$ with cost $\psi(a,h)$.
The assignment problem is to assign exactly one agent $a \in A$ to
each task and assign exactly one task $h \in H$ to each agent such
that the total cost of the assignment is minimized. It is clear that
a mobile sensor $s_i \in S$ can be regarded as an agent $a_i \in A$;
a point $p_j$ $\in$ $L$ can be regarded as a task $h_j \in H$; and
the distance required by $s_i$ to move to or cover $p_j$ can be
regarded as the cost $\psi(a_i,h_j)$. Therefore, the problem of
assigning mobile sensors in $S$ to the points in $L$ can be
transferred into the assignment problem and can be solved by the
Hungarian method \cite{NAV:NAV3800020109} if $S$ and $L$ have the same cardinality.
However, $|S|$ and $|L|$ are not always the same. Because less than
or equal to $|S|$ sets (or location points) are selected in the WMCBA, $|S|$
is greater than or equal to $|L|$. For this reason, we can obtain $L'$ by adding
some dummy points into $L$ such that $|S|$ $=$ $|L'|$, where the dummy points' corresponding costs $\psi$ are set to $0$.
Therefore, when $S$ and $L$ are given, the cost matrix generated by $S$ and $L$ for the input
of the Hungarian method is an $n$ $\times$ $n$ matrix and is shown as follows:
\[{{[\psi(s_i,p_j)]}_{n\times n}}=\left(\begin{matrix}
   {\psi(s_1,p_1)} & \ldots  & {\psi(s_1,p_\ell)} & 0 & \cdots  & 0 \\
   \vdots  & \ddots  & \vdots & \vdots  & \ddots  & \vdots \\
   {\psi(s_n,p_1)} & \cdots  & {\psi(s_n,p_\ell)} & 0 & \cdots  & 0 \\
\end{matrix} \right),\]
where $n$ $=$ $|S|$; $\ell$ $=$ $|L|$; $\psi(s_i,p_j)$ $=$ $\zeta (s_i,p_j)$ if $p_j$ is some $p^1_{t_x,t_y}$ or $p^2_{t_x,t_y}$ in $L$; $\psi(s_i,p_j)$ $=$ $\zeta (s_i,p_j)$ $-$ $R_s$ if $p_j$ is some $p_{t_z}$ in $L$ and $\zeta (s_i,p_j)$ $\geq$ $R_s$;
$\psi(s_i,p_j)$ $=$ $0$ for other cases; and $\zeta (s_i,p_j)$ denotes the distance
between $s_i$ and $p_j$. When the cost matrix is determined, the optimal assignment can be obtained with the Hungarian method \cite{NAV:NAV3800020109}.

\subsection{Theoretical Analysis of the WMCBA}\label{section:wmcba_analysis}
In the following, the analysis of the time complexity of the WMCBA
is given in Theorem \ref{thm:WMCBA_time_complexity}. In addition,
Lemma \ref{lma:strict-reduction} shows that there exists a strict
reduction from the RMWTCSCLMS problem to the WMC problem with the
WMCBA. Theorem \ref{thm:approximation_ratio} provides the
approximation ratio of the WMCBA with the help of Lemma
\ref{lma:strict-reduction}.

\begin{thm}\label{thm:WMCBA_time_complexity}
The time complexity of the WMCBA is bounded in $O(m^3)$, where $m$
is the number of targets.
\end{thm}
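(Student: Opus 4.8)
The goal is to show the WMCBA runs in $O(m^3)$ time, where $m=|T|$. The plan is to bound each of the three phases of the algorithm separately and argue that the greedy-coverage phase dominates.

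First I would bound the construction of $U$, $C$, and $k$. Building $U$ is $O(m)$. The nontrivial part is the collection $C$, which is the union of the singletons $\{u_i\}$ and the pairs of sets $\{C_{p^1_{t_i,t_j}},C_{p^2_{t_i,t_j}}\}$ over all target pairs $t_i,t_j$. There are $\binom{m}{2}=O(m^2)$ such pairs, and for each pair the two intersection points of $O_{t_i}$ and $O_{t_j}$ are computed in $O(1)$. The key observation is that determining $P_{p}$ for a fixed candidate point $p$ requires testing the distance from $p$ to every target, which costs $O(m)$. Since this is done for $O(m^2)$ candidate points, constructing $C$ takes $O(m^3)$ time, and $|C|=O(m^2)$.

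Next I would bound the greedy phase. The greedy algorithm of \cite{Nemhauser1978} runs for at most $\min(k,|C|)$ iterations, and in each iteration it scans all sets in $C$ to pick the one of maximum residual weight. Each scan costs $O(|C|\cdot m)=O(m^3)$ if each set's residual weight is recomputed naively against the currently covered elements. The number of iterations is bounded by $|C|=O(m^2)$, so a crude bound would give $O(m^5)$; the right move is to note that greedy selection can only pick at most $m$ useful sets before every element of $U$ is covered (there are only $m$ elements of nonzero weight), so the effective number of iterations is $O(m)$. Combined with an $O(m^2)$ cost per iteration to update and rescan residual weights, this yields $O(m^3)$ for the greedy phase. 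The scheduling/Hungarian phase operates on $|L|\le|C|$ points and $n$ sensors, and since at most $m$ points survive greedy selection, its contribution is subsumed by the earlier bounds.

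The main obstacle will be making the greedy bound tight: a naive reading gives $O(m^2)$ iterations times an $O(m^3)$ scan, which overshoots badly, so the argument hinges on observing that only $O(m)$ productive greedy steps are possible (one per distinct element of $U$) and that each step costs at most $O(|C|)=O(m^2)$ once residual weights are maintained incrementally. Taking the maximum over the three phases — each bounded by $O(m^3)$ — gives the claimed overall bound, completing the proof.
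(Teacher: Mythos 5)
Your decomposition and your bounds for the first two phases match the paper's proof: $O(m^3)$ to build $C$ with $|C|\le m^2$ (at most $m$ singletons plus $2\binom{m}{2}$ intersection-point sets, each requiring an $O(m)$ membership scan), and $O(m)$ productive greedy iterations at $O(m^2)$ apiece. Your accounting of the greedy phase is in fact more careful than the paper's: the paper simply writes $O(m\times m^2)$ without noting that naively recomputing each set's residual weight costs an extra factor of $m$, whereas you correctly flag that incremental maintenance of residual weights is needed to stay within $O(m^3)$.

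The genuine gap is in the scheduling phase. You dismiss it as ``subsumed by the earlier bounds,'' but the Hungarian method as used in the WMCBA runs on an $n\times n$ cost matrix ($n=|S|$, padded with dummy columns so that $|S|=|L'|$), so its cost is $O(n^3)$, which is not bounded by any function of $m$ alone when $n\gg m$. Since the theorem states the bound purely in terms of $m$, this must be addressed. The paper does so in its opening sentence by restricting to the case $n<m$ (arguing that for $n\ge m$ every target can be covered outright, so the instance is trivial); with $n<m$ the cost matrix is at most $m\times m$ and the Hungarian step is $O(m^3)$. Your observation that at most $m$ sets survive greedy selection bounds $|L|$, but not the matrix dimension $n$; without the explicit case split on $n$ versus $m$, the claimed bound does not follow. (The same restriction is what lets the paper bound the greedy iteration count by $k=n<m$, though your alternative argument --- one productive iteration per element of $U$ --- handles that phase independently of $n$.)
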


\begin{proof}
Because all targets can be covered if the number of mobile sensors
is greater than or equal to the number of targets, that is, $n \geq
m$, we discuss only the case with $n < m$ in the following. In the
construction of $U$, $C$, and $k$, because $T$ has $m$ elements, it
requires $O(m)$ time to construct $U$. Because there are at most two
intersection points for any two distinct circles, at most $2 \times
{\left(
\begin{matrix} m \\ 2 \\ \end{matrix} \right)}$ intersection points are generated.
Therefore, we have that there are at most $m$ + $2 \times {\left(
\begin{matrix} m \\ 2 \\ \end{matrix} \right)}$ $=$ $m^2$ elements in $C$.
In addition, because it needs at most $O(m)$ time to check if any targets are within a circle, it requires
$O(m \times m^2) = O(m^3)$ time to construct $C$. Therefore, the construction
of $U$, $C$, and $k$ requires $O(m^3)$ time because the setting of
$k$ requires only constant time. In the establishment of collection
$C'$, because the greedy algorithm \cite{Nemhauser1978} is applied to iteratively
select a set with the maximum weight of the uncovered elements in $C$ until
$k$ sets are selected or all elements in $U$ are covered, it
requires $O(k \times m^2)$ or $O(m \times m^2)$ time to construct $C'$
because there are at most $m^2$ elements in $C$ and at most $m$
iterations are required in the greedy algorithm. Because $k = n <
m$, it requires $O(m \times m^2)$ $=$ $O(m^3)$ time for the
construction of $C'$. Because at most $k$ mobile sensors are
scheduled and $k$ $<$ $m$, at most an $m$ $\times$ $m$ cost matrix is required for the
Hungarian method. By \cite{NAV:NAV3800020109}, we have that it requires $O(m^3)$ time for the Hungarian method, which implies that it requires $O(m^3)$ time for the
scheduling of mobile sensors. Therefore, the WMCBA requires $O(m^3)$
$+$ $O(m^3)$ + $O(m^3)$ $=$ $O(m^3)$ time, which completes the proof.
\end{proof}

\begin{lma}\label{lma:strict-reduction}
There exists a strict reduction from the RMWTCSCLMS problem to the
WMC problem by the WMCBA.
\end{lma}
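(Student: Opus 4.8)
The plan is to present the reduction as a pair of polynomial-time maps $(f,g)$ and to verify that it preserves the performance ratio. The instance map $f$ is Step~1 of the WMCBA, sending an RMWTCSCLMS instance $I = (R_s, R_t, sink, S, T)$ to the WMC instance $f(I) = (U, C, k)$ with $k = n$; the solution map $g$ is Step~3, sending a feasible WMC solution $C' \subseteq C$ with $|C'| \le n$ to the schedule that places one mobile sensor at the representative point of each chosen set. Both maps are polynomial-time by the counting in Theorem~\ref{thm:WMCBA_time_complexity}. A strict reduction demands $R_{\mathrm{RMWTCSCLMS}}(I, g(I,C')) \le R_{\mathrm{WMC}}(f(I), C')$ for every feasible $C'$; since both are maximization problems with performance ratio $\mathrm{OPT}/\mathrm{val}$, it suffices to establish
\[
\mathrm{OPT}_{\mathrm{RMWTCSCLMS}}(I) = \mathrm{OPT}_{\mathrm{WMC}}(f(I)) \quad\text{and}\quad \mathrm{val}\bigl(g(I,C')\bigr) \ge \mathrm{val}(C'),
\]
after which the required ratio inequality follows immediately with equality.

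First I would handle the value bound, the easier half. Placing a sensor at the representative point of a set $C_j \in C'$ covers every target whose circle $O_{t_i}$ contains that point, hence at least all targets indexed by $C_j$; since $u_i.\tau = t_i.\omega$ by construction and $R_t = \infty$ makes every placement yield a connected network (so the connectivity requirement of RMWTCSCLMS is automatically satisfied and imposes no constraint), the weight covered by $g(I,C')$ is at least the total $\tau$-weight of the union $\bigcup_{C_j \in C'} C_j$, which is exactly $\mathrm{val}(C')$.

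Next I would prove the equality of optima by two inequalities. For $\mathrm{OPT}_{\mathrm{WMC}} \ge \mathrm{OPT}_{\mathrm{RMWTCSCLMS}}$, take an optimal schedule; each of its $n$ sensors covers some target set $P_p$, and by Lemma~\ref{lma:all_possible_set} each $P_p$ corresponds to a member of $C$. Choosing these at most $n$ members gives a feasible WMC solution whose covered weight equals that of the schedule, so $\mathrm{OPT}_{\mathrm{WMC}} \ge \mathrm{OPT}_{\mathrm{RMWTCSCLMS}}$. For the reverse, apply $g$ to an optimal WMC solution and use the value bound above to obtain a feasible schedule of weight at least $\mathrm{OPT}_{\mathrm{WMC}}$, giving $\mathrm{OPT}_{\mathrm{RMWTCSCLMS}} \ge \mathrm{OPT}_{\mathrm{WMC}}$. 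The two inequalities force equality of the optima.

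The hard part will be the direction $\mathrm{OPT}_{\mathrm{WMC}} \ge \mathrm{OPT}_{\mathrm{RMWTCSCLMS}}$, where I must ensure that no coverage pattern achievable by a freely positioned sensor is lost when passing to the finite collection $C$. This is precisely the content of Lemma~\ref{lma:all_possible_set}: every coverable target set, including sets of three or more targets, already appears in $C_T$ and hence in $C$, because any such set is realized at an intersection point of two of the circles $O_{t_i}$. Once this guarantee is invoked, the matched optima together with the value bound yield $R_{\mathrm{RMWTCSCLMS}}(I, g(I,C')) \le R_{\mathrm{WMC}}(f(I), C')$, certifying that $(f,g)$ is a strict reduction and completing the proof.
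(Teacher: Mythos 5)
Your proof is correct and rests on the same pillars as the paper's: the instance map $f$ is Step~1 of the WMCBA, the solution map $g$ is Step~3, polynomial-time computability is imported from Theorem~\ref{thm:WMCBA_time_complexity}, and Lemma~\ref{lma:all_possible_set} is the key fact guaranteeing that no coverage pattern of a freely positioned sensor is lost in the finite collection $C$. The difference is in how the ratio condition is discharged. The paper splits the obligation into two conditions --- (C1) the optimum of $f(I)$ maps to an optimum of $I$, and (C2) any feasible solution of $f(I)$ maps to a feasible solution of $I$ with no worse performance ratio --- proves (C1) by contradiction (assuming a better schedule exists, lifting it into $C$ via Lemma~\ref{lma:all_possible_set}, and contradicting optimality of the WMC solution), and then omits (C2) as ``similar.'' You instead verify the defining inequality $R_{\mathrm{RMWTCSCLMS}}(I,g(I,C')) \le R_{\mathrm{WMC}}(f(I),C')$ directly, by establishing $\mathrm{OPT}_{\mathrm{RMWTCSCLMS}}(I) = \mathrm{OPT}_{\mathrm{WMC}}(f(I))$ and $\mathrm{val}(g(I,C')) \ge \mathrm{val}(C')$ for every feasible $C'$. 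The two inequalities you prove are exactly the facts the paper's contradiction argument uses internally ($c_2(S^2_{S^1_{OPT}}) = c_1(S^1_{OPT})$ and $c_1(g(S^2_{OPT})) \ge c_2(S^2_{OPT})$), so the mathematical content is the same; what your version buys is that the omitted condition (C2) is actually proved rather than asserted by analogy, and the logic is affirmative rather than by contradiction. Your explicit remark that $R_t = \infty$ makes connectivity vacuous, so that feasibility of $g(I,C')$ is automatic, is also a point the paper leaves implicit.
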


\begin{proof}
Let $\Pi_1$ and $\Pi_2$ be the RMWTCSCLMS problem and the WMC
problem, respectively. While given any instance $I$ of $\Pi_1$,
including $R_s$, $R_t$, $S$, and $T$, the WMCBA can transform $I$
into an instance of $\Pi_2$, termed $f(I)$, including $U$, $C$, and
$k$, where $f$ denotes the function that works as the step $1$ of
the WMCBA. Let $g$ be the function that works as the step $3$ of the
WMCBA and can transform any feasible solution $S$ of $f(I)$ into a
feasible solution $g(S)$ of $I$. By Theorem
\ref{thm:WMCBA_time_complexity}, we have that $f$ and $g$ are
polynomial time computable functions because the WMCBA can be
executed in polynomial time. Therefore, it suffices to show that C1)
the optimal solution of $f(I)$ can lead to an optimal solution of
$I$, and C2) any feasible solution of $f(I)$ can lead to a feasible
solution of $I$ with a better or equivalent performance ratio
\cite{Chen2011}. The proof of C2 is omitted here due to the
similarity of the proof of C1.

For C1, when an optimal solution $S^2_{OPT}$ of $f(I)$ is given,
assume that $g(S^2_{OPT})$ is not an optimal solution of $I$, that
is, there exists an optimal solution $S^1_{OPT}$ of $I$ such that
$c_1(S^1_{OPT}) > c_1(g(S^2_{OPT}))$, where $c_1(S)$ denotes a cost
function and produces the total weight of the covered targets for each
feasible solution $S$. By Lemma \ref{lma:all_possible_set} and the
construction of $C$, we have that all possible sets of targets that
can be covered by mobile sensors are considered and included in $C$;
that is, if a set of targets that can be covered by a mobile sensor
exists, the corresponding set also exists in $C$. Therefore, any set
of targets that can be covered by mobile sensors in $S^1_{OPT}$ has
a corresponding set in $C$. Then we can construct a feasible
solution $S^2_{S^1_{OPT}}$ to $f(I)$ by selecting the corresponding
set in $C$ for each set of targets covered by mobile sensors in
$S^1_{OPT}$. Because $u_i.\tau$ is equal to $t_i.\omega$ for each
$u_i \in U$, the total weight of the covered elements in
$S^2_{S^1_{OPT}}$ is equal to the total weight of the covered targets in
$S^1_{OPT}$. Let $c_2 (S)$ denote a cost function and produce the
total weight of the covered element for each feasible solution $S$. We
thus have that $c_2 (S^2_{S^1_{OPT}})$ $=$ $c_1(S^1_{OPT})$. In a
similar way, we also have that $c_1(g(S^2_{OPT}))$ $\geq$ $c_2
(S^2_{OPT})$. Because $c_1(S^1_{OPT}) > c_1(g(S^2_{OPT}))$, we have
that $c_2 (S^2_{S^1_{OPT}})$ $=$ $c_1(S^1_{OPT})$ $>$
$c_1(g(S^2_{OPT}))$ $\geq$ $c_2 (S^2_{OPT})$, which implies that
$c_2 (S^2_{S^1_{OPT}})$ $>$ $c_2 (S^2_{OPT})$. This implies that
$S^2_{OPT}$ is not an optimal solution of $f(I)$, which constitutes
a contradiction, and thus, completes the proof.
\end{proof}

\begin{thm}\label{thm:approximation_ratio}
The WMCBA achieves an approximation ratio of $1-1/e$ for the
RMWTCSCLMS problem, where $e$ denotes the base of the natural
logarithm.
\end{thm}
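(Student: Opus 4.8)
The plan is to derive the approximation ratio of the WMCBA directly from the strict reduction established in Lemma \ref{lma:strict-reduction}, combined with the fact that the greedy algorithm invoked in step 2 of the WMCBA is a $(1-1/e)$-approximation for the WMC problem \cite{Nemhauser1978}. Recall that for a maximization problem the performance (approximation) ratio of a feasible solution is the ratio of its objective value to that of an optimal solution. The governing idea is that a strict reduction preserves (or improves) this performance ratio, so that the guarantee on the WMC side carries over intact to the RMWTCSCLMS side; almost all of the work has in fact already been done in Lemma \ref{lma:strict-reduction}, and what remains is to assemble it.

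First I would fix an arbitrary instance $I$ of the RMWTCSCLMS problem and let $f(I)$ be the WMC instance produced by step 1 of the WMCBA, exactly as in Lemma \ref{lma:strict-reduction}. Let $SOL$ denote the collection returned by the greedy algorithm on $f(I)$, let $g$ be the transformation performed in step 3, and let $S^2_{OPT}$ and $S^1_{OPT}$ be optimal solutions of $f(I)$ and $I$, respectively, with cost functions $c_2$ and $c_1$ as defined in the proof of Lemma \ref{lma:strict-reduction}. The greedy guarantee then gives $c_2(SOL) \geq (1-1/e)\, c_2(S^2_{OPT})$; that is, the performance ratio of $SOL$ on $f(I)$ is at least $1-1/e$.

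Next I would invoke the strict reduction. Condition C1 certifies that optimal solutions of $f(I)$ map to optimal solutions of $I$, so comparing performance ratios across the two problems is legitimate, while condition C2 guarantees that $g(SOL)$ is feasible for $I$ and that its performance ratio is at least that of $SOL$ on $f(I)$. Applying C2 to the greedy output $SOL$ therefore transfers the bound, yielding
\[
\frac{c_1(g(SOL))}{c_1(S^1_{OPT})} \;\geq\; \frac{c_2(SOL)}{c_2(S^2_{OPT})} \;\geq\; 1-\frac{1}{e}.
\]
Since $g(SOL)$ is precisely the schedule of mobile sensors output by the WMCBA, this shows that the WMCBA attains an approximation ratio of $1-1/e$ for the RMWTCSCLMS problem.

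The main obstacle I anticipate is making the ``strict reduction preserves the ratio'' step fully rigorous rather than merely asserting it: one must be sure that C1 genuinely pins down the relationship between the two optimal values (so that the middle inequality above is meaningful) and that C2 is applied to the specific greedy output $SOL$, not only to optimal solutions. Both points follow from the argument already given for Lemma \ref{lma:strict-reduction} — recall that C2 is obtained there by the same reasoning as C1 — so the residual effort is chiefly careful bookkeeping of the cost functions $c_1,c_2$ and the maps $f,g$, rather than any new combinatorial or geometric insight.
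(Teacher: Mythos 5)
Your proposal is correct and follows essentially the same route as the paper: the paper's proof simply cites the general fact (from its reference on strict reductions) that a strict reduction transfers any $\rho$-approximation of the target problem back to the source problem, and then combines this with Lemma \ref{lma:strict-reduction} and the $(1-1/e)$ greedy guarantee for WMC. Your version merely unpacks that black-box step by explicitly chaining the C1/C2 conditions through the greedy output $SOL$, which is a faithful (and slightly more self-contained) rendering of the same argument.
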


\begin{proof}
By \cite{Chen2011}, if there exists a strict reduction from
$\Pi_1$ to $\Pi_2$, in which $\Pi_1$ and $\Pi_2$ represent two
optimization problems, any existing $\rho$-approximation algorithm
of $\Pi_2$ can lead to a $\rho$-approximation algorithm of $\Pi_1$.
By Lemma \ref{lma:strict-reduction}, it implies that any existing
$\rho$-approximation algorithm of the WMC problem can lead to a
$\rho$-approximation algorithm of the RMWTCSCLMS problem by the
WMCBA. Because a greedy algorithm \cite{Nemhauser1978} with approximation ratio
$1-1/e$ for the WMC problem is applied in the WMCBA, the WMCBA has
an approximation ratio $1-1/e$ to the RMWTCSCLMS problem, which
completes the proof.
\end{proof}

\section{Algorithm for the MWTCSCLMS Problem}\label{section:method}
Because the MWTCSCLMS problem is to schedule limited mobile sensors
to appropriate locations to cover targets and form a connected
network, the proposed algorithm, termed the Steiner-tree-based
algorithm (STBA), is designed to determine appropriate locations
first, termed the potential points, and then, move mobile sensors
to the potential points for target coverage and network
connectivity. Hereafter, a set of points is said to be connected or
form a connected network if the network with the data sink and the
mobile sensors located at the points is connected. In addition, a
target is said to be covered by a point if a mobile sensor located
at the point can cover the target; and a set of targets is said to
be covered by a point if each target in the set is covered by the
point.

Because only limited mobile sensors can be used to cover targets and
form a connected network, the idea of the STBA is to iteratively add
potential points to cover some adaptive targets and form a network
connected with the data sink, until there are not enough mobile sensors or all targets are covered. Because a potential
point can cover one or more targets, how to determine the positions
of potential points in a sensing field for target coverage and
network connectivity are critical issues in the MWTCSCLMS problem.
By Lemma \ref{lma:all_possible_set}, because all possible sets of
targets that can be covered by any point in a sensing field are
considered in the construction of $C$ of the WMCBA, the location
points $p_{t_z}$ for all $t_z \in T$ and the intersection points
$p^1_{t_i,t_j}$ and $p^2_{t_i,t_j}$ for any $t_i,t_j \in T$ are
considered to be the reference points that can be used to be the
guides for generating potential points. Let $X_1$ be the set of
location points ${p}_{t_z}$ for all ${{t}_{z}}$ $\in$ $T$; and let
$X_2$ be the set of intersection points ${p}^1_{t_i,t_j}$ and
${p}^2_{t_i,t_j}$ for any $t_i,t_j$ $\in$ $T$. Let $X$ $=$ $X_1$
$\cup$ $X_2$. The points $p \in X$ can then be used as the guides
for generating potential points. In addition, $P_{{p}_{t_z}}$ is set
to $\{{t}_{z}\}$ for each ${p}_{t_z} \in X_1$; and
$P_{p^1_{t_i,t_j}}$ and $P_{p^2_{t_i,t_j}}$ for any
${p}^1_{t_i,t_j}$, ${p}^2_{t_i,t_j}$ $\in$ $X_2$ are set to the sets
of targets covered by points ${p}^1_{t_i,t_j}$ and
${p}^2_{t_i,t_j}$, respectively. The $P_{{p}_{t_z}}$,
$P_{p^1_{t_i,t_j}}$, and $P_{p^2_{t_i,t_j}}$ are similar to the
elements in the construction of $C$ of the WMCBA, and are used to
show which targets can be covered when a mobile sensor is located at
${p}_{t_z}$, $p^1_{t_i,t_j}$, or $p^2_{t_i,t_j}$. Therefore, when a
set of targets $P_{{p}_{t_z}}$, $P_{p^1_{t_i,t_j}}$, or
$P_{p^2_{t_i,t_j}}$ is selected to be covered, the corresponding
reference point ${p}_{t_z}$, $p^1_{t_i,t_j}$, or $p^2_{t_i,t_j}$ can
be regarded as a guide to generate potential points to cover the targets
in $P_{{p}_{t_z}}$, $P_{p^1_{t_i,t_j}}$, or $P_{p^2_{t_i,t_j}}$ and
form a connected network. Take the MWSN in Fig. \ref{Fig:exam_MWSN}, for example.
It is clear that $X_1$ is the set of
location points ${p}_{t_i}$ for $1 \leq i \leq 8$; and $X_2$ is $\{{p}^1_{t_1,t_2}, {p}^2_{t_1,t_2}, {p}^1_{t_6,t_7}, {p}^2_{t_6,t_7}\}$. The $X$ is set to the union of $X_1$ and $X_2$, and the reference points in $X$ are shown in Fig. \ref{Fig:exam_reference_points}.

\begin{figure}
\center \subfigure{\includegraphics[width=8.5cm]{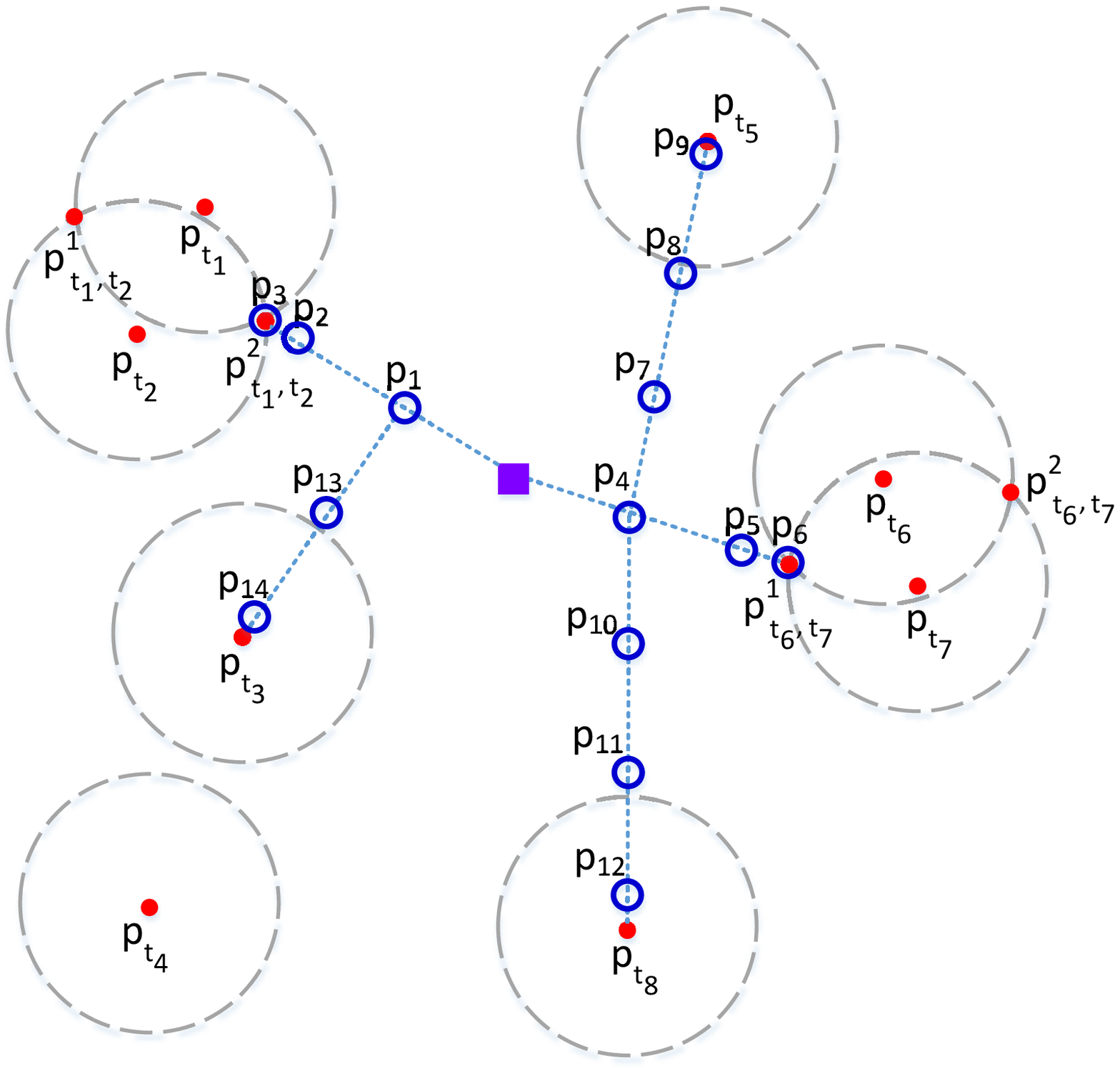}}\\
\subfigure{\includegraphics[width=6cm]{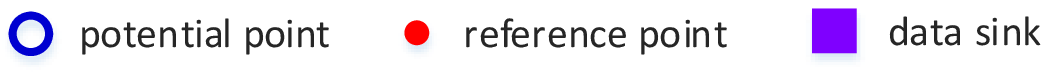}}
\caption{Example of reference points and potential points obtained by the STBA from the MWSN shown in Fig. \ref{Fig:exam_MWSN}.
} \label{Fig:exam_reference_points}
\end{figure}

Before deciding which set of targets $P_p$ for any reference point
$p \in X$ to be covered in each iteration in the STBA, we have to
know how much cost to pay for covering $P_p$, that is, how many
additional potential points are required. Let $N$ be a set of potential points and the data sink,
which can form a network connected with the data sink. Also let
$\eta_N(p)$ denote the minimum distance between point $p$ and each
point in $N$. When a set of targets $P_p$, in which $p$ is some
reference point $p^1_{t_i,t_j}$ or $p^2_{t_i,t_j}$ in $X$, is
considered to be covered and form a connected network with $N$, it
is clear that at least $\left\lceil \frac{\eta_N(p)}{{{R}_{t}}}
\right\rceil$ additional potential points are required to cover
$P_p$ and form a connected network with $N$. That is, $\left\lceil
\frac{\eta_N(p)}{{{R}_{t}}} \right\rceil$ potential points each can
be generated on the straight line from $p'$ to $p$ every distance
$R_t$, not including $p'$, until $p$ is reached, where $p'$ is a point in $N$ that has the minimum
distance to $p$. In addition, when a set of targets $P_p$, in which
$p$ is some reference point ${p}_{t_z}$ in $X$, is considered to be
covered and forms a connected network with $N$, at least $\left\lceil
\frac{\eta_N(p)-{{R}_{s}}}{{{R}_{t}}} \right\rceil$ potential points
are required because any point within the circle centered at $p_{t_z}$
with radius $R_s$ can cover target $t_z$.
Let $\phi_N (p)$ denote
the number of additional potential points required to cover
$P_p$ and form a connected network with $N$. The $\phi_N (p)$ can
then be defined in Eq. \ref{equa_phi}:
\begin{equation}\label{equa_phi}
\phi_N (p)=\left\{^{\left\lceil \frac{\eta_N(p)}{{{R}_{t}}}
\right\rceil , \text{ if $p$ is an intersection point,}}
_{max\left(0,\left\lceil \frac{\eta_N(p)-{{R}_{s}}}{{{R}_{t}}} \right\rceil\right) ,
\text{ otherwise}.} \right.
\end{equation}
Note that
when $p$ is some reference point ${p}_{t_z}$ in $X$ and $\eta_N(p)$ $\leq$ ${{R}_{s}}$,
$\phi_N (p)$ is set to $0$. This is because $t_z$ can be directly covered by some potential points in $N$.

In the WMCBA, a greedy algorithm is applied to iteratively select
the set with the maximum weight of the uncovered elements for the RMWTCSCLMS
problem. In the MWTCSCLMS problem, because the number of mobile
sensors is limited, the mobile sensors have to be efficiently
utilized, and therefore, the idea of the STBA is to iteratively
select a set of targets $P_p$ for some $p \in X$ that has the maximum
weight of the uncovered targets and requires the minimum number of additional
potential points for constructing a network connected with the data
sink to cover the targets in $P_p$. Therefore, when $N$ is given, a new metric for each $p \in X$, denoted by $\rho_{N}
(p)$, is defined in Eq. \ref{profit}:
\begin{equation}\label{profit}
\rho_N (p)=\left\{^{\infty , \text{ if $\phi_N (p)$ $=$ $0$,}}
_{\frac{\Omega_N(P_p)}{\phi_N (p)},
\text{ otherwise},} \right.
\end{equation}
where $\Omega_N(P_p)$ denotes the total weight of the targets in $P_p$
that are not covered by the potential points in $N$.

In the STBA, the idea is to iteratively select a $p$ with higher
$\rho_N(p)$ and generate potential points to construct a connected
network and cover $P_p$. When the number of potential points is
higher such that the next $p$ is hard to select due to the
limited mobile sensors, a node-weighted Steiner tree algorithm is
applied to try to re-generate and minimize the number of potential
points. When all targets are covered or no more potential points can
be reduced, the potential points are determined. Then, similar to
the WMCBA, the Hungarian method is applied for assigning mobile
sensors to the potential points to minimize the total movement
distance of the mobile sensors. The STBA is described in detail in
Algorithm \ref{alg:STBA}. In Algorithm \ref{alg:STBA}, the $X$ is
constructed in Lines $1$-$3$. In addition, $L$ and $Y$ are
initialized to be $\emptyset$, where $L$ is used to store the generated
potential points and $Y$ is used to store the selected $p \in X$. In
the inner while loop, the point $p$, which is in $X$ such that at least one of the targets in $P_p$ is
not yet covered by the points in $Y$, with higher $\rho_{N} (p)$ is
iteratively selected, where $N$ is the union of $\{sink\}$ and $L$.
If two or more $p$ have the same $\rho_{N} (p)$, the $p$
with the lowest $\eta_N(p)$ is selected. Once a $p$ is selected, the
potential points can be therefore generated by $p$ and added into
$L$ to form a connected network. Therefore, $N$ can become a
bigger connected network when more potential points are generated.
When the number of potential points is higher such that no more $p$
can be selected, it breaks the inner while loop and calls the function
ReGeneratePotentialPoints. The function ReGeneratePotentialPoints is
to re-generate potential points to cover all targets in $P_{p_y}$
for each $p_y \in Y$ such that the number of the required potential
points is minimized, which is discussed later. The outer while loop
iteratively executes the inner while loop until all targets are
covered or no more potential points can be reduced. Finally, when
$L$ is determined, the deployment orders that assign mobile sensors to
potential points can be generated with the Hungarian method. The cost matrix
used for the Hungarian method is shown as follows:
\[{{[\psi(s_i,p_j)]}_{n\times n}}=\left(\begin{matrix}
   {\psi(s_1,p_1)} & \ldots  & {\psi(s_1,p_\ell)} & 0 & \cdots  & 0 \\
   \vdots  & \ddots  & \vdots & \vdots  & \ddots  & \vdots \\
   {\psi(s_n,p_1)} & \cdots  & {\psi(s_n,p_\ell)} & 0 & \cdots  & 0 \\
\end{matrix} \right),\]
where $n$ $=$ $|S|$; $\ell$ $=$ $|L|$; $\psi(s_i,p_j)$ $=$ $\zeta (s_i,p_j)$ if $p_j$ is in $L$; and
$\psi(s_i,p_j)$ $=$ $0$ for other cases.

Take Fig. \ref{Fig:exam_reference_points}, for example. When the
reference points in $X$ are obtained, the STBA then
iteratively selects a reference point $p$ with higher $\rho_N(p)$,
and generate potential points to construct a connected network and
cover $P_p$. Assume that $p^2_{t_1,t_2}$ has a higher $\rho_N$ value
than the other reference points and is selected. Because $L$ $=$
$\emptyset$ and only $sink$ is in $N$, the $sink$ in $N$ has the
shortest distance to $p^2_{t_1,t_2}$. Then, the potential points are
generated on the straight line from $sink$ to $p^2_{t_1,t_2}$ every
distance $R_t$, not including the $sink$, until $p^2_{t_1,t_2}$ is
reached or all targets in $P_{p^2_{t_1,t_2}}$ are covered. Clearly,
potential points $p_1$, $p_2$, and $p_3$ are generated accordingly.
In addition, $p^2_{t_1,t_2}$ is inserted into $Y$. After some
iterations, assume that $Y$ $=$ $\{p_{t_3}$, $p_{t_5}$, $p_{t_8}$,
$p^2_{t_1,t_2}$, $p^1_{t_6,t_7}\}$ and $14$ potential points are
generated as shown in Fig. \ref{Fig:exam_reference_points}. Because
the number of potential points is higher such that no more reference
points can be selected to cover the last target $t_4$, the function
ReGeneratePotentialPoints is called to re-generate potential points
to cover all targets in $P_{p_y}$ for each $p_y \in Y$ and to
minimize the required potential points. Assume that the potential
points re-generated by the function ReGeneratePotentialPoints are
shown in Fig. \ref{Fig:exam_regen_potential_points}. It is clear
that only $12$ potential points are required at this time. Then,
the reference point $p_{t_4}$ can be selected, and two potential
points can be generated to cover the targets in $P_{p_{t_4}}$. Then,
the deployment orders can be generated with the Hungarian method, as
shown in Fig. \ref{Fig:exam_MWSN}.

\begin{figure}
\center \subfigure{\includegraphics[width=8.5cm]{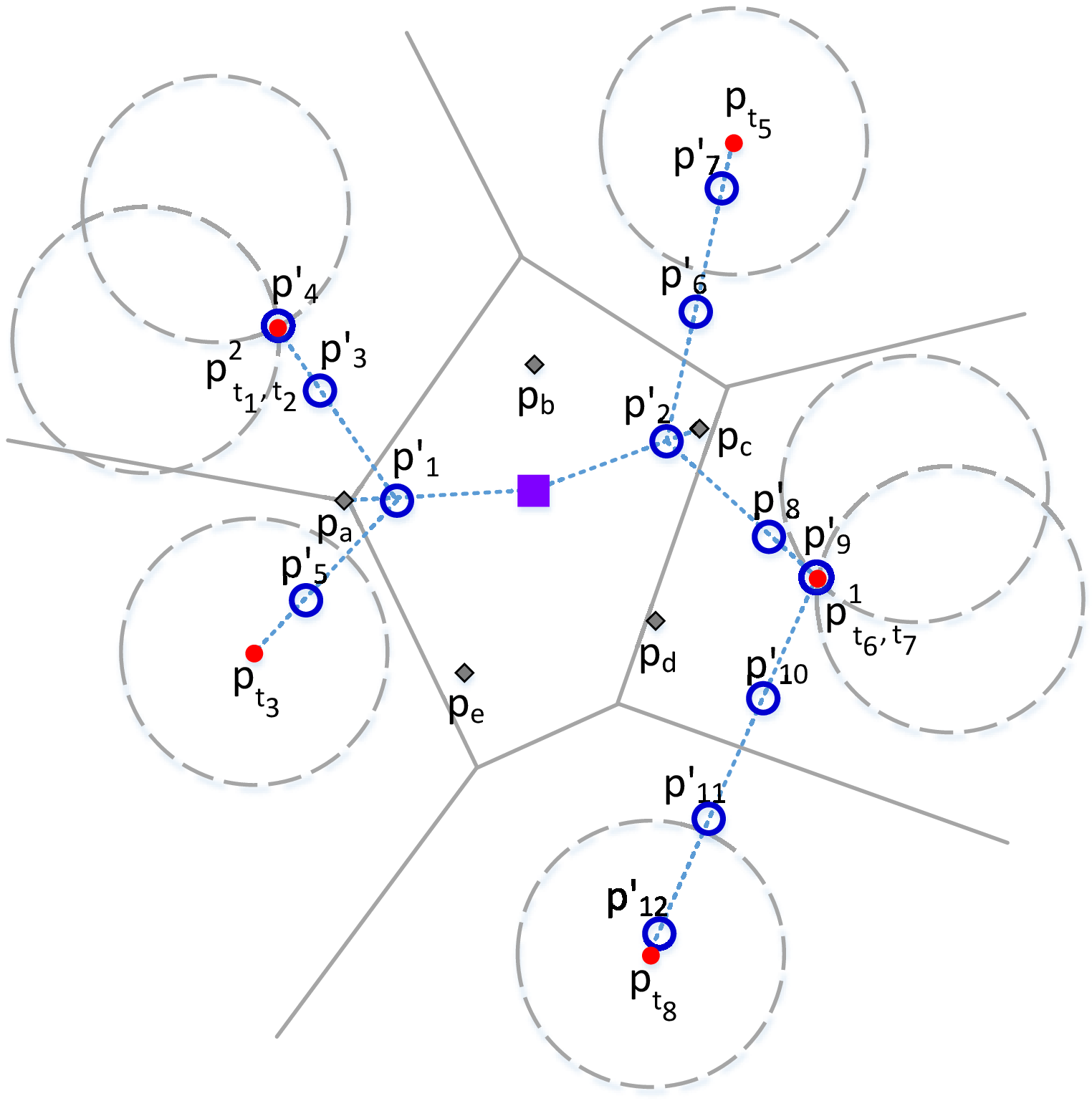}}\\
\subfigure{\includegraphics[width=6cm]{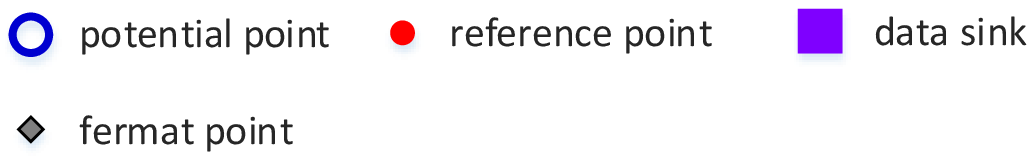}}
\caption{Example of the potential points re-generated by the
function ReGeneratePotentialPoints.}
\label{Fig:exam_regen_potential_points}
\end{figure}

\begin{algorithm} \caption{Steiner-Tree-Based Algorithm ($T$, $S$, $sink$)} \label{alg:STBA}
\begin{algorithmic}[1]

\State Let $X_1$ be the set of location points ${p}_{t_i}$ for all ${{t}_{i}}$
$\in$ $T$; and $P_{{p}_{t_i}}$ is set to $\{{t}_{i}\}$ for each ${p}_{t_i} \in X_1$

\State Let $X_2$ be the set of intersection points ${p}^1_{t_i,t_j}$ and
${p}^2_{t_i,t_j}$ for any $t_i,t_j$ $\in$ $T$; and $P_{p^1_{t_i,t_j}}$ and $P_{p^2_{t_i,t_j}}$ for any ${p}^1_{t_i,t_j}$, ${p}^2_{t_i,t_j}$ $\in$ $X_2$
are set to the sets of targets within
circles with radii ${{R}_{s}}$ centered at ${p}^1_{t_i,t_j}$ and ${p}^2_{t_i,t_j}$, respectively

\State $X$ $\gets$ $X_1$ $\cup$ $X_2$

\State $L$ $\gets$ $\emptyset$; $Y$ $\gets$ $\emptyset$

\While {there exists one target $t \in T$ not within any circles centered at $p \in L$ with radii $R_s$}

    \While{there exists one target $t \in T$ not within any circles centered at $p \in L$ with radii $R_s$}

        \State $N$ $\gets$ $\{sink\}$ $\cup$ $L$

        \State Let $Q$ be the set of points $p$ for all $p$ $\in$ $X$ with $P_p$ $-$ $\bigcup_{p_y \in Y}{P_{p_y}}$ $\neq$ $\emptyset$ and $\phi_N ({p})$ $\le$ $|S|$ $-$ $|L|$

        \If {$|Q| > 0$}

            \State Select a point $p_i$ from $Q$ such that $\rho_N (p_i)$ $>$ $\rho_N (p_j)$ or ($\rho_N (p_i)$ $=$ $\rho_N (p_j)$ and $\eta_N(p_i)$ $\le$ $\eta_N(p_j)$) for all ${p_j}$ $\in$ $Q$ $-$ $\{p_i\}$

            \State Let ${p_k}$ be the point in $N$ that has the shortest distance to ${p_i}$

            \State Generate potential points on the straight line from ${p_k}$ to ${p_i}$ every distance $R_t$, not including ${p_k}$, until ${p_i}$ is reached or all targets in $P_{p_i}$ are covered, and add the points to $L$

            \State $Y$ $\gets$ $Y$ $\cup$ $\{{p}_{i}\}$

        \Else

            \State break

        \EndIf

    \EndWhile

    \State $L'$ $\gets$ $ReGeneratePotentialPoints(Y, sink)$

    \If {$\left| L' \right|$ $<$ $\left| L \right|$}

        \State $L$ $\gets$ $L'$

    \Else

        \State break

    \EndIf

\EndWhile

\State Let $D$ be a set of deployment orders generated by the
Hungarian method \cite{NAV:NAV3800020109} with input $L$ and $S$

\State \Return $D$

\end{algorithmic}
\end{algorithm}

While given $Y$, which is used to store the selected reference
points, the goal of the function ReGeneratePotentialPoints is
designed to re-generate potential points to cover all targets in
$P_{p_y}$ for each $p_y \in Y$ and form a network connected with the
data sink such that the number of required potential points is
minimized. For this purpose, our idea is to find a tree in the plane
to connect each $p_y \in Y$ and the data sink $sink$, in which each
tree node is either a $p_y \in Y$ or another point, called the
intermediate point hereafter, in the plane, such that the total
length of the edges in the tree is minimized. To find more suitable
intermediate points in the plane, a set of points $F$ is
constructed by finding the Fermat points \cite{6023087} for $Y$ and
$sink$ in the plane. Then, we transfer to the Node-Weighted Steiner
Tree (NWST) problem to find a tree $\Upsilon$ that spans the data
sink, each $p_y \in Y$, and some points in $F$, such that the total
length of the edges in the tree is minimized. When the tree $\Upsilon$
is obtained, the potential points are re-generated by the tree
structure.

While given three vertices (or points) of a triangle $\Delta$, the
Fermat point $p$ is a point in the plane such that the total
distance from each of the three vertices to $p$ is the minimum. The
Fermat point can be obtained with the following rule. If the triangle
$\Delta$ has an angle not less than $120$ degrees, the Fermat point
is located at the obtuse angled vertex of the $\Delta$. Otherwise, we
can construct an equilateral triangle on each of any two sides of
the $\Delta$. Then, the Fermat point is located at the point
intersected by the two lines that are drawn from each new vertex to
the opposite vertex of the $\Delta$. To find usable intermediate
points in the plane, a set of the Fermat points $F$ is constructed
for the $Y$ and the $sink$ as follows. While given $Y$ and $sink$, a
Voronoi diagram for the points in $Y$ $\cup$ $\{sink\}$ is first
constructed. The Voronoi diagram for $Y$ $\cup$ $\{sink\}$ is the
polygonal partition of the plane. In addition, each polygon $Z(p)$
is associated with a point in $Y$ $\cup$ $\{sink\}$ such that all
points in $Z(p)$ are closer to $p$ than other points in $Y$ $\cup$
$\{sink\}$. Two points $p_i$, $p_j$ $\in$ $Y$ $\cup$ $\{sink\}$ are
said to be neighbors in the Voronoi diagram if $Z(p_i)$ and $Z(p_j)$
share a common boundary in the Voronoi diagram. By the generated
Voronoi diagram, $F$ can be constructed by finding the Fermat points
for any three points $p_i$, $p_j$, $p_z$ $\in$ $Y$ $\cup$ $\{sink\}$
that are neighbors to each other. Take Fig.
\ref{Fig:exam_regen_potential_points}, for example. Assume that $Y$
$=$ $\{p_{t_3}$, $p_{t_5}$, $p_{t_8}$, $p^2_{t_1,t_2}$,
$p^1_{t_6,t_7}\}$. By $Y$ and the $sink$, the Voronoi diagram can be
constructed as shown in Fig. \ref{Fig:exam_regen_potential_points}.
In addition, because the $sink$, $p_{t_3}$, and $p^2_{t_1,t_2}$ are
neighbors to each other, the corresponding Fermat point can then be
constructed as the point $p_a$ in Fig.
\ref{Fig:exam_regen_potential_points}. In Fig.
\ref{Fig:exam_regen_potential_points}, five Fermat points, including
$p_a$, $p_b$, $p_c$, $p_d$, and $p_e$, are generated.

Here, we show how to transfer to the NWST problem to find a tree
$\Upsilon$, which spans the $sink$, each $p_y \in Y$, and some
points in $F$, such that the total length of the edges in the tree is
minimized. In the NWST problem, when given an undirected weighted
graph $G(V_G,E_G,\kappa)$ and a set of terminal nodes $TS$, the
problem is to find a tree $\Upsilon(V_\Upsilon,E_\Upsilon)$ in $G$
with $TS \subseteq V_\Upsilon$ and $V_\Upsilon \subseteq V_G$ such
that the total weight of the edges and nodes in $\Upsilon$ is minimized,
where $V_G$ (or $V_\Upsilon$) is a set of nodes, $E_G$ (or
$E_\Upsilon$) is a set of edges connecting two nodes in $V_G$ (or
$V_\Upsilon$), and $\kappa (v)$ (or $\kappa ((u,v))$) denotes the
weight of node $v \in V_G$ (or edge $(u,v)$ $\in$ $E_G$). Let $TS$
$=$ $Y$ $\cup$ $\{sink\}$ and $V$ $=$ $TS$ $\cup$ $F$. Also let
$G(V_G,E_G,\kappa)$ be a weighted complete graph generated by $V$,
where $V_G = V$, $E$ is the set of $(p_i, p_j)$ for any $p_i, p_j$
in $V$, $\kappa(p_i, p_j)$ is the distance between $p_i$ and $p_j$
for any $p_i, p_j$ in $V$, and $\kappa(p_i)$ $=$ $0$ for any $p_i$
in $V$. It is clear that when $sink$, $Y$, and $F$ are given, the
problem is to find a node-weighted Steiner tree with the minimum total
weight. Therefore, when $G$ and $TS$ are generated by $sink$, $Y$,
and $F$, a method, called the modified Klein and Ravi algorithm,
extended from the algorithm \cite{Klein1995104} proposed by Klein
and Ravi used for the NWST problem, is proposed to find a tree
$\Upsilon(V_\Upsilon,E_\Upsilon)$, which can span the $sink$, each
$p_y \in Y$, and some points in $F$, such that the total length of
the edges in $\Upsilon$ is minimized. The details of the Klein and
Ravi algorithm are described as follows. In the Klein and Ravi
algorithm, initially, each terminal node in $TS$ is in a tree by
itself. Then, the trees are iteratively selected and merged into a
bigger tree until only one tree is left. Let $\Gamma$ be the set of
all trees, and let $\xi (v,\Upsilon_i)$ denote the minimum sum of
the weights of the nodes and edges in the path from $v$ to the tree
$\Upsilon_i$, excluding its endpoints. The quotient cost of a node
$v$ is defined in Eq. \ref{eq:quotient_cost}:
\begin{equation}\label{eq:quotient_cost}
\min_{\Gamma' \subseteq \Gamma,|\Gamma'| \geq
2}{\frac{\kappa(v)+\sum_{\Upsilon_i \in \Gamma'}{\xi
(v,\Upsilon_i)}}{|\Gamma'|}}.
\end{equation}
In each iteration of mergence, the node with the minimum quotient cost
is first selected. Then, the corresponding paths and trees
selected in evaluating the quotient cost are merged into one tree.
Take a weighted complete graph with four nodes $p_1$, $p_2$, $p_3$,
and $p_4$, for example, where $p_1$, $p_2$, and $p_3$ are terminal
nodes, $\kappa(p_1,p_2)$ $=$ $\kappa(p_1,p_3)$ $=$ $\kappa(p_2,p_3)$
$=$ $10$, $\kappa(p_1,p_4)$ $=$ $\kappa(p_2,p_4)$ $=$
$\kappa(p_3,p_4)$ $=$ $\frac{10}{\sqrt{3}}$, and $\kappa(p_1)$ $=$
$\kappa(p_2)$ $=$ $\kappa(p_3)$ $=$ $\kappa(p_4)$ $=$ $0$.
Initially, each of $p_1$, $p_2$, and $p_3$ is in a tree by itself.
Let $\Upsilon_1$, $\Upsilon_2$, and $\Upsilon_3$ be the trees
that include only $p_1$, $p_2$, and $p_3$, respectively. Clearly, the
quotient cost of $p_1$, $p_2$, or $p_3$ is $\frac{10}{2}$ $=$ $5$,
and the quotient cost of $p_4$ is $\frac{\frac{10}{\sqrt{3}} \times
3}{3}$ $=$ $\frac{10}{\sqrt{3}}$ $>$ $5$. Therefore, $\Upsilon_1$
and $\Upsilon_2$ will be merged into a bigger tree $\Upsilon_{1,2}$
by inserting an edge $(p_1,p_2)$. Finally, $\Upsilon_3$ and
$\Upsilon_{1,2}$ will be merged into a final tree by inserting an
edge $(p_1,p_3)$. It is clear that the optimum solution for this
case is a tree with nodes $p_1$, $p_2$, $p_3$, $p_4$ and edges
$(p_1,p_4)$, $(p_2,p_4)$, $(p_3,p_4)$. To achieve this, the modified
Klein and Ravi algorithm is therefore proposed here with a
modification of the definition of the quotient cost. In the modified
Klein and Ravi algorithm, the quotient cost of a node $v$ is
defined in Eq. \ref{eq:modified_quotient_cost}:
\begin{equation}\label{eq:modified_quotient_cost}
\min_{\Gamma' \subseteq \Gamma,|\Gamma'| \geq
2}{\frac{\kappa(v)+\sum_{\Upsilon_i \in \Gamma'}{\xi
(v,\Upsilon_i)}}{|\Gamma'| - 1}}.
\end{equation}
In the modified Klein and Ravi algorithm, the quotient cost of
$p_1$, $p_2$, or $p_3$ is $\frac{20}{2}$ $=$ $10$, and the quotient
cost of $p_4$ is $\frac{\frac{10}{\sqrt{3}} \times 3}{2}$ $=$
$5\sqrt{3}$ $<$ $10$. Therefore, $\Upsilon_1$, $\Upsilon_2$, and
$\Upsilon_3$ will be merged into a tree with inserting edges
$(p_1,p_4)$, $(p_2, p_4)$, and $(p_3, p_4)$.

Take Fig. \ref{Fig:exam_regen_potential_points} as another example.
When the data sink $sink$, the reference points $p_{t_3}$,
$p_{t_5}$, $p_{t_8}$, $p^2_{t_1,t_2}$, $p^1_{t_6,t_7}$, and the
Fermat points $p_a$, $p_b$, $p_c$, $p_d$, $p_e$ are given, the
weighted complete graph can be constructed accordingly as shown in
Fig. \ref{Fig:exam_steiner_tree}. In Fig.
\ref{Fig:exam_steiner_tree}, the $sink$ and the five reference
points are terminal nodes. By the modified Klein and Ravi
algorithm, a tree that spans the $sink$ and the five reference
points is constructed, where the edges in the tree include
$(sink,p_a)$, $(sink,p_c)$, $(p_a,p_{t_3})$, $(p_a,p^2_{t_1,t_2})$,
$(p_c,p_{t_5})$, $(p_c,p^1_{t_6,t_7})$, and $(p^1_{t_6,t_7},
p_{t_8})$.

\begin{figure}
\center \subfigure{\includegraphics[width=8cm]{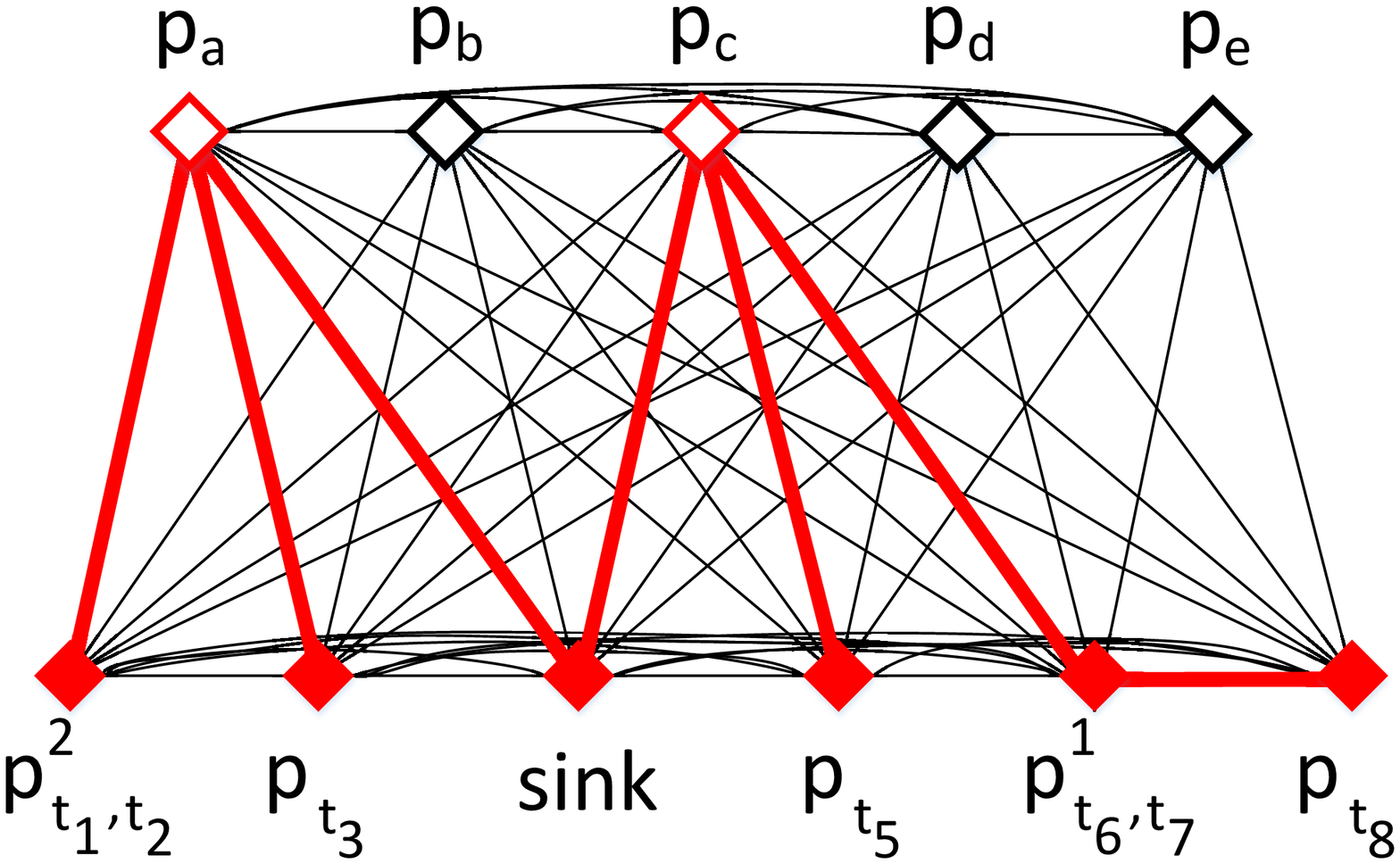}}\\
\subfigure{\includegraphics[width=6cm]{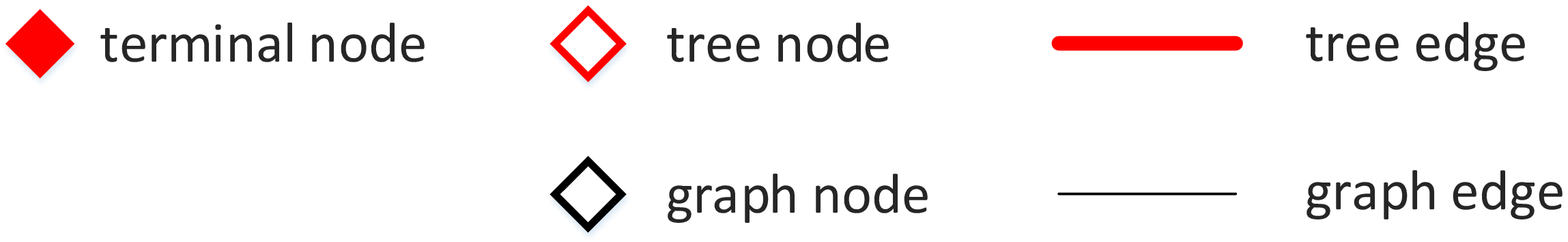}}
\caption{Example of constructing a tree to span all terminal nodes
with the modified Klein and Ravi algorithm, where the terminal nodes
include $sink$, $p_{t_3}$, $p_{t_5}$, $p_{t_8}$, $p^2_{t_1,t_2}$,
and $p^1_{t_6,t_7}$.} \label{Fig:exam_steiner_tree}
\end{figure}

When the tree $\Upsilon(V_\Upsilon,E_\Upsilon)$ is obtained, our
idea is to deploy potential points along the paths from the data
sink, through tree edges, to cover all targets in $P_{p_y}$ for each
$p_y \in Y$. Let $p.\gamma$ be a point for any tree node $p \in
V_\Upsilon$ that can represent $p$ to connect to the potential
points in the other tree edges. Initially, $p.\gamma$ is set to $sink$
if $p$ $=$ $sink$; otherwise, $p.\gamma$ is initialized to $null$.
For any tree edge $(p_i,p_j) \in E_\Upsilon$ with $p_i.\gamma \neq
null$, we deploy potential points on the straight line from
$p_i.\gamma$ to $p_j$ every distance $R_t$, not including
$p_i.\gamma$. If $p_j$ is not a Fermat point in $F$, the potential
points are generated until ${p_j}$ is reached or all targets in
$P_{p_j}$ are covered; otherwise, the potential points are generated
until ${p_j}$ is reached or the last generated potential point can
cover ${p_j}$. The ${p_j}.\gamma$ is then set to the last generated
potential point. In addition, the generated potential points are
recorded. The process is repeated until all tree edges are
referenced to generate potential points. The details can be seen in
the function ReGeneratePotentialPoints. Take Fig.
\ref{Fig:exam_steiner_tree}, for example. When the tree in Fig.
\ref{Fig:exam_steiner_tree} is obtained, edge $(sink,p_a)$ or edge
$(sink,p_b)$ is selected to generate potential points because
$sink.\gamma = sink$. Assume that edge $(sink,p_a)$ is selected
first. Because $p_a$ is a Fermat point, the potential points are
generated on the straight line from $sink$ to $p_a$ every distance
$R_t$, not including $sink$, until $p_a$ is reached or the last
generated point can cover $p_a$. As shown in Fig.
\ref{Fig:exam_regen_potential_points}, it is clear that only
potential point $p'_1$ is generated because $p'_1$ can cover $p_a$.
In addition, $p_a.\gamma$ is set to $p'_1$. Assume that edge
$(p_a,p^2_{t_1,t_2})$ is selected later. Because $p^2_{t_1,t_2}$ is
not a Fermat point, the potential points are generated on the
straight line from $p_a.\gamma$ ($=$ $p'_1$) to $p^2_{t_1,t_2}$
every distance $R_t$, not including $p'_1$, until $p^2_{t_1,t_2}$ is
reached or all targets in $P_{p^2_{t_1,t_2}}$ are covered. As shown
in Fig. \ref{Fig:exam_regen_potential_points}, clearly, potential
points $p'_3$ and $p'_4$ are generated, and $p'_4$ is the last
generated potential point for edge $(p_a,p^2_{t_1,t_2})$ because
$p'_4$ is located at $p^2_{t_1,t_2}$. Using the same process, $12$
potential points can be generated as in Fig.
\ref{Fig:exam_regen_potential_points}.

\begin{algorithm} \label{alg:STBA_R}
\begin{algorithmic}[1]

\Function {ReGeneratePotentialPoints} {$Y$, $sink$}

\State $F$ $\gets$ $\emptyset$

\State Generate a Voronoi diagram for $Y$ $\cup$ $\{sink\}$

\For {any three points $p_i$, $p_j$,
$p_k$ $\in$ $Y$ $\cup$ $\{sink\}$ that are neighbors to each other in
the generated Voronoi diagram}

    \State Let $p$ be the node located at the Fermat point generated by $p_i$, $p_j$, and $p_k$

    \State $F$ $\gets$ $F$ $\cup$ $\{p\}$

\EndFor

\State $TS$ $\gets$ $Y$ $\cup$ $\{sink\}$; $V$ $\gets$ $TS$ $\cup$
$F$

\State Construct a weighted complete graph $G(V_G,E_G,\kappa)$ by
$V$

\State Construct a tree $\Upsilon(V_\Upsilon,E_\Upsilon)$ by the
modified Klein and Ravi algorithm with input $G$ and $TS$

\For {each $p \in V_\Upsilon$}

\State $p.\gamma$ $\gets$ $sink$ if $p$ $=$ $sink$; otherwise,
$p.\gamma$ $\gets$ $null$

\EndFor

\State $L'$ $\gets$ $\emptyset$

\While {$E_\Upsilon$ $\neq$ $\emptyset$}

    \For {each $(p_i, p_j)$ $\in$ $E_\Upsilon$ with $p_i.\gamma$ $\neq$ $null$}

        \If {${p_j}$ is not a Fermat point}

            \State Generate potential points on the straight line from ${p_i.\gamma}$ to ${p_j}$ every distance $R_t$, not including ${p_i.\gamma}$, until ${p_j}$ is reached or all targets in $P_{p_j}$ are covered; and then set ${p_j}.\gamma$ to the last generated point and add all points to $L'$

        \Else

            \State Generate potential points on the straight line from ${p_i.\gamma}$ to ${p_j}$ every distance $R_t$, not including ${p_i.\gamma}$, until ${p_j}$ is reached or the last generated point can cover ${p_j}$; and then set ${p_j}.\gamma$ to the last generated point and add all points to $L'$

        \EndIf

        \State $E_\Upsilon$ $\gets$ $E_\Upsilon$ $-$ $\{(p_i, p_j)\}$

    \EndFor

\EndWhile

\State \Return $L'$

\EndFunction

\end{algorithmic}
\end{algorithm}

The time complexity of the STBA is provided in Theorem
\ref{thm:STBA_time_complexity}.

\begin{thm}\label{thm:STBA_time_complexity}
The time complexity of the STBA is bounded in $O(m^5 + n^3)$, where
$m$ is the number of targets and $n$ is the number of mobile
sensors.
\end{thm}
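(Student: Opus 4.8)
The plan is to decompose the STBA into its four phases---construction of the reference set $X$ together with the covered-target sets $P_p$, the nested selection loops, the repeated calls to ReGeneratePotentialPoints, and the concluding Hungarian assignment---and to bound each phase separately before summing. First I would fix the sizes of the objects involved: $|X_1|=m$, and since two circles of radius $R_s$ meet in at most two points, $|X_2|\le 2\binom{m}{2}$, so $|X|=O(m^2)$. The condition $\phi_N(p)\le |S|-|L|$ used when forming $Q$ (Line~8) guarantees the algorithm never deploys more than $|S|=n$ potential points, whence $|L|=O(n)$ and $|N|=|L|+1=O(n)$ throughout. Building $X$ with every associated $P_p$ costs $O(m^2)$ intersection points, each requiring an $O(m)$ scan over $T$, hence $O(m^3)$.

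The crux of the argument is a monotonicity bound on the loop iterations. Each point added to $Y$ (Line~13) satisfies $P_{p}-\bigcup_{p_y\in Y}P_{p_y}\neq\emptyset$, so it covers at least one previously uncovered target; since $Y$ only grows and there are $m$ targets, the total number of inner-loop passes that enlarge $Y$, summed over all outer iterations, is at most $m$. For the outer loop I would observe that if some outer iteration adds nothing to $Y$, then ReGeneratePotentialPoints is invoked on the same $Y$ as the previous call and, being deterministic, returns the same $L'$, forcing $|L'|=|L|$ and triggering the break; hence every non-terminal outer iteration strictly enlarges $Y$, giving $O(m)$ outer iterations. Consequently the number of $Q$-constructions (one per inner-loop pass, counting the terminating pass) is $O(m)$. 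Each such pass scans all $O(m^2)$ points of $X$, spending $O(m)$ to test coverage and $O(n)$ to evaluate $\eta_N$ over $N$, i.e. $O(m^2(m+n))$ per pass, so the selection loops cost $O(m^4+m^3 n)$ in total.

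The step I expect to be the main obstacle is bounding a single call to ReGeneratePotentialPoints, and in particular the modified Klein and Ravi procedure. The Voronoi diagram on $Y\cup\{sink\}$ (size $O(m)$) is built in $O(m\log m)$, and because its Delaunay dual has only $O(m)$ triangles, $|F|=O(m)$ Fermat points arise, so the complete graph $G$ has $O(m)$ nodes and $O(m^2)$ edges. The dominant cost is Klein and Ravi itself: it executes $O(|TS|)=O(m)$ merge phases, and in each phase it must recompute, for every node $v$, the node-weighted shortest-path value $\xi(v,\Upsilon_i)$ to each current tree (an $O(m^2)$ Dijkstra per tree over $O(m)$ trees, i.e. $O(m^3)$) and then select the optimal $\Gamma'$ by sorting the $\xi$-values at each node ($O(m^2\log m)$). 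This gives $O(m^3)$ per phase, hence $O(m^4)$ per call; with $O(m)$ outer iterations each triggering one call, ReGeneratePotentialPoints contributes $O(m^5)$ overall.

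Finally, the closing Hungarian assignment on the padded $n\times n$ cost matrix runs in $O(n^3)$. Summing the phases yields $O(m^3)+O(m^4+m^3 n)+O(m^5)+O(n^3)=O(m^5+m^3 n+n^3)$, and I would conclude by absorbing the cross term through a short case split: if $n\le m^2$ then $m^3 n\le m^5$, while if $n>m^2$ then $m^3<n^{3/2}$ gives $m^3 n<n^{5/2}<n^3$. In either case $m^3 n=O(m^5+n^3)$, so the total time complexity is $O(m^5+n^3)$, as claimed.
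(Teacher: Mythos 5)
Your proposal is correct and follows the same overall decomposition as the paper's proof: $|X|=O(m^2)$ reference points built in $O(m^3)$, an $O(m)$ bound on loop iterations driven by the fact that each selected $p$ covers a fresh target, an $O(m^4)$ bound per call to ReGeneratePotentialPoints (with the same $O(|Y|^3)$-per-merge-phase, $O(|Y|)$-phases accounting of the modified Klein and Ravi step), and an $O(n^3)$ Hungarian step, summing to $O(m^5+n^3)$. Where you diverge is in being noticeably more careful at two points the paper glosses over. First, you amortize the inner-loop passes globally ($|Y|$ grows by one target-covering point per pass, and a non-enlarging outer iteration re-runs ReGeneratePotentialPoints on the same $Y$ and must break), whereas the paper simply charges $O(m^4)$ to "the inner while loop" of each outer iteration without explaining why multiple inner passes per outer iteration do not multiply that bound. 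Second, you account for the $O(n)$ cost of evaluating $\eta_N(p)$ over a connected set $N$ of size up to $n+1$, which produces the cross term $m^3n$, and you then absorb it into $O(m^5+n^3)$ by the case split on $n\lessgtr m^2$; the paper instead asserts a flat $O(m^2)$ per-element check, which silently assumes $n=O(m^2)$. Your version therefore proves the stated bound under weaker implicit assumptions and patches a small but real gap in the published argument; the only caveat is that your $O(m)$ coverage test inside each pass tacitly relies on maintaining the set of already-covered targets incrementally, which is standard but worth stating.
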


\begin{proof}
In the function ReGeneratePotentialPoints, when $Y$ and $sink$ are
given, it requires at most $O((|Y|+1)^3)$ $=$ $O(|Y|^3)$ time to
find the Fermat points because any three neighboring points in $Y
\cup \{sink\}$ in the Voronoi diagram have to be checked. By
\cite{Lee1980}, we have that
at most $2 \times (|Y|+1)$ combinations of three neighboring points
in $Y \cup \{sink\}$ in the Voronoi diagram, and thus, the generated
weighted complete graph has at most $3 \times (|Y|+1)$ nodes. In the
modified Klein and Ravi algorithm, each node in the weighted
complete graph has to compute its distances to all trees in each
iteration \cite{Klein1995104}, and thus, each iteration
requires at most $O((3 \times (|Y|+1))^3)$ time. Because at least
two trees are merged into one tree in each iteration, at most $|Y|$
iterations are required, and thus, the modified Klein and Ravi
algorithm requires at most $O(|Y| \times (3 \times (|Y|+1))^3)$ $=$
$O(|Y|^4)$ time. Because at most $2 \times {\left(\begin{matrix} 3\times(|Y|+1) \\
2 \\ \end{matrix} \right)}$ edges in $E_\Upsilon$,
it requires at most $O(|Y|^2)$ time to generate potential points.
Therefore, the function ReGeneratePotentialPoints requires at most
$O(|Y|^3)$ $+$ $O(|Y|^4)$ $+$ $O(|Y|^2)$ $=$ $O(|Y|^4)$ time.

In the STBA, because there are $m$ targets in $T$, it
requires at
most $O(m^2)$ time to generate $2 \times {\left(\begin{matrix} m \\
2 \\ \end{matrix} \right)}$ $=$ $m^2-m$ intersection points, and
thus, it requires $O(m \times m^2)$ $=$ $O(m^3)$ time to compute
$P_{p^1_{t_i,t_j}}$ and $P_{p^2_{t_i,t_j}}$ for any
${p}^1_{t_i,t_j}$, ${p}^2_{t_i,t_j}$ $\in$ $X_2$. In addition, it is
clear that it requires $O(m)$ time to compute $X_1$. Because $X$ $=$
$X_1$ $\cup$ $X_2$, it requires $O(m^3)$ $+$ $O(m)$ $=$ $O(m^3)$
time to compute $X$. In each iteration of the outer while loop, at
least one target will be covered in the inner while loop, or the
function ReGeneratePotentialPoints is called once. Because at most
$(m^2-m)$ $+$ $m$ $=$ $m^2$ elements exist in $X$, at most $m^2$
elements in $X$ have to be checked in the inner while loop. The
inner while loop requires at most $O(m^2 \times m^2)$ $=$ $O(m^4)$
time because each element $p$ in $X$ requires at most $O(m^2)$ time
to verify that $P_p$ $-$ $\bigcup_{p_y \in Y}{P_{p_y}}$ $\neq$
$\emptyset$ and $\phi_N ({p})$ $\le$ $|S|$ $-$ $|L|$. In addition,
the function ReGeneratePotentialPoints requires at most $O(m^4)$
time because at most $m$ reference points are included in $Y$ to
cover targets. Therefore, each iteration of the outer while loop
requires at most $O(m^4)$ $+$ $O(m^4)$ $=$ $O(m^4)$ time. Because at
least one target will be covered in each iteration of the outer
while loop, except for the final iteration, at most $m+1$ iterations
are required in the outer while loop. Therefore, the outer while
loop requires at most $O((m+1) \times m^4)$ $=$ $O(m^5)$ time.
Because the $n \times n$ cost matrix is required for the Hungarian
method, the Hungarian method requires at most $O(n^3)$ time to
compute deployment orders \cite{NAV:NAV3800020109}. Therefore, the STBA
requires at most $O(m^3)$ + $O(m^5)$ + $O(n^3)$ $=$ $(m^5 + n^3)$,
which completes the proof.
\end{proof}

\section{Performance Evaluation}\label{section:Simulation}

In this section, simulations were used to evaluate the performance
of the STBA. In the simulations, $10$-$400$ mobile sensors and
$10$-$50$ targets were randomly deployed in a $600 \times 600$
square area, where the sensing range $R_s$ and the transmission
range $R_t$ of the mobile sensors were set to $20$. In addition, the
data sink was deployed at the center of the sensing field. Moreover, the value of $\omega$ of each target was randomly selected
from the interval $[1,10]$. In the following simulation, the results
were obtained by averaging $100$ data.

To demonstrate the performance of the STBA, the heuristic algorithm,
called the target-based Voronoi greedy algorithm $+$ Euclidean
minimum spanning tree-Hungarian algorithm (TV-Greedy+ECST-H) was
compared. The TV-Greedy+ECST-H is used for the problem of scheduling
mobile sensors to cover all targets such that the total movement
distance of the mobile sensors is minimized. The TV-Greedy+ECST-H uses
targets' locations to divide the sensing field into Voronoi
partitions, which also divides mobile sensors into independent
groups. Each target is covered by the nearest sensor selected from
the target's group or the target's neighboring groups. Then,
a Euclidean minimum spanning tree is adopted to determine the
connected paths to the data sink such that mobile sensors can be
deployed on the paths. Because the TV-Greedy+ECST-H can be used only
for the MWSN with enough mobile sensors to cover all targets and
form a connected network, a heuristic, termed the greedy-based
algorithm (GBA), is thus proposed here for the MWTCSCLMS problem. In
the GBA, the idea is to iteratively select an adaptive target $t_i$
from $T$, deploy a potential point at the location of the $t_i$,
that is, $p_{t_i}$, and form a bigger network connected with the
data sink and the $p_{t_i}$, until there are not enough mobile sensors or
all targets are covered. Let $N_{GBA}$ be a set of the points
$p_{t_j}$ located at the selected targets $t_j$ and the data sink.
Also let $\eta_{N_{GBA}}(p)$ denote the minimum distance between
point $p$ and each point in $N_{GBA}$. When an adaptive target $t_i$
is selected to form a bigger network connected with $N_{GBA}$, the
GBA is to separate the straight line between ${{t}_{i}}$ and
${{t}_{min}}$ into $\left\lceil
\frac{\eta_{N_{GBA}}(p_{t_i})}{{{R}_{t}}} \right\rceil$ equal parts
by potential points, where $t_{min}$ denotes the target whose corresponding
location point in $N_{GBA}$ has the minimum distance to the
$p_{t_i}$. Here, let $\phi'_{N_{GBA}} (p)$ denote $\left\lceil
\frac{\eta_{N_{GBA}}(p)}{{{R}_{t}}} \right\rceil$. To find an
adaptive target, a new metric with a given $N_{GBA}$ for each target
$t_j \in T$ is therefore defined in Eq. \ref{eq:GBA:profit}:
\begin{equation}\label{eq:GBA:profit}
\rho'_{N_{GBA}} (t_j)= \frac{t_j.\omega}{\phi'_{N_{GBA}} (p_{t_j})}.
\end{equation}
In the GBA, $N_{GBA}$ is initialized to be $\{sink\}$. The selection
of an adaptive target is similar to selecting a $p$ with higher
$\rho_{N} (p)$ in the STBA. In each iteration, the target $t_i$ with
higher $\rho'_{N_{GBA}} (t_i)$ is selected, and the corresponding
potential points are generated to form a connected network with the
data sink and the $p_{t_i}$. If two or more targets $t$ have the
same $\rho'_{N_{GBA}} (t)$, the $t$ with lowest
$\eta_{N_{GBA}}(p_t)$ is selected. Then, the $p_{t_i}$ is
inserted into $N_{GBA}$. The process is iteratively executed until
the mobile sensors are not enough to select any target or all targets
are selected. When the potential points are determined, the cost
matrix is generated in the same way as that in the STBA, which is
used for the Hungarian method \cite{NAV:NAV3800020109} to generate
deployment orders.

To compare the STBA with the TV-Greedy+ECST-H, the WMCBA, and the
GBA, three MWSN scenarios were considered in the simulation. In the
first MWSN scenario, enough mobile sensors were provided such that
all targets can be fully covered and form a connected network, where
$200$-$400$ mobile sensors were randomly deployed in the sensing
field, the value of $\omega$ of each target was set to $1$, and the
TV-Greedy+ECST-H, the GBA, and the STBA could work here. In the second
MWSN scenario, the MWSN was the same as that in the RMWTCSCLMS
problem; that is, there may not be enough mobile sensors to cover all
targets, but the transmission range was large enough such that any
two mobile sensors (or any mobile sensor and the data sink) could
communicate with each other, where $10$-$30$ mobile sensors were randomly deployed in the sensing field, $R_t$
was set to $\infty$, and the WMCBA, the GBA, and the STBA could work
here. In the third scenario, the MWSN was the same as that in the
MWTCSCLMS problem, where $25$-$175$ mobile sensors were randomly
deployed in the sensing field, and the GBA and the STBA could work
here. In addition, the TV-Greedy+ECST-H, the WMCBA, the GBA, and the
STBA were compared in terms of the total number of mobile
sensors used, the total movement distance, and the total weight of the
covered targets. The first, second, and third MWSN scenarios
are discussed in Section \ref{section:sim:fullycovered}, Section
\ref{section:sim:RMWTCSCLMS}, and Section
\ref{section:sim:MWTCSCLMS}, respectively.

\subsection{Dense MWSNs}\label{section:sim:fullycovered}

\begin{figure}
\center
\subfigure[]{\includegraphics[width=4.25cm]{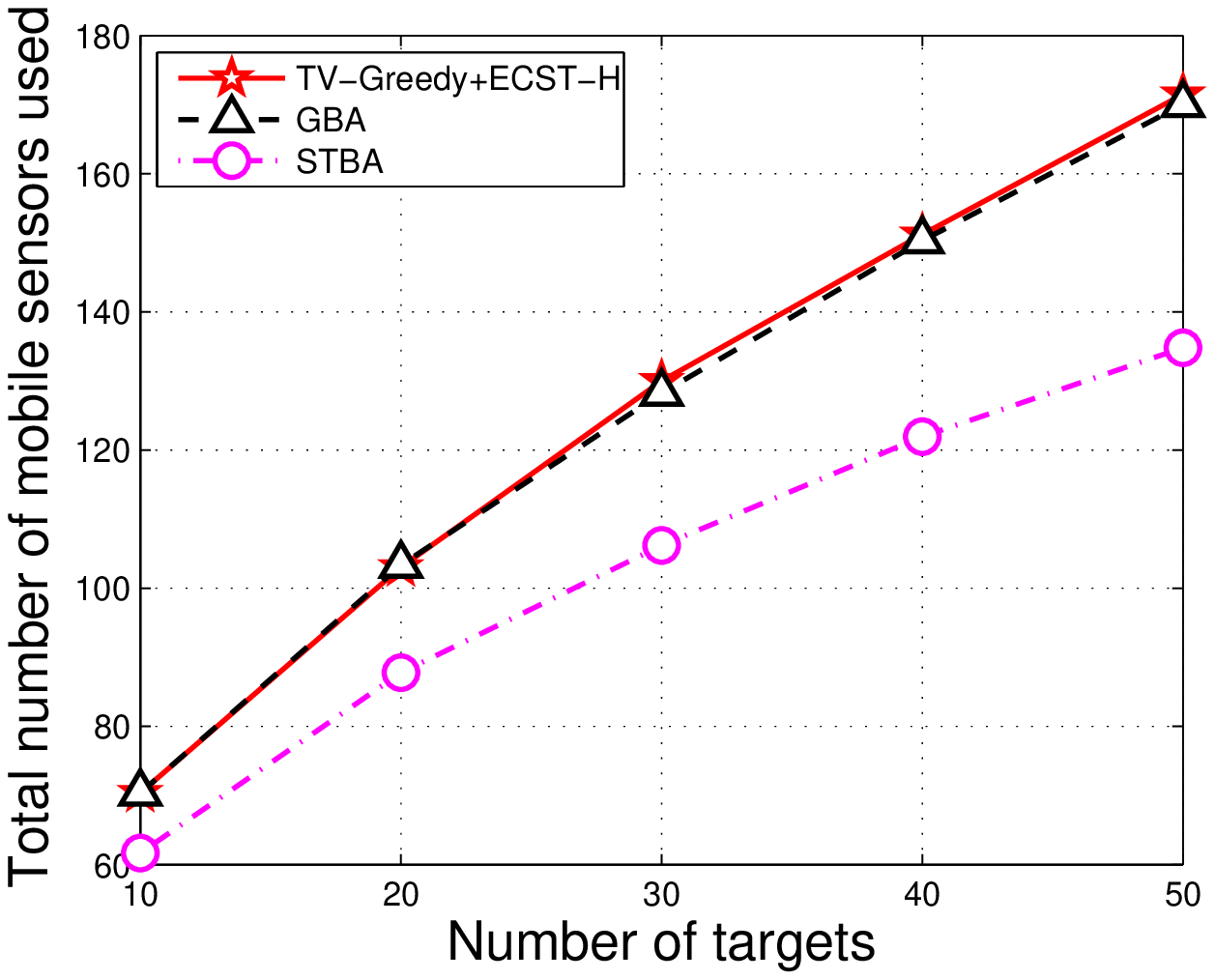}
\label{Fig:target_num_sensor}}
\subfigure[]{\includegraphics[width=4.25cm]{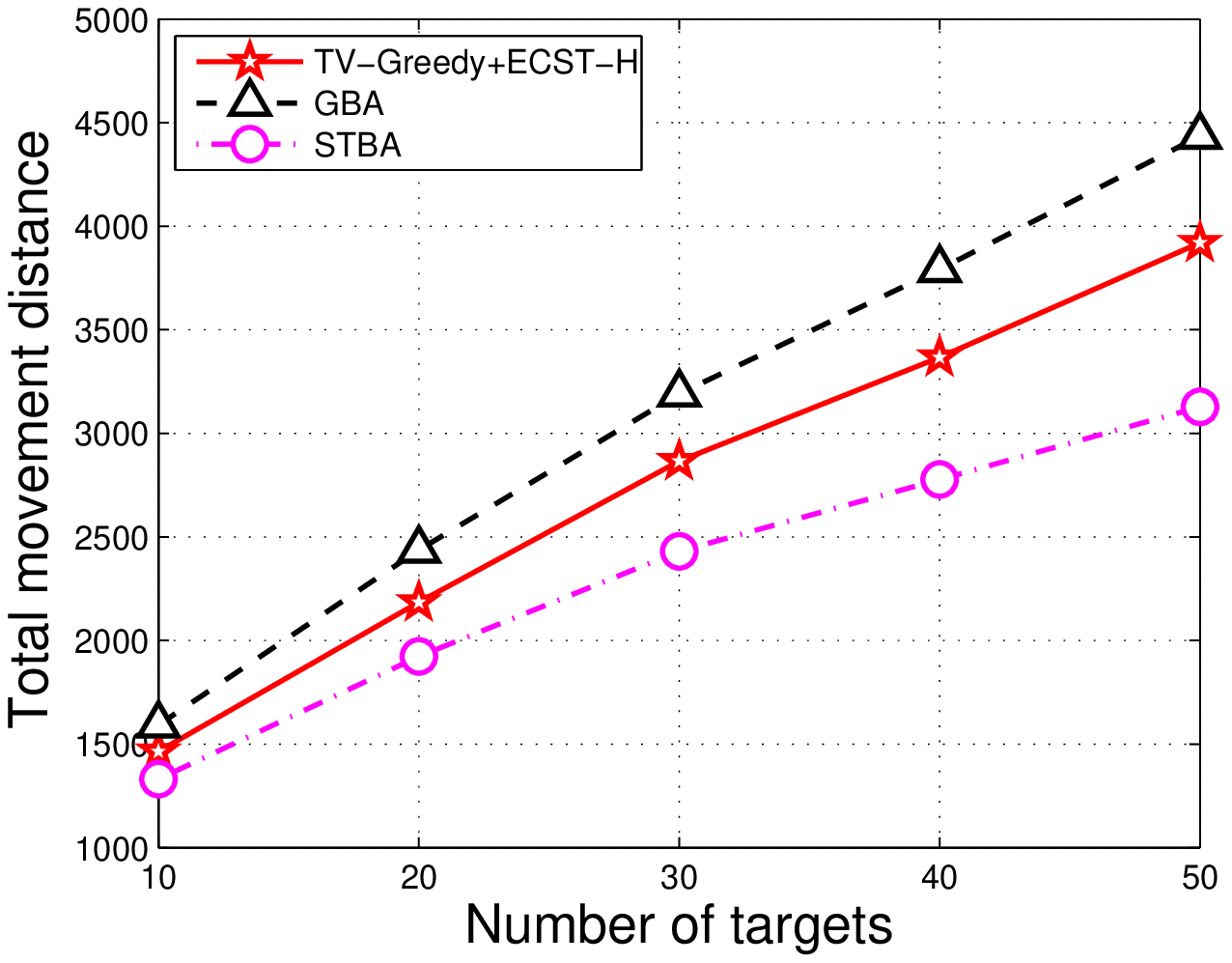}
\label{Fig:target_distance_full}}
\caption{The total number of mobile sensors used and the total movement distance required in MWSNs whose number of targets ranges from $10$ to $50$. The required total number of mobile sensors used and the total movement distance are shown in (a) and (b), respectively.}
\end{figure}

In dense MWSNs, unless otherwise stated, the number of targets
was set to $30$; and the number of mobile sensors was set to $300$. Fig. \ref{Fig:target_num_sensor} and Fig.
\ref{Fig:target_distance_full} show the comparisons of the total
number of mobile sensors used and the total movement distance,
respectively, in MWSNs when the number of targets ranges from $10$
to $50$. In Fig. \ref{Fig:target_num_sensor}, it is clear that the
higher the number of targets, the higher the total number of mobile
sensors used by the TV-Greedy+ECST-H, the GBA, and the STBA. This is
because more mobile sensors are required to cover targets and form a
connected network. Note that the STBA has a lower total number of
mobile sensors used than the TV-Greedy+ECST-H and the GBA. This is
because all possible sets of targets that can be covered by any
point in a sensing field are considered in the STBA such that
multiple targets have a high probability of being selected and covered by
only one mobile sensor to minimize the number of required mobile
sensors. In addition, the potential points can be re-generated by
the function ReGeneratePotentialPoints such that the network
connectivity is maintained and the number of required potential
points is reduced as much as possible. In Fig.
\ref{Fig:target_distance_full}, the higher the number of targets,
the longer the total movement distance required by the
TV-Greedy+ECST-H, the GBA, and the STBA. This is because more
targets are required to be covered by mobile sensors such that
more total movement distance is required for mobile sensors to cover
targets and form a connected network. In addition, the STBA has a
shorter total movement distance than the TV-Greedy+ECST-H and the
GBA. This is because fewer mobile sensors are required
to cover targets and form a connected network, as observed in Fig.
\ref{Fig:target_num_sensor}. Also note that the GBA has a longer
total movement distance than the TV-Greedy+ECST-H. This is because
the targets in the GBA are also potential points to which mobile
sensors are required to move, and therefore, more total movement
distance is required.

\begin{figure}
\center
\subfigure[]{\includegraphics[width=4.25cm]{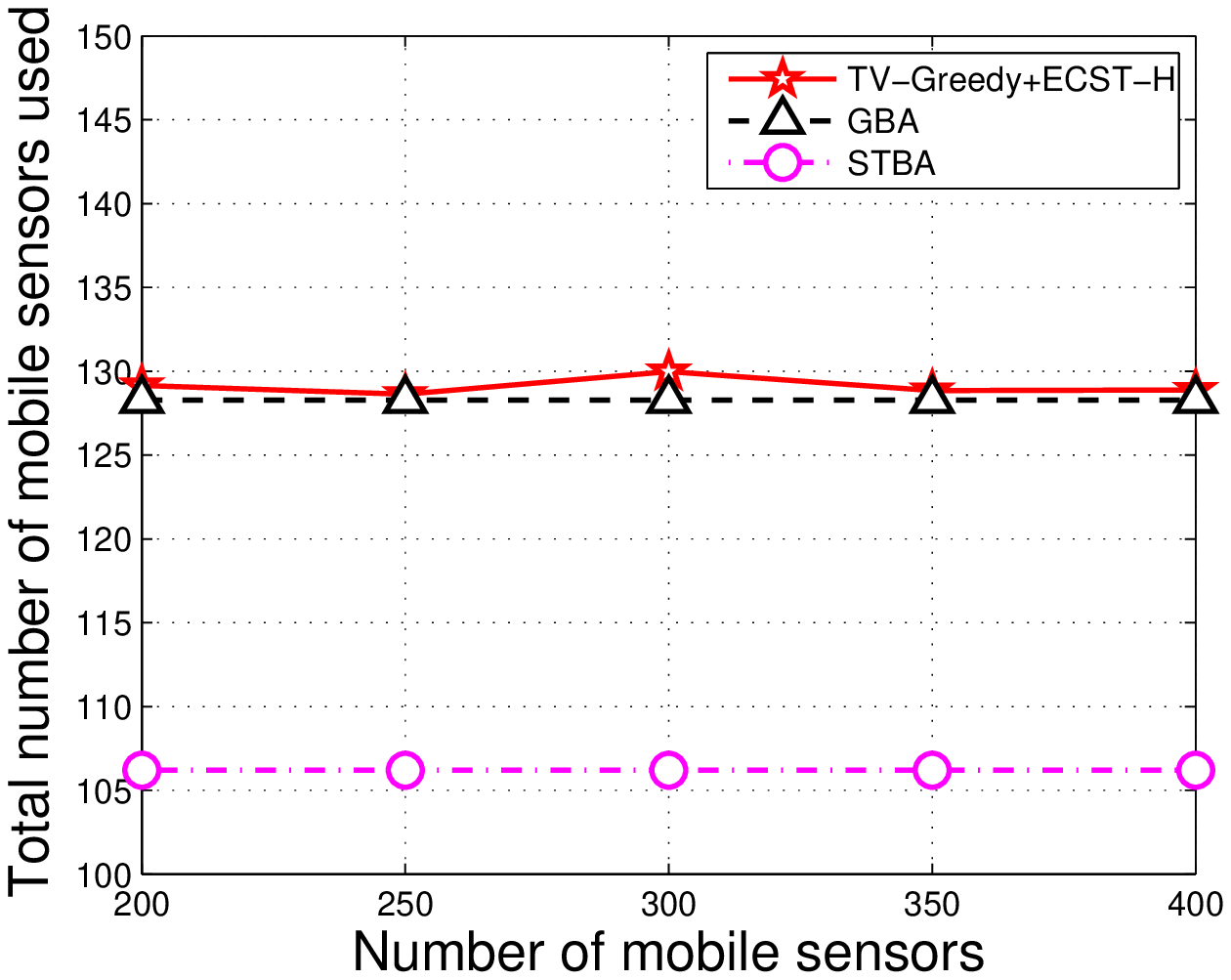}\label{Fig:sensor_sensor_full}}
\subfigure[]{\includegraphics[width=4.25cm]{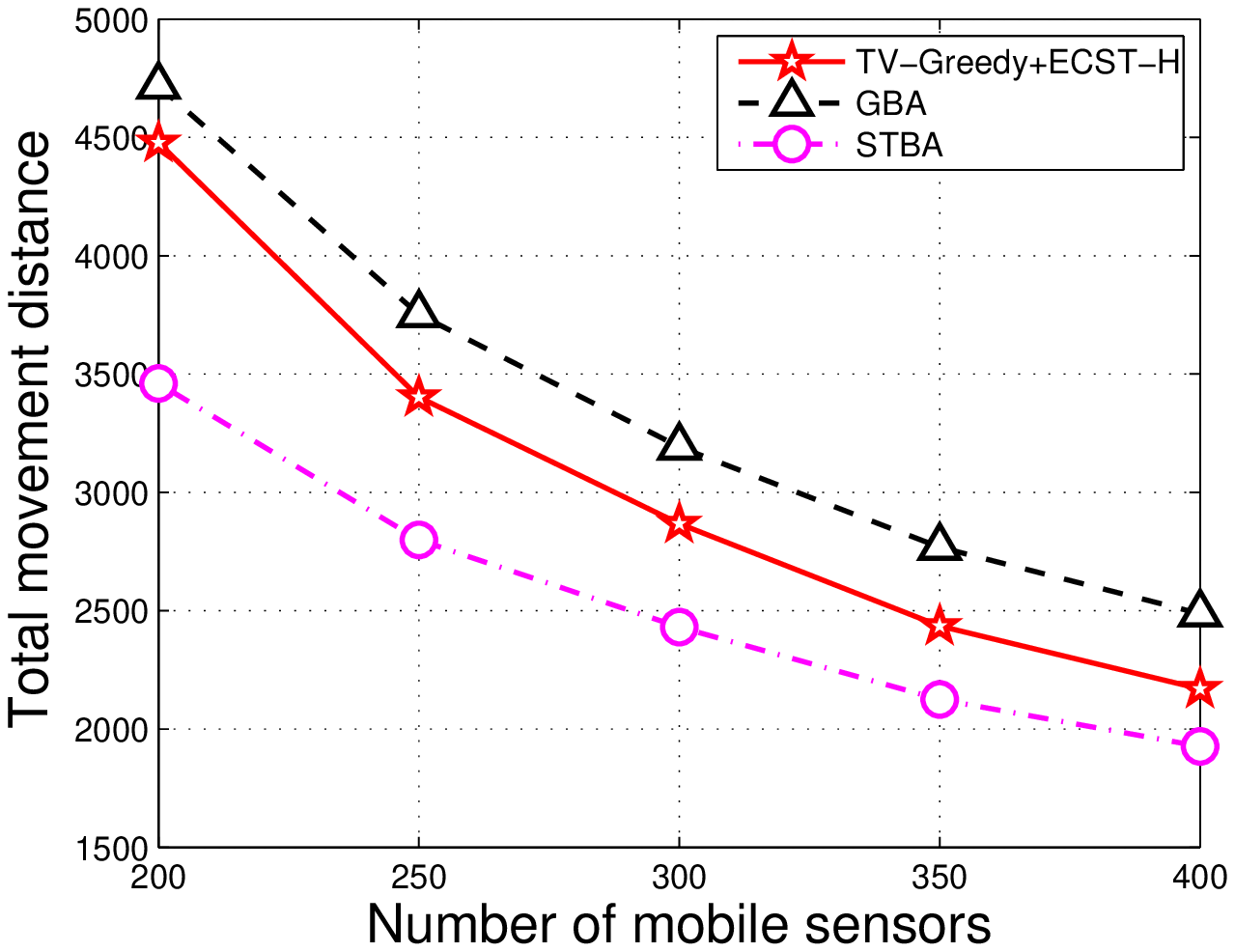}\label{Fig:sensor_distance_full}}
\caption{The total number of mobile sensors used and the total movement distance required in MWSNs whose number of mobile sensors ranges from $200$ to $400$. The required total number of mobile sensors used and the total movement distance are shown in (a) and (b), respectively.}
\end{figure}

Fig. \ref{Fig:sensor_sensor_full} and Fig. \ref{Fig:sensor_distance_full}
show the comparisons of the total number of mobile sensors used and the total movement distance, respectively, in MWSNs when
the number of mobile sensors ranges from $200$ to $400$.
In Fig. \ref{Fig:sensor_sensor_full}, it is clear that the TV-Greedy+ECST-H, the GBA, or the STBA has
similar results with the increasing number of mobile sensors.
This is because there are enough mobile sensors to cover $30$ targets and form a connected network.
In addition, the STBA requires the lowest number of mobile sensors used because the potential points generated by the STBA
are minimized to cover the targets and form a connected network, the same observation as in Fig. \ref{Fig:target_num_sensor}.
In Fig. \ref{Fig:sensor_distance_full}, the higher the number of mobile sensors, the lower the total movement distance required
by the TV-Greedy+ECST-H, the GBA, and the STBA. This stems from the fact that more nearby mobile sensors can be selected
to cover targets and form a connected network, and thus, the total movement distance of the mobile sensors is decreased.
It is clear that the STBA has a shorter total movement distance than the TV-Greedy+ECST-H and the GBA, as observed in Fig. \ref{Fig:target_distance_full}. This is because fewer mobile sensors are required to cover targets and form a connected network.

\begin{figure}
\center
\subfigure[]{\includegraphics[width=4.25cm]{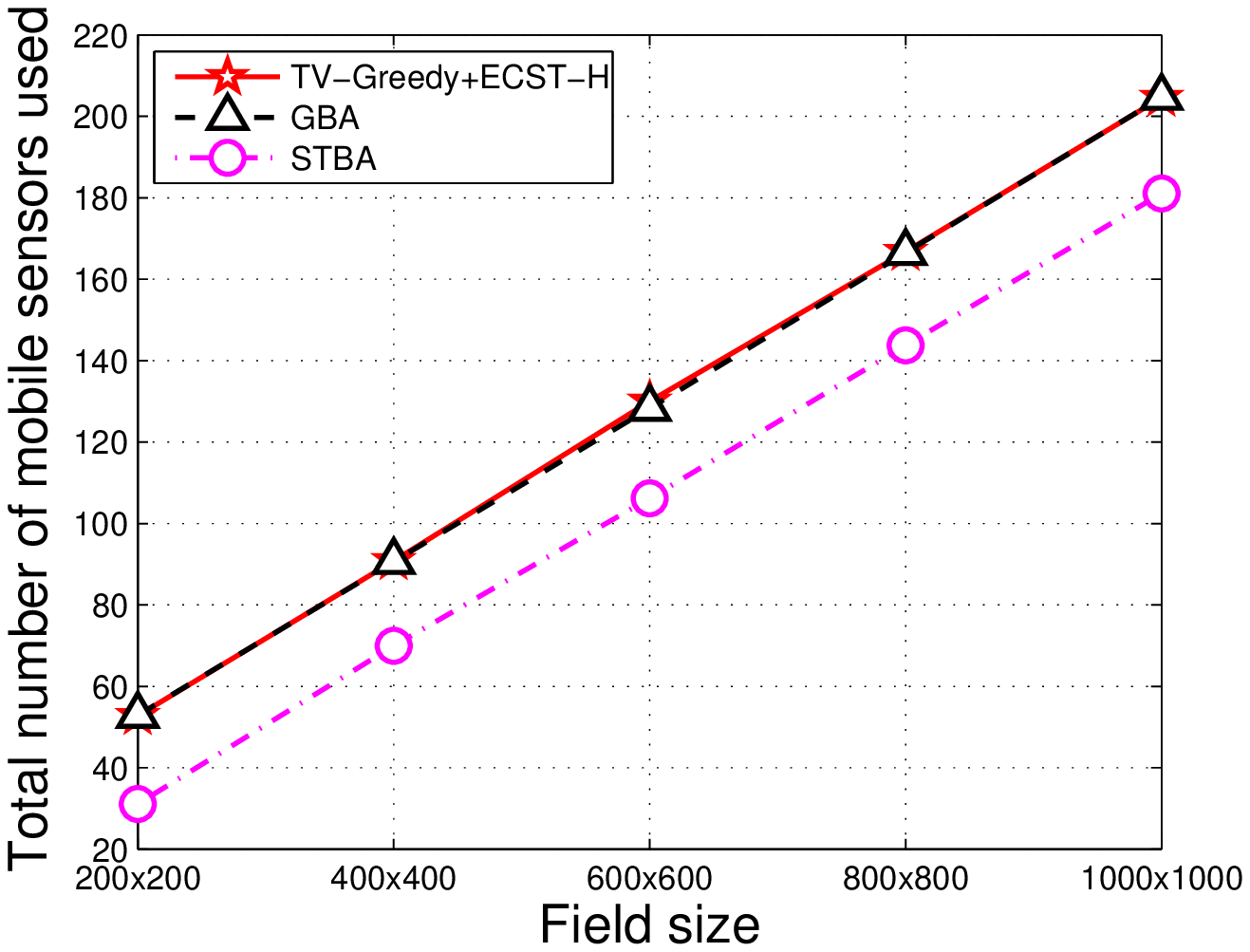}\label{Fig:size_sensor_full}}
\subfigure[]{\includegraphics[width=4.25cm]{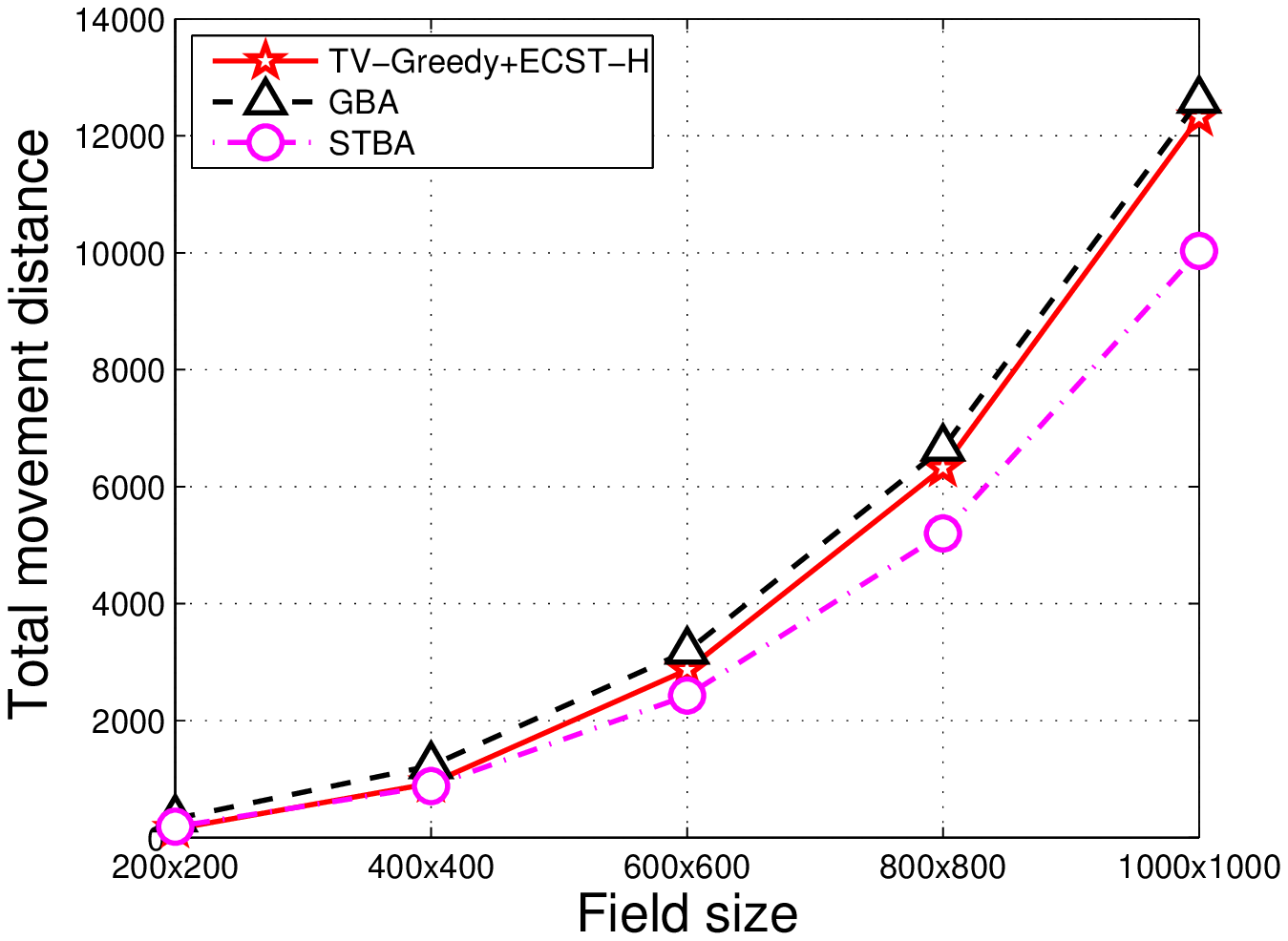}\label{Fig:size_distance_full}}
\caption{The total number of mobile sensors used and the total movement distance required in MWSNs whose field size ranges from $200$ $\times$ $200$ to $1000$ $\times$ $1000$. The required total number of mobile sensors used and the total movement distance are shown in (a) and (b), respectively.}
\end{figure}

Fig. \ref{Fig:size_sensor_full} and Fig.
\ref{Fig:size_distance_full} show the comparisons of the total
number of mobile sensors used and the total movement distance,
respectively, in MWSNs when the field size ranges from $200$
$\times$ $200$ to $1000$ $\times$ $1000$. In Fig.
\ref{Fig:size_sensor_full}, the larger the field
size, the higher the number of mobile sensors used by the
TV-Greedy+ECST-H, the GBA, and the STBA. This is because more mobile
sensors are required to maintain the network connectivity. In addition,
the STBA outperforms the TV-Greedy+ECST-H and the GBA because the
potential points generated by the STBA are as low as possible,
as explained for the results in Fig. \ref{Fig:target_num_sensor}. In
Fig. \ref{Fig:size_distance_full}, the larger the
field size, the longer the total movement distance of the mobile sensors
required by the TV-Greedy+ECST-H, the GBA, and the STBA. This is
because more mobile sensors are required to cover targets and form a
connected network in a larger sensing field. In addition, the STBA
has a lower total movement distance than the others because fewer
mobile sensors are required by the STBA.

\begin{figure}
\center
\subfigure[]{\includegraphics[width=4.25cm]{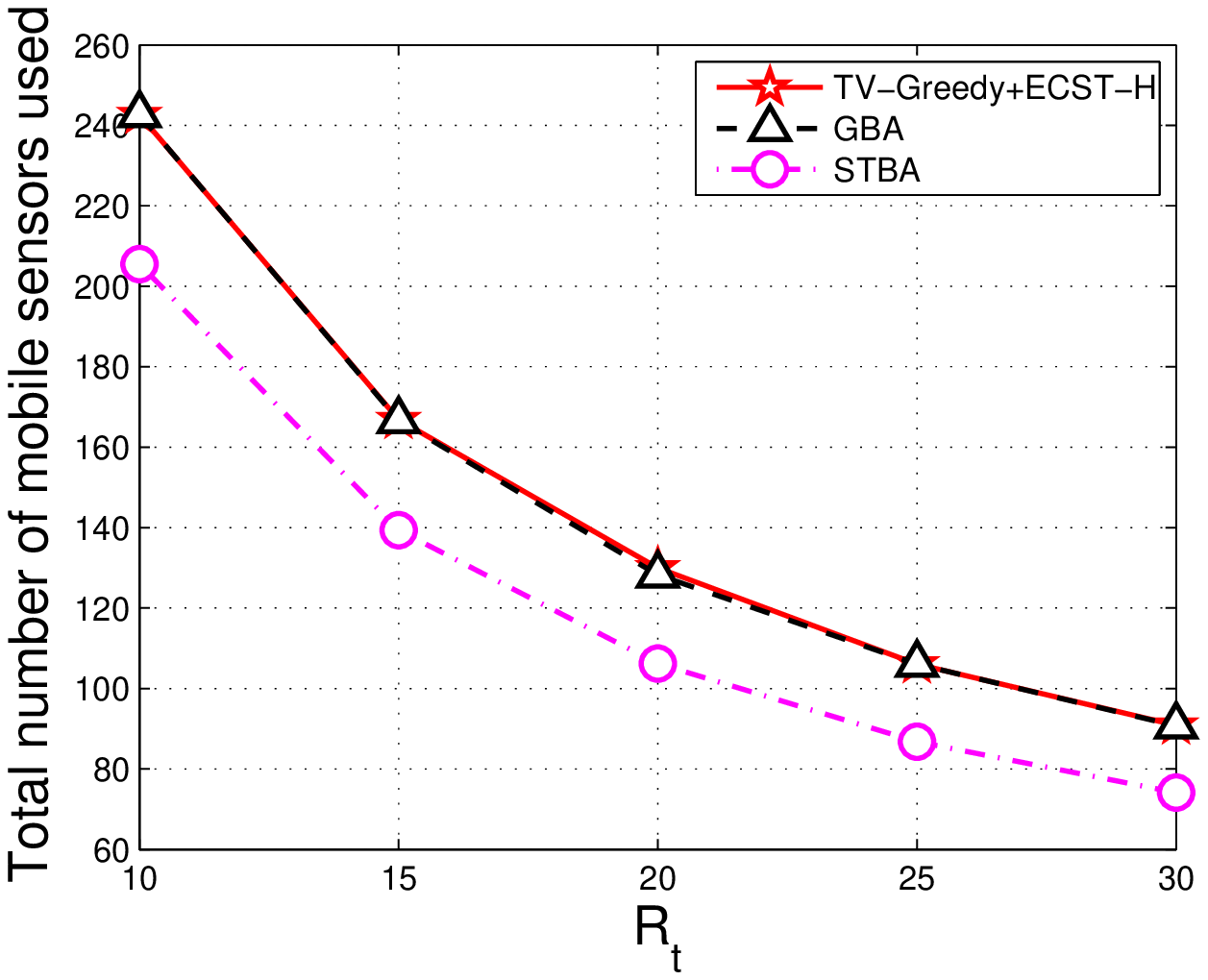}\label{Fig:transmission_sensor_full}}
\subfigure[]{\includegraphics[width=4.25cm]{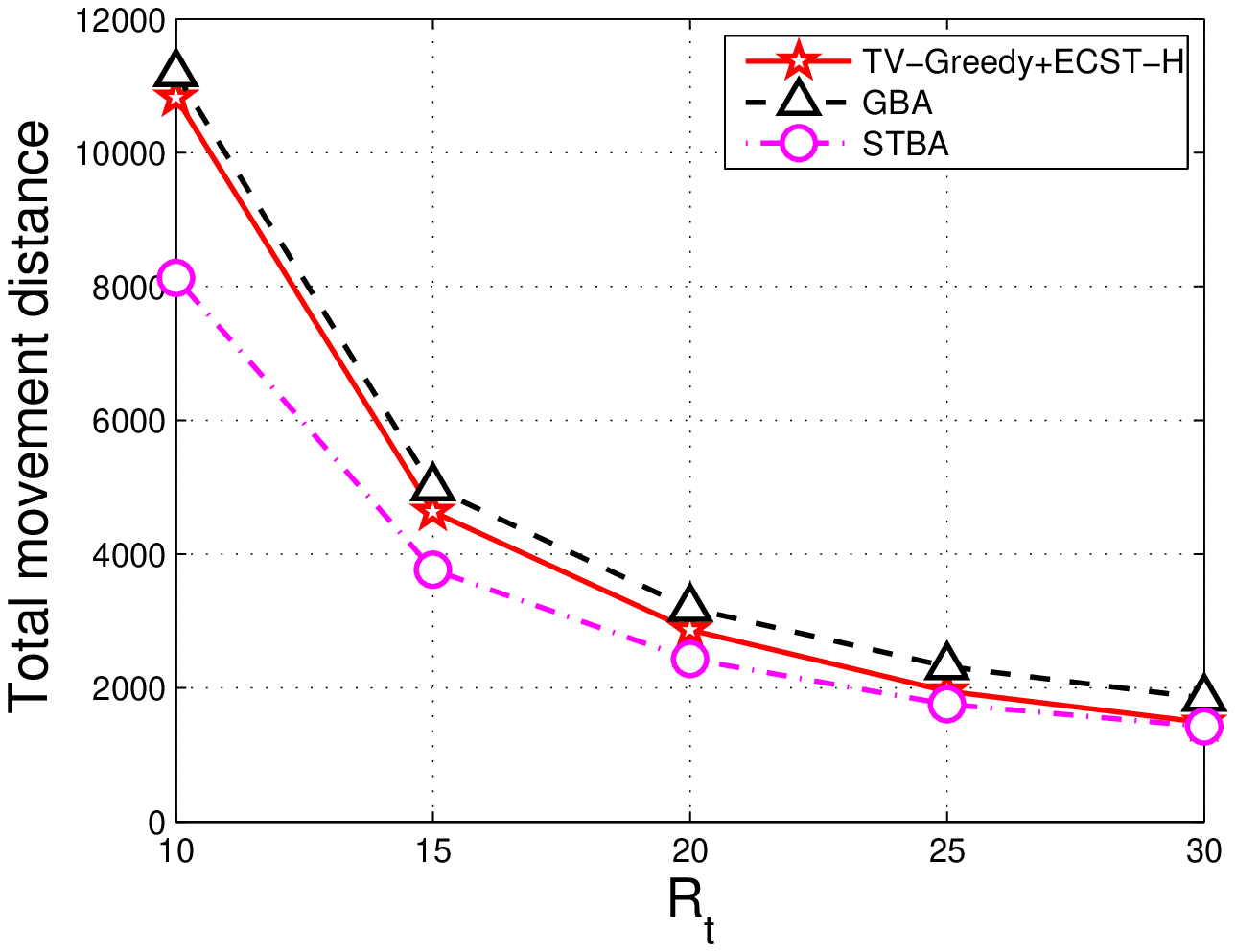}\label{Fig:transmission_distance_full}}
\caption{The total number of mobile sensors used and the total movement distance required in MWSNs whose mobile sensor transmission range ranges from $10$ to $30$. The required total number of mobile sensors used and the total movement distance are shown in (a) and (b), respectively.}
\end{figure}

Fig. \ref{Fig:transmission_sensor_full} and Fig.
\ref{Fig:transmission_distance_full} show the comparisons of the
total number of mobile sensors used and the total movement distance,
respectively, in MWSNs when the $R_t$ ranges from $10$ to $30$. In
Fig. \ref{Fig:transmission_sensor_full} and Fig.
\ref{Fig:transmission_distance_full}, the higher
the value of $R_t$, the lower the number of mobile sensors and the lower
the total movement distance required by the TV-Greedy+ECST-H, the
GBA, and the STBA. This is because fewer mobile sensors are required
to maintain network connectivity. In addition,
the STBA outperforms the TV-Greedy+ECST-H and the GBA in terms of
the number of mobile sensors used and the total movement distance
because the STBA generates as few potential points as
possible, as explained for the results in Fig.
\ref{Fig:target_num_sensor}.

\subsection{MWSNs in the RMWTCSCLMS
Problem}\label{section:sim:RMWTCSCLMS}

\begin{figure}
\center
\subfigure[]{\includegraphics[width=4.25cm]{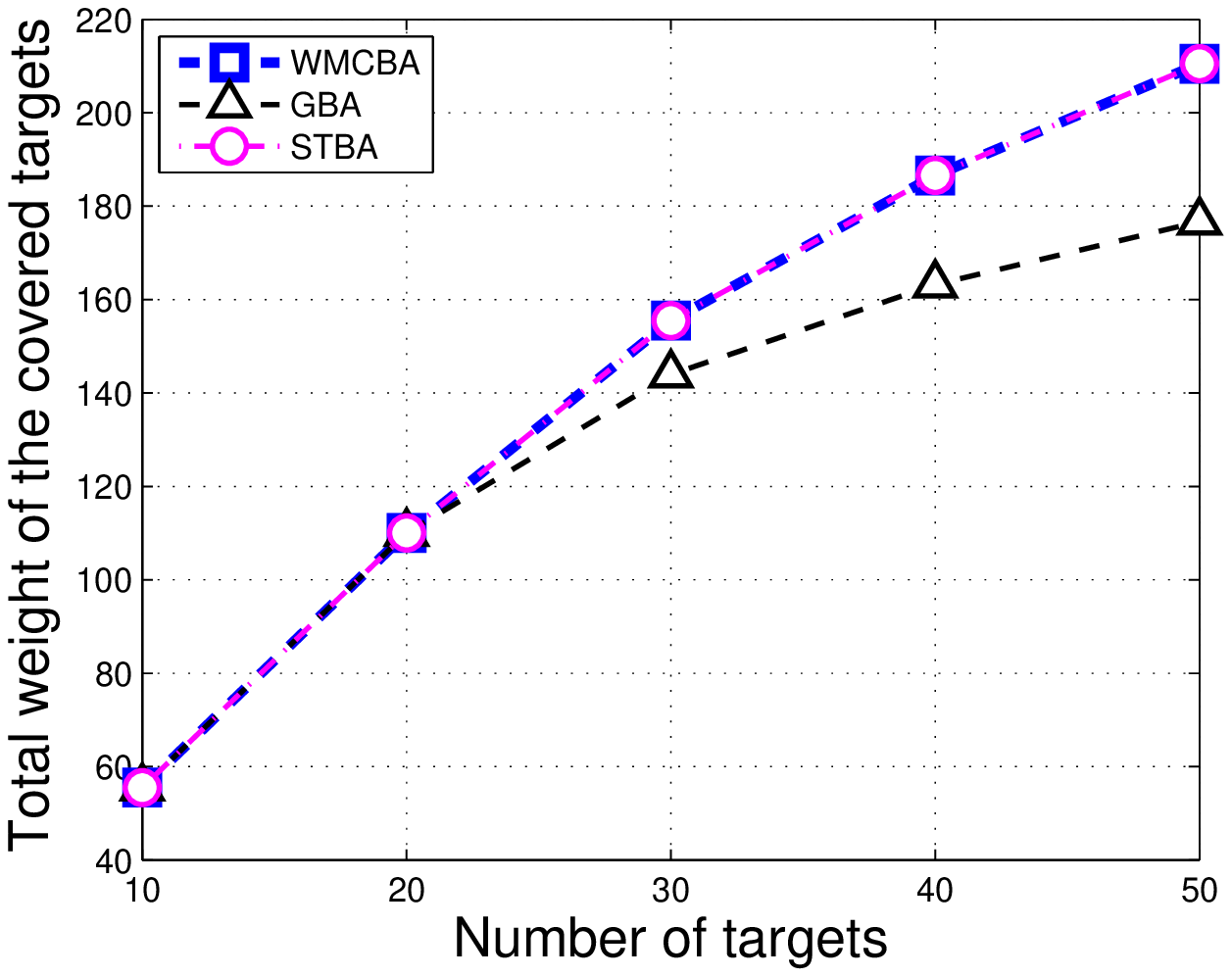}\label{Fig:target_total_weight_unlimitedTR}}
\subfigure[]{\includegraphics[width=4.25cm]{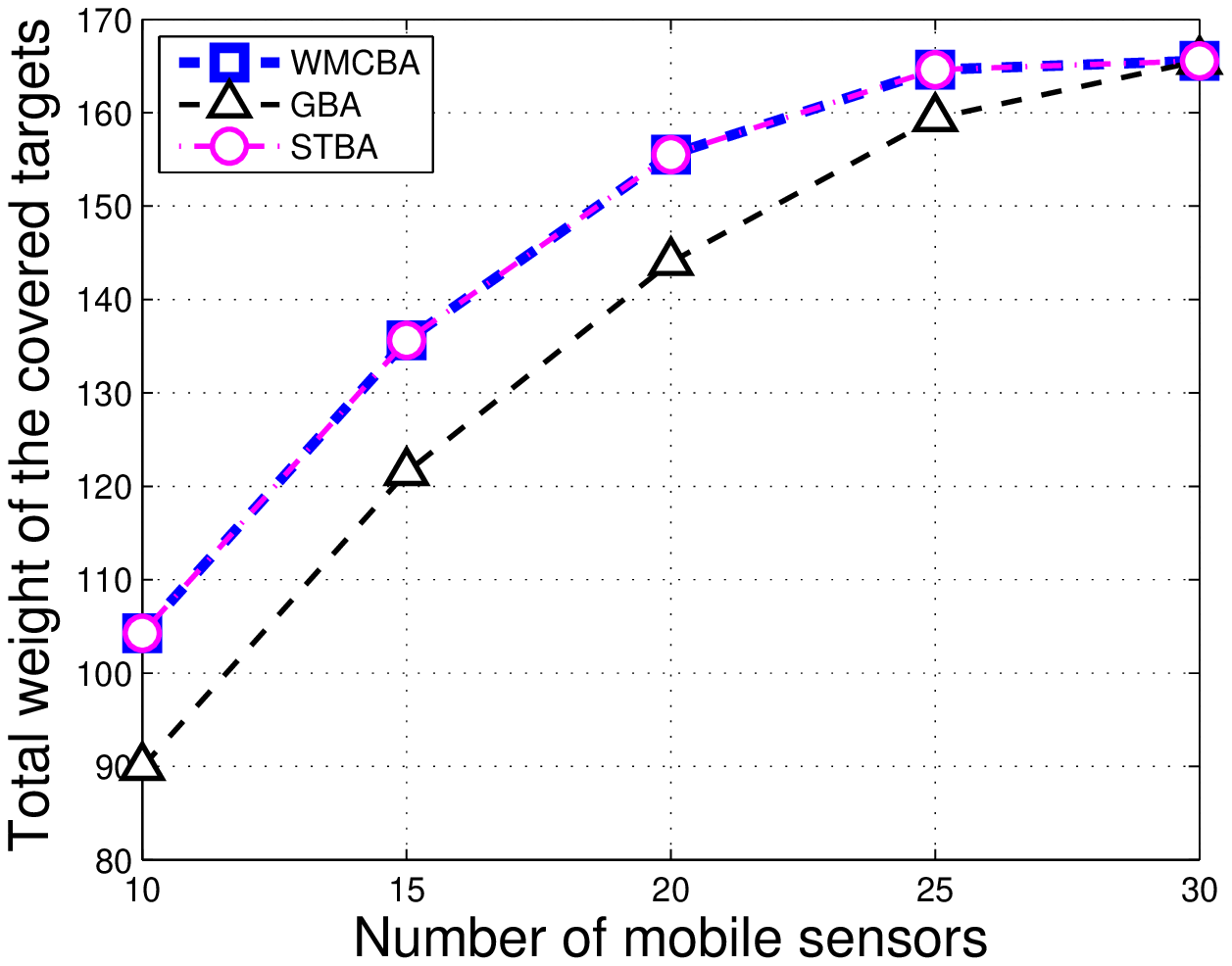}\label{Fig:sensor_total_weight_unlimitedTR}}\\
\subfigure[]{\includegraphics[width=4.25cm]{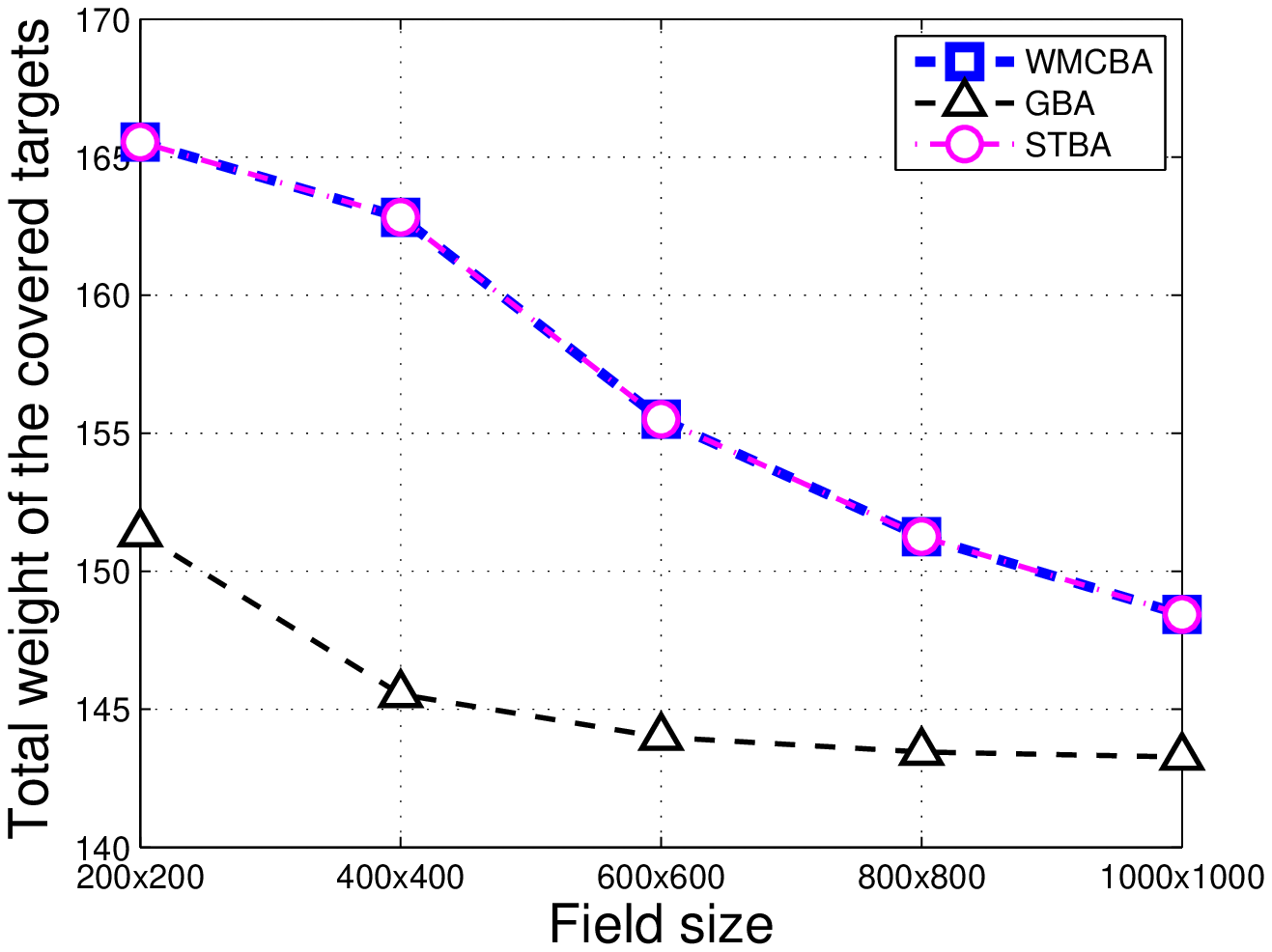}\label{Fig:size_total_weight_unlimitedTR}}\\
\caption{The total weight of the covered targets in MWSNs. The number of
targets ranging from $10$ to $50$, the number of mobile sensors
ranging from $10$ to $30$, and the field size ranging from $200$
$\times$ $200$ to $1000$ $\times$ $1000$ in MWSNs are shown in (a),
(b), and (c), respectively.}\label{Fig:unlimitedTR}
\end{figure}

In the MWSNs of the RMWTCSCLMS problem, unless otherwise stated, the
number of targets was set to $30$; and the number of mobile sensors was
set to $20$. Fig.
\ref{Fig:target_total_weight_unlimitedTR}, Fig.
\ref{Fig:sensor_total_weight_unlimitedTR}, and Fig.
\ref{Fig:size_total_weight_unlimitedTR} illustrate the total weight
of the covered targets in MWSNs with the number of targets ranging from
$10$ to $50$, in MWSNs with the number of mobile sensors ranging
from $10$ to $30$, and in MWSNs with the field size ranging from
$200$ $\times$ $200$ to $1000$ $\times$ $1000$, respectively. In
Fig. \ref{Fig:target_total_weight_unlimitedTR}, Fig.
\ref{Fig:sensor_total_weight_unlimitedTR}, and Fig.
\ref{Fig:size_total_weight_unlimitedTR}, the WMCBA
and the STBA have a higher total weight of the covered targets than the
GBA. This is because all possible sets of targets that can be
covered by any point in a sensing field are considered in the WMCBA
and the STBA, and thus, it has a high probability of selecting fewer
mobile sensors to cover the targets. Therefore, the remaining mobile
sensors can be used to cover other targets or maintain network
connectivity. In addition, the WMCBA and the STBA have the
same results. This is because for any instance of the RMWTCSCLMS
problem, the selection of covering targets in the STBA works in the same
greedy manner as in the WMCBA. In Fig.
\ref{Fig:target_total_weight_unlimitedTR}, the higher the number of
targets, the higher the total weight of the covered targets obtained by
the WMCBA, the GBA, and the STBA. This stems from the fact that more
targets can be covered by the mobile sensors. In Fig.
\ref{Fig:sensor_total_weight_unlimitedTR}, the higher the number of
mobile sensors, the higher the total weight of the covered targets
obtained by the WMCBA, the GBA, and the STBA because more mobile
sensors can be used to cover the targets. In Fig.
\ref{Fig:size_total_weight_unlimitedTR}, the larger the field size,
the lower the total weight of the covered targets obtained by the WMCBA,
the GBA, and the STBA. This is because fewer targets can be covered
by exactly one mobile sensor in a large sensing field, and thus,
fewer targets can be covered by $20$ mobile sensors.

\subsection{MWSNs in the MWTCSCLMS
Problem}\label{section:sim:MWTCSCLMS}

\begin{figure}
\center
\subfigure[]{\includegraphics[width=4.25cm]{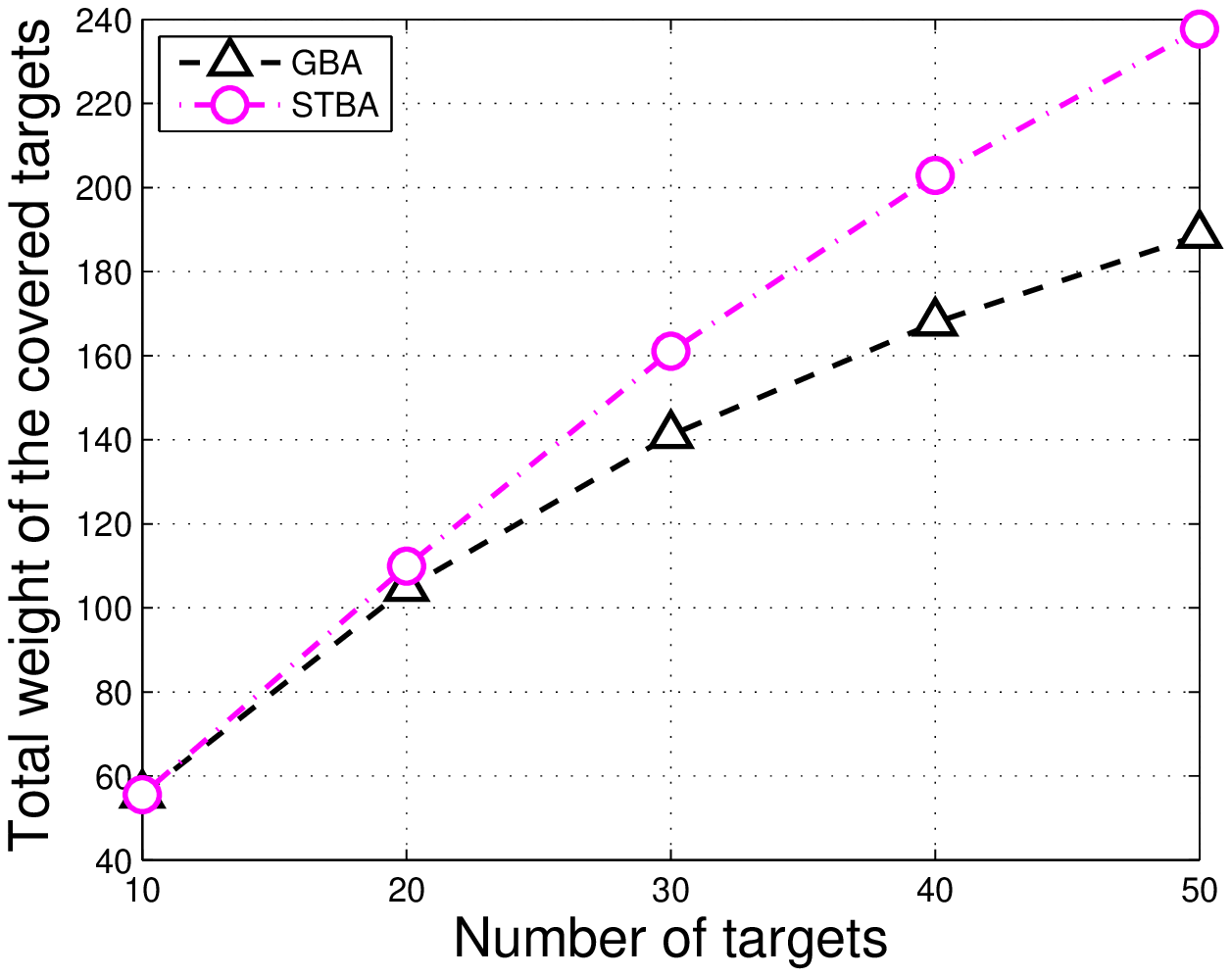}\label{Fig:target_total_weight}}
\subfigure[]{\includegraphics[width=4.25cm]{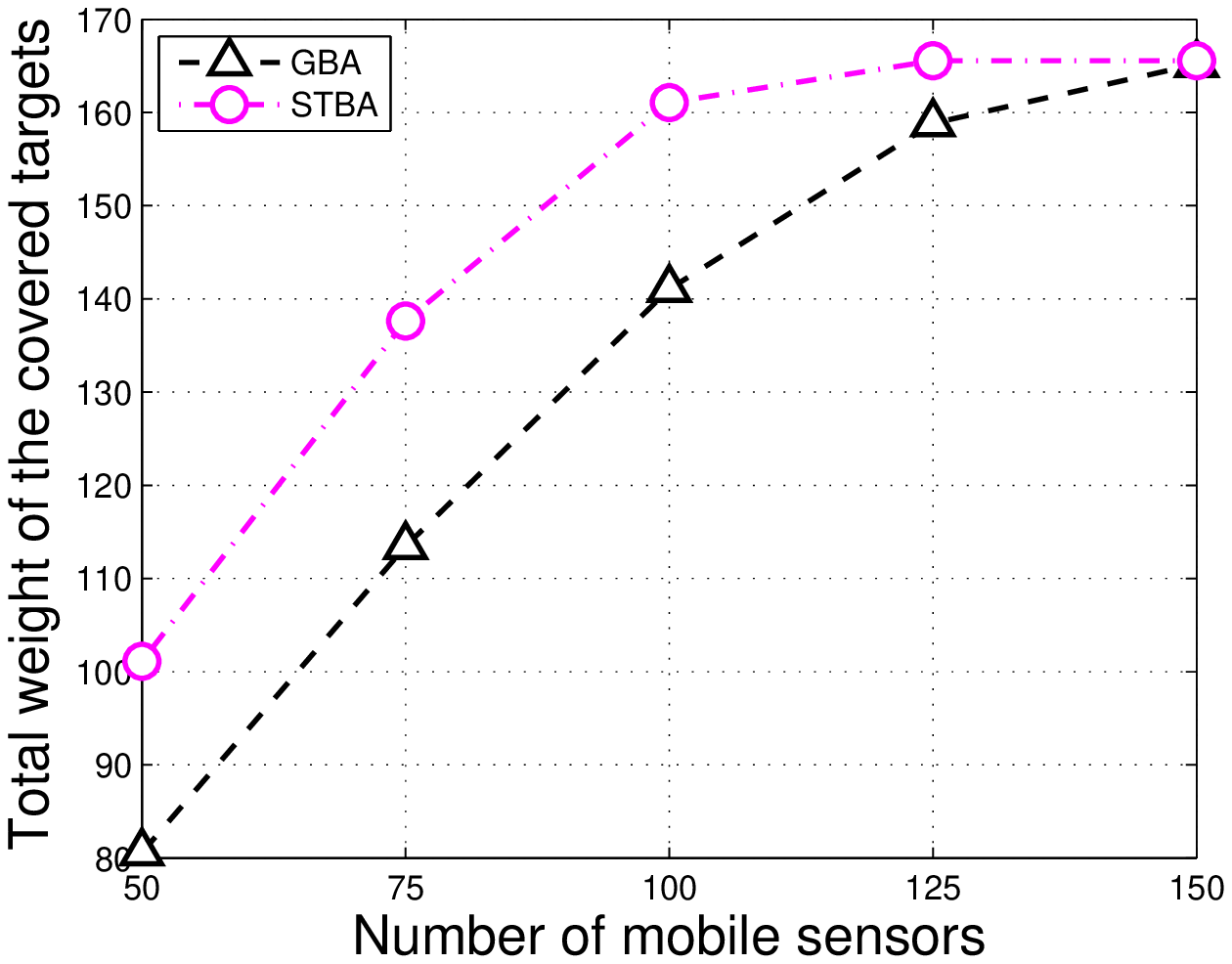}\label{Fig:sensor_total_weight}}\\
\subfigure[]{\includegraphics[width=4.25cm]{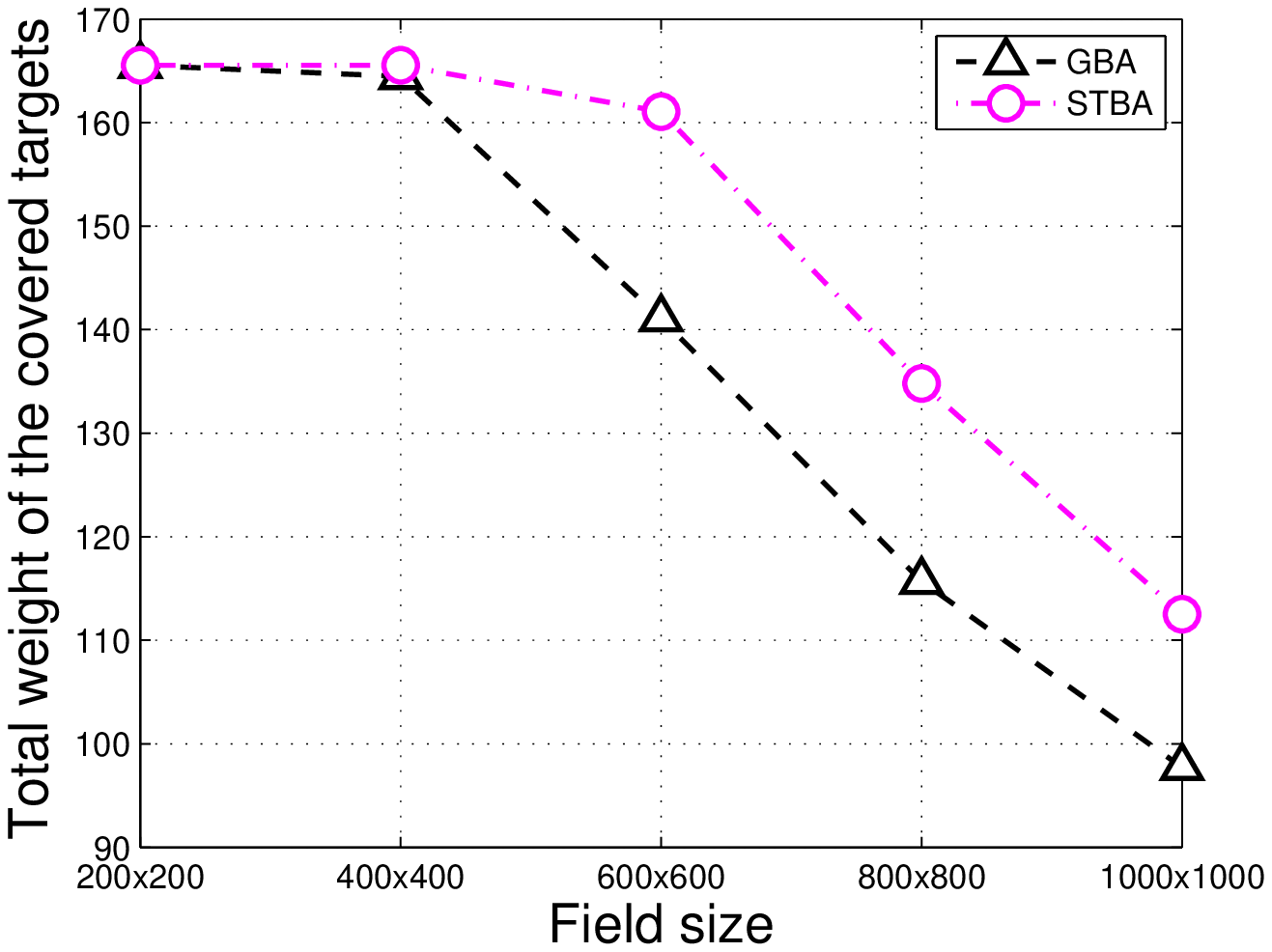}\label{Fig:size_total_weight}}
\subfigure[]{\includegraphics[width=4.25cm]{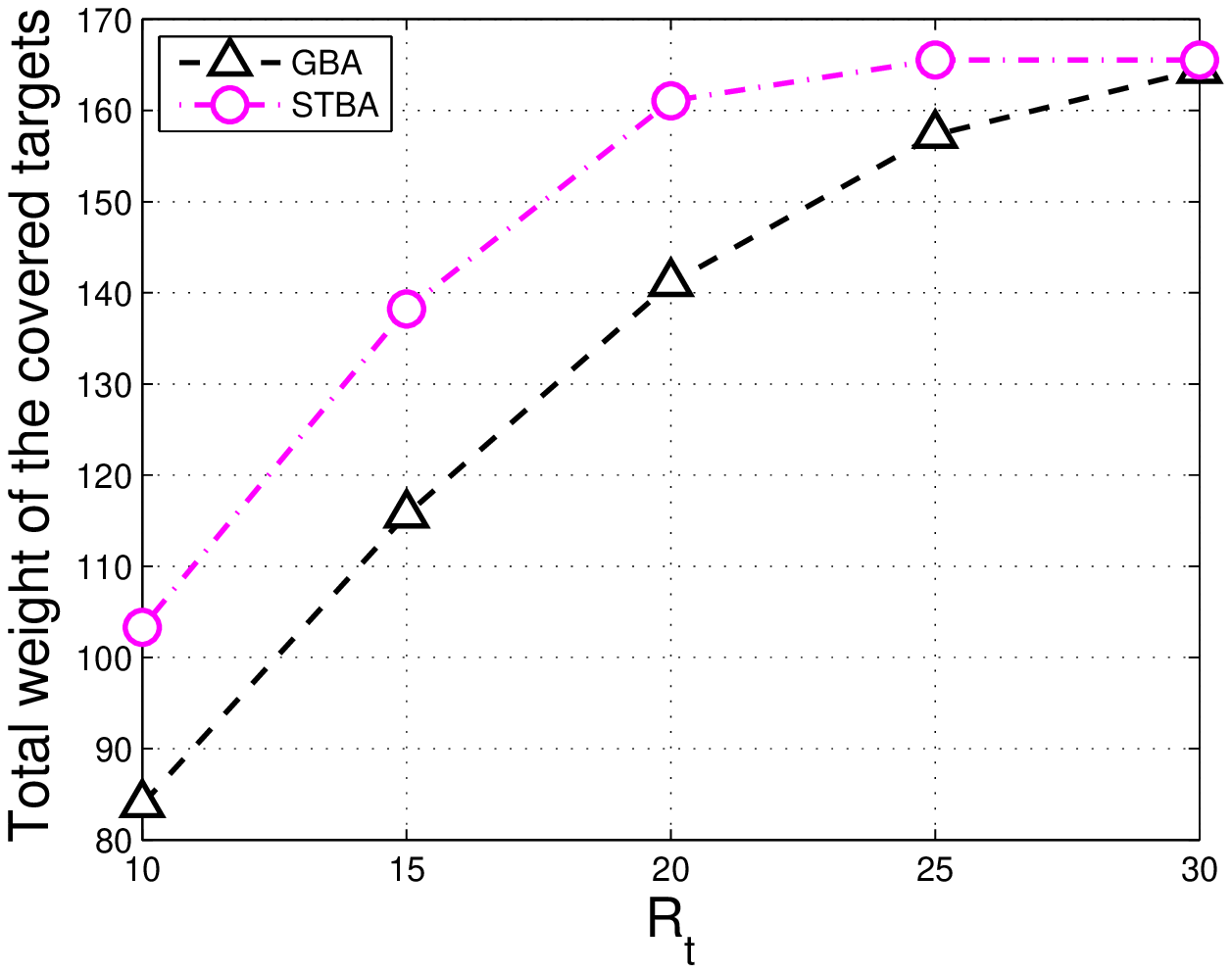}\label{Fig:transmission_total_weight}}\\
\caption{The total weight of the covered targets in MWSNs. The number of targets ranging from $10$ to $50$, the number of mobile sensors ranging from $50$ to $150$, the field size ranging from $200$ $\times$ $200$ to $1000$ $\times$ $1000$, and the $R_t$ ranging from $10$ to $30$ in MWSNs are shown in (a), (b), (c), and (d), respectively.}
\end{figure}

In the MWSNs of the MWTCSCLMS problem, unless otherwise stated, the
number of targets was set to $30$; and the number of mobile sensors was
set to $100$. Fig. \ref{Fig:target_total_weight}, Fig.
\ref{Fig:sensor_total_weight}, Fig. \ref{Fig:size_total_weight}, and
Fig. \ref{Fig:transmission_total_weight} show the total weight of
the covered targets in MWSNs with the number of targets ranging from
$10$ to $50$, in MWSNs with the number of mobile sensors ranging
from $50$ to $150$, in MWSNs with the field size ranging from $200$
$\times$ $200$ to $1000$ $\times$ $1000$, and in MWSNs with the
$R_t$ ranging from $10$ to $30$, respectively. In Fig.
\ref{Fig:target_total_weight}, Fig. \ref{Fig:sensor_total_weight},
Fig. \ref{Fig:size_total_weight}, and Fig.
\ref{Fig:transmission_total_weight}, the STBA has
a higher total weight of the covered targets than the GBA because more
targets can be covered by the STBA, as explained for the results in
Fig. \ref{Fig:unlimitedTR}. In addition, the
results of the GBA and the STBA in Fig.
\ref{Fig:target_total_weight}, Fig.
\ref{Fig:sensor_total_weight}, and Fig.
\ref{Fig:size_total_weight} are similar to those in Fig.
\ref{Fig:target_total_weight_unlimitedTR}, Fig.
\ref{Fig:sensor_total_weight_unlimitedTR}, and Fig.
\ref{Fig:size_total_weight_unlimitedTR}, respectively, as explained
for the results in Fig. \ref{Fig:unlimitedTR}, except for the results in Fig.
\ref{Fig:size_total_weight} with a small field size. In Fig.
\ref{Fig:size_total_weight}, when the field size is smaller than $600 \times 600$,
the GBA and the STBA have similar results. This is because almost all targets are covered by the mobile sensors
in the GBA and the STBA in these cases.
Moreover, in Fig.
\ref{Fig:transmission_total_weight}, the higher the $R_t$ value, the
higher the total weight of the covered targets obtained by the GBA and
the STBA. This is because fewer mobile sensors are used for network
connectivity, and thus, more mobile sensors can be used to cover
targets.

Fig. \ref{fig:same_instance_GBA} and Fig.
\ref{fig:same_instance_STBA} illustrate the deployment orders
generated by the GBA and the STBA, respectively, for the MWSN, in
which $100$ mobile sensors and $30$ targets were randomly generated
in a $600 \times 600$ sensing field, and $R_s$ and $R_t$ were set to
$20$. The total weight of the covered targets obtained by the GBA is
147, and that obtained by the STBA is 165.

\begin{figure}
\center \subfigure[]{\includegraphics[width=4.25cm]{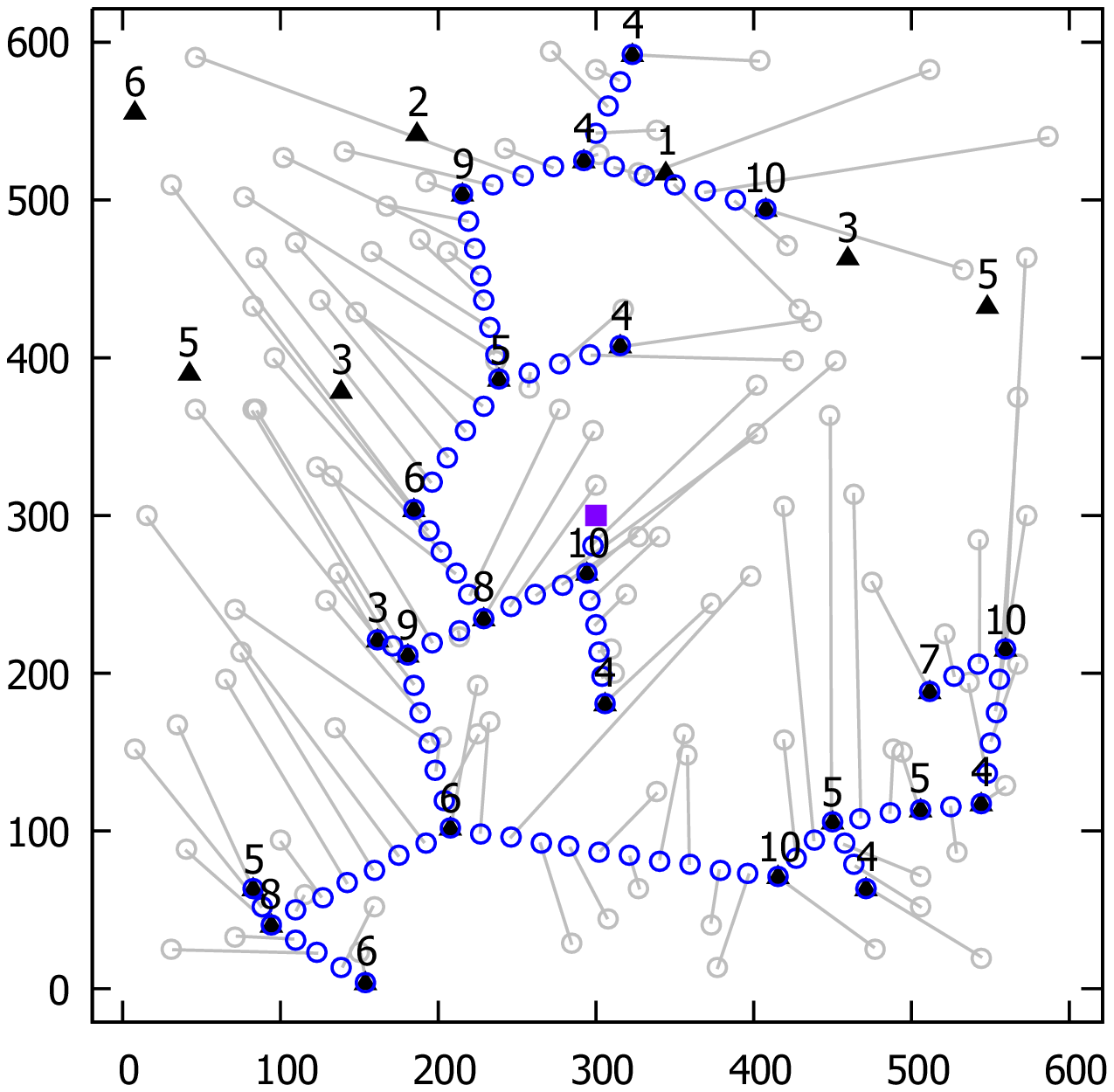}\label{fig:same_instance_GBA}}
\subfigure[]{\includegraphics[width=4.25cm]{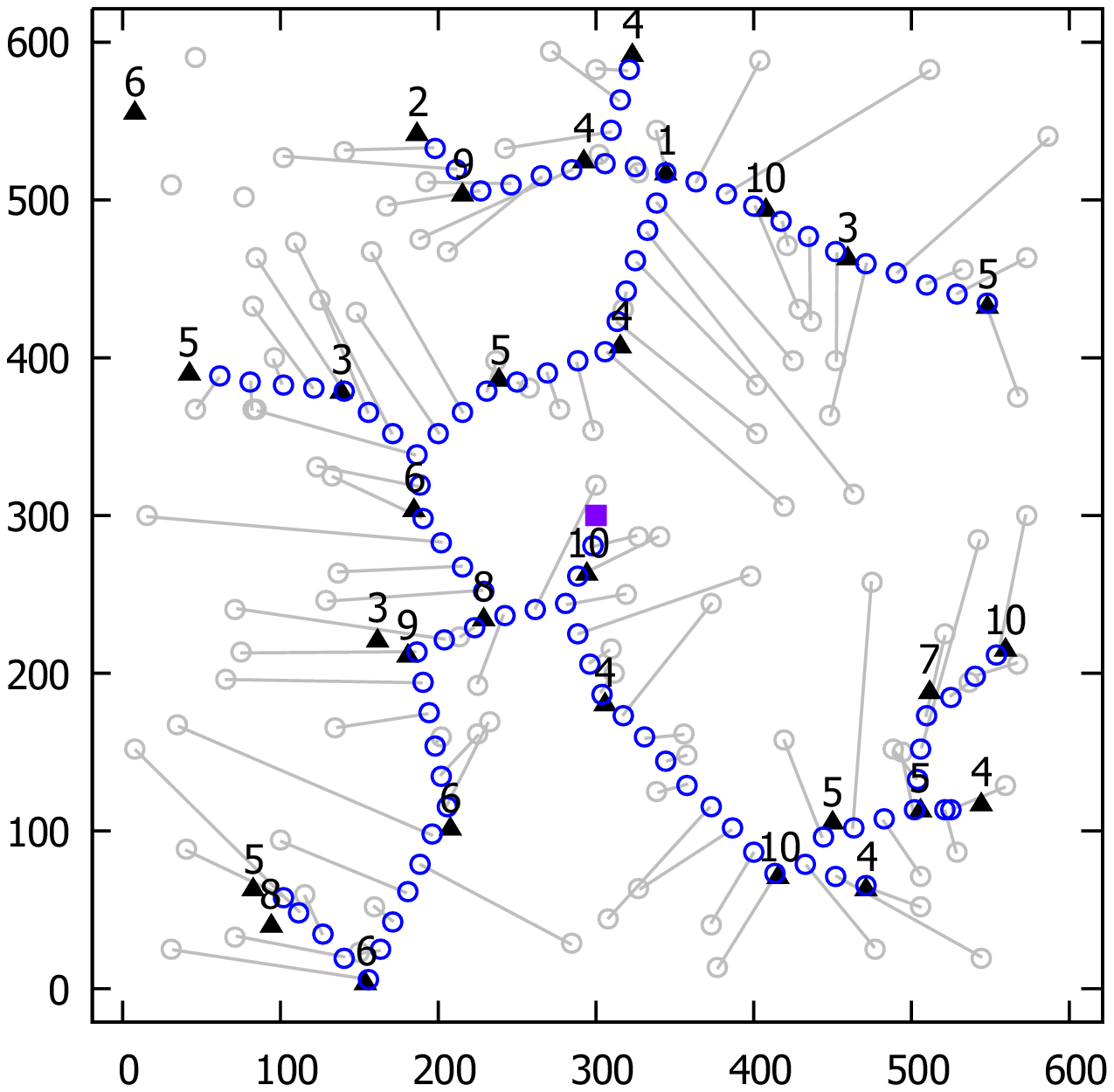}\label{fig:same_instance_STBA}}\\
\subfigure{\includegraphics[width=6cm]{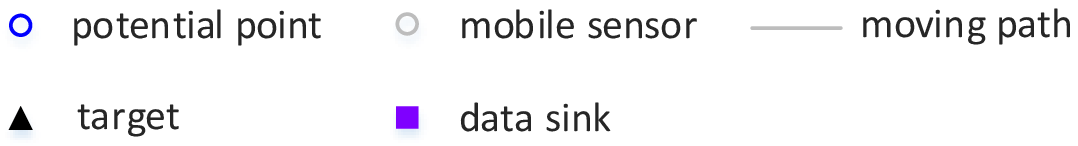}}
\caption{Deployment orders generated by the GBA and the STBA. The
results for the GBA and the STBA are shown in (a) and (b),
respectively.}
\end{figure}

\section{Conclusion}\label{section:Conclusion}
In this paper, the problem of scheduling limited mobile sensors to
appropriate locations to cover targets and form a connected network
such that the total weight of the covered targets is maximized, termed
the Maximum Weighted Target Coverage and Sensor Connectivity with
Limited Mobile Sensors (MWTCSCLMS) problem, was investigated. In
addition, a subproblem of the MWTCSCLMS problem, termed the
RMWTCSCLMS, was also investigated and analyzed. The RMWTCSCLMS
problem and the MWTCSCLMS problem were shown to be NP-hard here.
Moreover, an approximation algorithm, termed the
weighted-maximum-coverage-based algorithm (WMCBA), was proposed for
the RMWTCSCLMS problem. Based on the WMCBA, the
Steiner-tree-based algorithm (STBA) was therefore proposed for the
MWTCSCLMS problem. Theoretical analyses of the WMCBA and the
STBA were also provided.

In the simulation, three MWSN scenarios were considered, including
dense MWSNs, MWSNs in the RMWTCSCLMS problem, and MWSNs
in the MWTCSCLMS problem. In dense MWSNs, enough mobile sensors
were provided such that all targets could be fully covered and form a
connected network. The simulation results showed that the STBA had a
significantly lower total movement distance than the
TV-Greedy+ECST-H that is the best solution for the MSD problem. In
the MWSNs of the RMWTCSCLMS problem, simulation results showed that
the STBA was comparable to the WMCBA. In the MWSNs of the MWTCSCLMS
problem, the STBA outperformed the greedy-based algorithm (GBA)
proposed in the simulation section for the MWTCSCLMS problem.


\normalem

\bibliographystyle{IEEEtran}
\bibliography{my}

\begin{thebibliography}{10}
\providecommand{\url}[1]{#1}
\csname url@samestyle\endcsname
\providecommand{\newblock}{\relax}
\providecommand{\bibinfo}[2]{#2}
\providecommand{\BIBentrySTDinterwordspacing}{\spaceskip=0pt\relax}
\providecommand{\BIBentryALTinterwordstretchfactor}{4}
\providecommand{\BIBentryALTinterwordspacing}{\spaceskip=\fontdimen2\font plus
\BIBentryALTinterwordstretchfactor\fontdimen3\font minus
  \fontdimen4\font\relax}
\providecommand{\BIBforeignlanguage}[2]{{%
\expandafter\ifx\csname l@#1\endcsname\relax
\typeout{** WARNING: IEEEtran.bst: No hyphenation pattern has been}%
\typeout{** loaded for the language `#1'. Using the pattern for}%
\typeout{** the default language instead.}%
\else
\language=\csname l@#1\endcsname
\fi
#2}}
\providecommand{\BIBdecl}{\relax}
\BIBdecl

\bibitem{4359020}
Y.-C. Wang, C.-C. Hu, and Y.-C. Tseng, ``Efficient placement and dispatch of
  sensors in a wireless sensor network,'' \emph{IEEE Transactions on Mobile
  Computing}, vol.~7, no.~2, pp. 262--274, Feb 2008.

\bibitem{DAC:DAC2355}
Y.~Wang, P.~Shi, K.~Li, and Z.~Chen, ``An energy efficient medium access
  control protocol for target tracking based on dynamic convey tree
  collaboration in wireless sensor networks,'' \emph{International Journal of
  Communication Systems}, vol.~25, no.~9, pp. 1139--1159, 2012.

\bibitem{DAC:DAC2412}
C.-X. Liu, Y.~Liu, Z.-J. Zhang, and Z.-Y. Cheng, ``High energy-efficient and
  privacy-preserving secure data aggregation for wireless sensor networks,''
  \emph{International Journal of Communication Systems}, vol.~26, no.~3, pp.
  380--394, 2013.

\bibitem{Wang2015330}
X.~Wang, Q.~Sun, and Y.~Yang, ``A cross-layer mobility support protocol for
  wireless sensor networks,'' \emph{Computers {\&} Electrical Engineering},
  vol.~48, pp. 330--342, 2015.

\bibitem{5290389}
D.~Mascarenas, E.~Flynn, C.~Farrar, G.~Park, and M.~Todd, ``A mobile host
  approach for wireless powering and interrogation of structural health
  monitoring sensor networks,'' \emph{IEEE Sensors Journal}, vol.~9, no.~12,
  pp. 1719--1726, Dec 2009.

\bibitem{5705791}
R.~Mittal and M.~P.~S. Bhatia, ``Wireless sensor networks for monitoring the
  environmental activities,'' \emph{Computational Intelligence and Computing
  Research (ICCIC), 2010 IEEE International Conference on}, pp. 1--5, Dec 2010.

\bibitem{Somov2013217}
A.~Somov, A.~Baranov, D.~Spirjakin, A.~Spirjakin, V.~Sleptsov, and
  R.~Passerone, ``Deployment and evaluation of a wireless sensor network for
  methane leak detection,'' \emph{Sensors and Actuators A: Physical}, vol. 202,
  pp. 217--225, 2013.

\bibitem{6715654}
N.~M. Hassan, O.~M. Olaniyi, A.~Ahmed, and E.~M. Dogo, ``Wireless sensor
  networks for remote healthcare monitoring in nigeria: Challenges and way
  forward,'' \emph{2013 IEEE International Conference on Emerging Sustainable
  Technologies for Power ICT in a Developing Society (NIGERCON)}, pp. 182--187,
  Nov 2013.

\bibitem{6522427}
J.~W. Jung and M.~A. Weitnauer, ``On using cooperative routing for lifetime
  optimization of multi-hop wireless sensor networks: Analysis and
  guidelines,'' \emph{IEEE Transactions on Communications}, vol.~61, no.~8, pp.
  3413--3423, August 2013.

\bibitem{5161266}
R.~Tan, G.~Xing, J.~Wang, and H.~C. So, ``Exploiting reactive mobility for
  collaborative target detection in wireless sensor networks,'' \emph{IEEE
  Transactions on Mobile Computing}, vol.~9, no.~3, pp. 317--332, March 2010.

\bibitem{6846302}
Z.~Liao, J.~Wang, S.~Zhang, J.~Cao, and G.~Min, ``Minimizing movement for
  target coverage and network connectivity in mobile sensor networks,''
  \emph{IEEE Transactions on Parallel and Distributed Systems}, vol.~26, no.~7,
  pp. 1971--1983, July 2015.

\bibitem{6023087}
S.-R. Qi, ``The solution of problem about fermat point in application of the
  modern pipeline design,'' \emph{Electronic and Mechanical Engineering and
  Information Technology (EMEIT), 2011 International Conference on}, vol.~7,
  pp. 3785--3788, Aug 2011.

\bibitem{Klein1995104}
P.~Klein and R.~Ravi, ``A nearly best-possible approximation algorithm for
  node-weighted steiner trees,'' \emph{Journal of Algorithms}, vol.~19, no.~1,
  pp. 104--115, 1995.

\bibitem{6314220}
A.~Raha, S.~Maity, M.~K. Naskar, O.~Alfandi, and D.~Hogrefe, ``An optimal
  sensor deployment scheme to ensure multi level coverage and connectivity in
  wireless sensor networks,'' \emph{2012 8th International Wireless
  Communications and Mobile Computing Conference (IWCMC)}, pp. 299--304, Aug
  2012.

\bibitem{4703226}
C.-Y. Chang, C.-T. Chang, Y.-C. Chen, and H.-R. Chang, ``Obstacle-resistant
  deployment algorithms for wireless sensor networks,'' \emph{IEEE Transactions
  on Vehicular Technology}, vol.~58, no.~6, pp. 2925--2941, July 2009.

\bibitem{5719526}
W.-C. Ke, B.-H. Liu, and M.-J. Tsai, ``Efficient algorithm for constructing
  minimum size wireless sensor networks to fully cover critical square grids,''
  \emph{IEEE Transactions on Wireless Communications}, vol.~10, no.~4, pp.
  1154--1164, April 2011.

\bibitem{Mostafaei201551}
H.~Mostafaei, ``Stochastic barrier coverage in wireless sensor networks based
  on distributed learning automata,'' \emph{Computer Communications}, vol.~55,
  pp. 51--61, 2015.

\bibitem{6637016}
S.~Mini, S.~K. Udgata, and S.~L. Sabat, ``Sensor deployment and scheduling for
  target coverage problem in wireless sensor networks,'' \emph{IEEE Sensors
  Journal}, vol.~14, no.~3, pp. 636--644, March 2014.

\bibitem{1368897}
K.~Romer and F.~Mattern, ``The design space of wireless sensor networks,''
  \emph{IEEE Wireless Communications}, vol.~11, no.~6, pp. 54--61, Dec 2004.

\bibitem{Zhu2012619}
C.~Zhu, C.~Zheng, L.~Shu, and G.~Han, ``A survey on coverage and connectivity
  issues in wireless sensor networks,'' \emph{Journal of Network and Computer
  Applications}, vol.~35, no.~2, pp. 619--632, 2012.

\bibitem{6883345}
C.~Yang and K.-W. Chin, ``A novel distributed algorithm for complete targets
  coverage in energy harvesting wireless sensor networks,'' \emph{2014 IEEE
  International Conference on Communications (ICC)}, pp. 361--366, June 2014.

\bibitem{4457911}
Q.~Zhao and M.~Gurusamy, ``Lifetime maximization for connected target coverage
  in wireless sensor networks,'' \emph{IEEE/ACM Transactions on Networking},
  vol.~16, no.~6, pp. 1378--1391, Dec 2008.

\bibitem{6811184}
Z.~Lu, W.~W. Li, and M.~Pan, ``Maximum lifetime scheduling for target coverage
  and data collection in wireless sensor networks,'' \emph{IEEE Transactions on
  Vehicular Technology}, vol.~64, no.~2, pp. 714--727, Feb 2015.

\bibitem{Yang2010409}
Y.~Yang, M.~I. Fonoage, and M.~Cardei, ``Improving network lifetime with mobile
  wireless sensor networks,'' \emph{Computer Communications}, vol.~33, no.~4,
  pp. 409--419, 2010.

\bibitem{4407689}
Y.~C. Wang and Y.~C. Tseng, ``Distributed deployment schemes for mobile
  wireless sensor networks to ensure multilevel coverage,'' \emph{IEEE
  Transactions on Parallel and Distributed Systems}, vol.~19, no.~9, pp.
  1280--1294, Sept 2008.

\bibitem{6763013}
H.~Mahboubi, K.~Moezzi, A.~G. Aghdam, and K.~Sayrafian-Pour, ``Distributed
  deployment algorithms for efficient coverage in a network of mobile sensors
  with nonidentical sensing capabilities,'' \emph{IEEE Transactions on
  Vehicular Technology}, vol.~63, no.~8, pp. 3998--4016, Oct 2014.

\bibitem{ElKorbi2014247}
I.~E. Korbi and S.~Zeadally, ``Energy-aware sensor node relocation in mobile
  sensor networks,'' \emph{Ad Hoc Networks}, vol.~16, pp. 247--265, 2014.

\bibitem{Zorbas20131039}
D.~Zorbas and T.~Razafindralambo, ``Prolonging network lifetime under
  probabilistic target coverage in wireless mobile sensor networks,''
  \emph{Computer Communications}, vol.~36, no.~9, pp. 1039--1053, 2013.

\bibitem{8356712}
N.~T. Nguyen and B.~H. Liu, ``The mobile sensor deployment problem and the
  target coverage problem in mobile wireless sensor networks are np-hard,''
  \emph{IEEE Systems Journal}, pp. 1--4, 2018.

\bibitem{Nemhauser1978}
G.~L. Nemhauser, L.~A. Wolsey, and M.~L. Fisher, ``An analysis of
  approximations for maximizing submodular set functions---{I},''
  \emph{Mathematical Programming}, vol.~14, no.~1, pp. 265--294, 1978.

\bibitem{NAV:NAV3800020109}
H.~W. Kuhn, ``The hungarian method for the assignment problem,'' \emph{Naval
  Research Logistics Quarterly}, vol.~2, no. 1-2, pp. 83--97, 1955.

\bibitem{Chen2011}
C.~Chen and D.~Freedman, ``Hardness results for homology localization,''
  \emph{Discrete {\&} Computational Geometry}, vol.~45, no.~3, pp. 425--448,
  2011.

\bibitem{Lee1980}
D.~T. Lee and B.~J. Schachter, ``Two algorithms for constructing a delaunay
  triangulation,'' \emph{International Journal of Computer {\&} Information
  Sciences}, vol.~9, no.~3, pp. 219--242, 1980.

\end{thebibliography}

\end{document}